\newcommand{\openone}{\leavevmode\hbox{\small1\normalsize\kern-.33em1}}
\def\UrlSpecials{\do\~{\kern -.15em\lower .7ex\hbox{~}\kern .04em}} \catcode`~=13 
\newcommand{\nn}{\nonumber}
\newcommand{\calA}{\mathcal{A}}
\newcommand{\calB}{\mathcal{B}}
\newcommand{\calD}{\mathcal{D}}
\newcommand{\calL}{\mathcal{L}}
\newcommand{\calM}{\mathcal{M}}
\newcommand{\calN}{\mathcal{N}}
\newcommand{\calP}{\mathcal{P}}
\newcommand{\calR}{\mathcal{R}}
\newcommand{\calS}{\mathcal{S}}
\newcommand{\calT}{\mathcal{T}}
\newcommand{\calX}{\mathcal{X}}
\newcommand{\calY}{\mathcal{Y}}
\newcommand{\calZ}{\mathcal{Z}}
\newcommand{\bx}{\mathbf{x}}
\newcommand{\rmb}{\mathrm{b}}
\newcommand{\rmd}{\mathrm{d}}
\newcommand{\rme}{\mathrm{e}}
\newcommand{\rmH}{\mathrm{H}}
\newcommand{\rmK}{\mathrm{K}}
\newcommand{\rmQ}{\mathrm{Q}}
\newcommand{\rmT}{\mathrm{T}}
\newcommand{\rmV}{\mathrm{V}}
\newcommand{\bbN}{\mathbb{N}}
\newcommand{\bbR}{\mathbb{R}}
\DeclareMathAlphabet{\mathbsf}{OT1}{cmss}{bx}{n}
\DeclareMathAlphabet{\mathssf}{OT1}{cmss}{m}{sl}
\newcommand{\rvR}{\mathsf{R}}
\DeclareSymbolFont{bsfletters}{OT1}{cmss}{bx}{n}  
\DeclareSymbolFont{ssfletters}{OT1}{cmss}{m}{n}
\DeclareMathSymbol{\bsfGamma}{0}{bsfletters}{'000}
\DeclareMathSymbol{\ssfGamma}{0}{ssfletters}{'000}
\DeclareMathSymbol{\bsfDelta}{0}{bsfletters}{'001}
\DeclareMathSymbol{\ssfDelta}{0}{ssfletters}{'001}
\DeclareMathSymbol{\bsfTheta}{0}{bsfletters}{'002}
\DeclareMathSymbol{\ssfTheta}{0}{ssfletters}{'002}
\DeclareMathSymbol{\bsfLambda}{0}{bsfletters}{'003}
\DeclareMathSymbol{\ssfLambda}{0}{ssfletters}{'003}
\DeclareMathSymbol{\bsfXi}{0}{bsfletters}{'004}
\DeclareMathSymbol{\ssfXi}{0}{ssfletters}{'004}
\DeclareMathSymbol{\bsfPi}{0}{bsfletters}{'005}
\DeclareMathSymbol{\ssfPi}{0}{ssfletters}{'005}
\DeclareMathSymbol{\bsfSigma}{0}{bsfletters}{'006}
\DeclareMathSymbol{\ssfSigma}{0}{ssfletters}{'006}
\DeclareMathSymbol{\bsfUpsilon}{0}{bsfletters}{'007}
\DeclareMathSymbol{\ssfUpsilon}{0}{ssfletters}{'007}
\DeclareMathSymbol{\bsfPhi}{0}{bsfletters}{'010}
\DeclareMathSymbol{\ssfPhi}{0}{ssfletters}{'010}
\DeclareMathSymbol{\bsfPsi}{0}{bsfletters}{'011}
\DeclareMathSymbol{\ssfPsi}{0}{ssfletters}{'011}
\DeclareMathSymbol{\bsfOmega}{0}{bsfletters}{'012}
\DeclareMathSymbol{\ssfOmega}{0}{ssfletters}{'012}
\newcommand{\hatT}{\hat{T}}
\newcommand{\hatx}{\hat{x}}
\newcommand{\hatX}{\hat{X}}
\newcommand{\haty}{\hat{y}}
\newcommand{\hatY}{\hat{Y}}
\newcommand{\barQ}{\bar{Q}}
\newcommand{\bmu}{\bm{\mu}}
\newcommand{\bSigma	}{\bm{\Sigma}}
\DeclareMathOperator*{\argmin}{arg\,min}
\DeclareMathOperator*{\argsup}{arg\,sup}
\newtheorem{theorem}{Theorem} 
\newtheorem{lemma}[theorem]{Lemma}
\newtheorem{definition}{Definition}
\newcommand{\myfoot}[1]{\footnote{\color{red}\bf #1}}
\begin{document}
\title{Non-Asymptotic Converse Bounds and Refined Asymptotics for Two Lossy Source Coding Problems}

\author{\IEEEauthorblockN{Lin Zhou and Mehul Motani}\\
\thanks{The authors are with the Department of Electrical and Computer Engineering, National University of Singapore (NUS). Emails: lzhou@u.nus.edu; motani@nus.edu.sg.}
\thanks{Part of this paper will be presented at Globecom 2017~\cite{zhou2017fy,zhou2017kaspishort}.}

}

\maketitle

\flushbottom
\allowdisplaybreaks[1]

\begin{abstract}

In this paper, we revisit two multi-terminal lossy source coding problems: the lossy source coding problem with side information available at the encoder and one of the two decoders, which we term as the Kaspi problem (Kaspi, 1994), and the multiple description coding problem with one semi-deterministic distortion measure, which we refer to as the Fu-Yeung problem (Fu and Yeung, 2002). For the Kaspi problem, we first present the properties of optimal test channels. Subsequently, we generalize the notion of the distortion-tilted information density for the lossy source coding problem to the Kaspi problem and prove a non-asymptotic converse bound using the properties of optimal test channels and the well-defined distortion-tilted information density. Finally, for discrete memoryless sources, we derive refined asymptotics which includes the second-order, large and moderate deviations asymptotics. In the converse proof of second-order asymptotics, we apply the Berry-Esseen theorem to the derived non-asymptotic converse bound. The achievability proof follows by first proving a type-covering lemma tailored to the Kaspi problem, then properly Taylor expanding the well-defined distortion-tilted information densities and finally applying the Berry-Esseen theorem. We then generalize the methods used in the Kaspi problem to the Fu-Yeung problem. As a result, we obtain the properties of optimal test channels for the minimum sum-rate function, a non-asymptotic converse bound and refined asymptotics for discrete memoryless sources. Since the successive refinement problem is a special case of the Fu-Yeung problem, as a by-product, we obtain a non-asymptotic converse bound for the successive refinement problem, which is a strict generalization of the non-asymptotic converse bound for successively refinable sources (Zhou, Tan and Motani, 2017).
\end{abstract}

\begin{IEEEkeywords}
Lossy source coding, multiple description coding problem, Non-asymptotic converse bound, Second-order asymptotics, Large Deviations, Moderate Deviations
\end{IEEEkeywords}

\section{Introduction}
In this paper, we revisit two multi-terminal lossy source coding problems: the lossy source coding problem with side information available at the encoder and one of the two decoders, which we term as the Kaspi problem~\cite[Theorem 1]{kaspi1994}, and the multiple description coding problem with one semi-deterministic distortion measure, which we term as the Fu-Yeung problem~\cite{fu2002rate}. The setting of the Kaspi problem is shown in Figure \ref{systemmodel} where we have one encoder $f$ and two decoders $\phi_1,\phi_2$. The side information $Y^n$ is available to the encoder $f$ and the decoder $\phi_2$ but \emph{not} to the decoder $\phi_1$. The encoder $f$ compresses the source $X^n$ into a message $S$ given the side information $Y^n$. Decoder $\phi_1$ aims to recover source sequence $X^n$ within distortion level $D_1$ under distortion measure $d_1$ using the message $S$. Decoder $\phi_2$ aims to recover $X^n$ within distortion level $D_2$ under distortion measure $d_2$ using the message $S$ and the side information $Y^n$. The rate-distortion function for discrete memoryless sources (DMSes) under bounded distortion measures was derived by Kaspi in~\cite[Theorem 1]{kaspi1994}. 

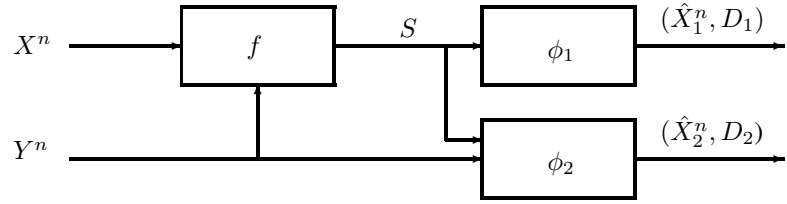
\begin{figure}[t]
\centering
\setlength{\unitlength}{0.5cm}
\scalebox{1}{
\begin{picture}(20,6)
\linethickness{1pt}
\put(0.5,4.8){\makebox{$X^n$}}
\put(0.5,2){\makebox{$Y^n$}}
\put(5,4){\framebox(4,2)}
\put(6.7,4.8){\makebox{$f$}}
\put(2,5){\vector(1,0){3}}
\put(2,2){\vector(1,0){11}}
\put(7,2){\vector(0,1){2}}
\put(13,1){\framebox(4,2)}
\put(13,4){\framebox(4,2)}
\put(14.7,4.7){\makebox{$\phi_1$}}
\put(14.7,1.7){\makebox{$\phi_2$}}
\put(9,5){\vector(1,0){4}}
\put(11,5.5){\makebox(0,0){$S$}}
\put(12,5){\line(0,-1){2.5}}
\put(12,2.5){\vector(1,0){1}}
\put(17,2){\vector(1,0){4}}
\put(17.7,2.5){\makebox{$(\hat{X}_2^n,D_2$)}}
\put(17,5){\vector(1,0){4}}
\put(17.7,5.5){\makebox{$(\hat{X}_1^n,D_1)$}}
\end{picture}}
\caption{The Lossy Source Problem with Side Information Available to the Encoder and One of the Two Decoders~\cite{kaspi1994}.}
\label{systemmodel}
\end{figure}

The setting for the Fu-Yeung problem is shown in Figure~\ref{systemmodelmd}. There are two encoders and three decoders. Each encoder $f_i,~i=1,2$ has access to a source sequence $X^n$ and compresses it into a message $S_i,~i=1,2$. Decoder $\phi_1$ aims to recover $X^n$ with distortion level $D_1$ using the encoded message $S_1$ from encoder $f_1$. Decoder $\phi_2$ aims to recover $X^n$ with distortion level $D_2$ using encoded messages $S_1$ and $S_2$. Decoder $\phi_3$ aims to recover $Y^n$, which is a symbolwise deterministic function  of the source sequence $X^n$. The Fu-Yeung problem is a special case of multiple description coding problem and the El Gamal-Cover inner bound is tight~\cite{gamal1982achievable} for the Fu-Yeung problem. Note that the successive refinement problem~\cite{rimoldi1994} is a special case of the Fu-Yeung problem by letting $Y$ be a constant.

\begin{figure}[tb]
\centering
\setlength{\unitlength}{0.5cm}
\scalebox{1}{
\begin{picture}(26,9)
\linethickness{1pt}
\put(2,5.5){\makebox{$X^n$}}
\put(6,1){\framebox(4,2)}

\put(6,7){\framebox(4,2)}
\put(7.7,1.8){\makebox{$f_2$}}
\put(7.7,7.8){\makebox{$f_1$}}
\put(1,5){\line(1,0){3.5}}
\put(4.5,5){\line(0,1){3}}
\put(4.5,8){\vector(1,0){1.5}}
\put(4.5,5){\line(0,-1){3}}
\put(4.5,2){\vector(1,0){1.5}}
\put(15,4){\framebox(4,2)}
\put(15,1){\framebox(4,2)}
\put(15,7){\framebox(4,2)}
\put(16.7,4.8){\makebox{$\phi_2$}}
\put(16.7,7.7){\makebox{$\phi_1$}}
\put(16.7,1.7){\makebox{$\phi_2$}}
\put(10,2){\vector(1,0){5}}
\put(12,2.5){\makebox(0,0){$S_2$}}
\put(10,8){\vector(1,0){5}}
\put(12,8.5){\makebox(0,0){$S_1$}}
\put(13,8){\line(0,-1){2.6}}
\put(13,5.4){\vector(1,0){2}}
\put(13,2){\line(0,1){2.6}}
\put(13,4.6){\vector(1,0){2}}
\put(19,2){\vector(1,0){4}}
\put(20,2.5){\makebox{$Y^n$}}
\put(19,8){\vector(1,0){4}}
\put(19.7,8.5){\makebox{$(\hat{X}_1^n,D_1)$}}
\put(19,5){\vector(1,0){4}}
\put(19.7,5.5){\makebox{$(\hat{X}_2^n,D_2)$}}
\end{picture}}
\caption{The Multiple Description Coding Problem with One Semi-Deterministic Distortion Measure~\cite{fu2002rate}.}
\label{systemmodelmd}
\end{figure}
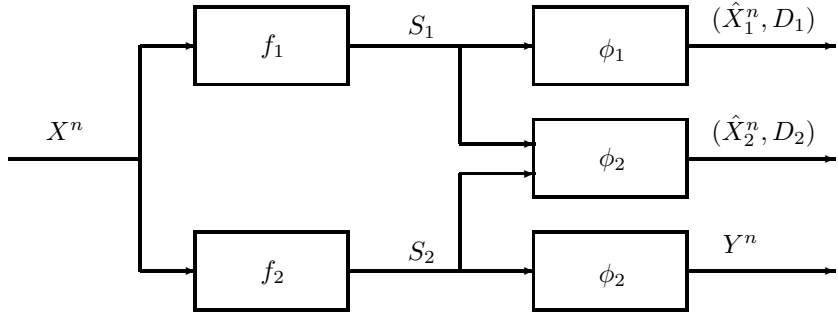
In this paper, we revisit these two lossy source coding problems and present results regarding the properties of optimal test channels, non-asymptotic converse bounds and refined asymptotics for a DMS.

\subsection{Related Works}

We first summarize the works regarding the properties of optimal test channels and non-asymptotic bounds for lossy source coding problems. In \cite{csiszar1974}, Csisz\'ar presented a parametric representation and the properties of optimal test channels for the rate-distortion function~\cite{berger1971rate} of the lossy source coding problem~\cite{shannon1959coding}. In \cite{tuncel2003comp}, Tuncel and Rose derived the properties of the optimal test channels for rate-distortion function of the successive refinement problem~\cite{rimoldi1994}. Kostina and Tuncel~\cite{kostina2017suc} extended the result in \cite{tuncel2003comp} to continuous sources. In \cite{kostina2012converse}, Kostina and Verd\'u derived non-asymptotic converse bounds for the lossy source coding problem and the lossy source coding with encoder and decoder side information. Further, in \cite{kostina2012fixed}, Kostina and Verd\'u derived non-asymptotic achievability bounds for the lossy source coding problem. In \cite{watanabe2015}, Watanabe, Kuzuoka and Tan presented non-asymptotic achievability bounds for the Wyner-Ahlswede-K\"orner problem of the lossless source coding problem with coded side information~\cite{wyner1975,ahlswede1975}, the Wyzer-Ziv problem of the lossy source coding problem with decoder side information~\cite{wyner1976rate} and the Gelfand-Pinsker problem of the channel coding problem with encoder state information~\cite{Gelfand1980}. In a concurrent work of \cite{watanabe2015}, Yassaee, Aref and Gohari~\cite{yassaee2013technique} proposed a novel technique to derive non-asymptotic achievability bounds using stochastic encoding and decoding ideas.

Next, we briefly summarize other works related to the Kaspi problem or the multiple description coding problem. Heegard and Berger~\cite{heegard1985} and Kaspi~\cite{kaspi1994} independently derived the rate-distortion function for the lossy source coding problem with side information available at one of the two decoders (and \emph{not} at the encoder and the other decoder), which is termed as the Kaspi/Heegard-Berger problem. Perron, Diggavi and Telatar derived the explicit formula for the rate-distortion function of the Kaspi problem for a Gaussian memoryless source (GMS) under quadratic distortion measures in \cite{perron2005kaspi} and for a binary memoryless erasure source under Hamming distortion measures in \cite{perron2006kaspi}. Further, Perron, Diggavi and Telatar~\cite{perron2009} consider the lossy source coding problem with Gaussian or erased side information. Tian and Diggavi~\cite{Tian2006} derived explicit formula for the rate-distortion function of the Kaspi/Heegard-Berger problem~\cite{kaspi1994,heegard1985} for a doubly symmetric binary source under Hamming distortion measures. The multiple description coding problem for a binary source under Hamming distortion measures were considered by Wolf, Wyner and Ziv~\cite{wolf1980source} and by Zhang and Berger~\cite{zhang1987new}. The rate-distortion region of the multiple description coding problem for GMS under quadratic distortion measures was characterized by Ozarow~\cite{ozarow1980source}. Venkataramani, Kramer and Goyal considered the multiple description coding problem with many channels in \cite{Kramer2003} while Venkataramanan and Pradhan considered the multiple description coding problem with feedforward in \cite{pradhan2007,pradhan2008}. Other works (non-exhausted) on the multiple description coding problem include \cite{ahlswede1986multiple,zamir1999,witsenhausen1980}.

\subsection{Main Contributions}
In this paper, we revisit two multi-terminal lossy source coding problems: the Kaspi problem~\cite[Theorem 1]{kaspi1994} of the lossy source coding problem with side information available at the encoder and one of the two decoders and the Fu-Yeung problem~\cite{fu2002rate} of the multiple description coding problem with one semi-deterministic distortion measure. Our main contributions can be summarized as follows.

For the Kaspi problem, we first present the properties of optimal test channels and a parametric representation for the rate-distortion function. Second, we generalize the notion of the distortion-tilted information density for the lossy source coding problem~\cite{kostina2012converse} to the Kaspi problem and derive a non-asymptotic converse bound. The non-asymptotic converse bound holds for abstract sources, not restricted to DMSes. Finally, for DMSes under bounded distortion measures, we derive refined asymptotics, i.e., the second-order~(cf.~\cite{kostina2012fixed,ingber2011dispersion,no2016}), the large deviations~(cf.~\cite{Marton74,ihara2000error}) and the moderate deviations~(cf.~\cite{altugwagner2014,polyanskiy2010channel}) asymptotics. The converse proof of the second-order asymptotics follows by applying the Berry-Esseen theorem to the derived non-asymptotic converse bound and the achievability part relies on a type-covering lemma tailored to the Kaspi problem, proper uses of the Taylor's expansions, and the properties of the distortion-tilted information density. Since the Kaspi problem is a generalization of the lossy source coding problem and the lossy source coding problem with encoder and decoder side information, our second-order asymptotics for the Kaspi problem is a generalization of \cite{kostina2012fixed} and \cite{le2014second} for DMSes. We illustrate this point via numerical examples.

We then extend the methods used in the Kaspi problem to the Fu-Yeung problem. First, we present the optimal test channels for the minimum sum-rate function. Subsequently, we generalize the rate-distortion-tilted information for the successive refinement problem~\cite{zhousuc2016} to the Fu-Yeung problem and present a non-asymptotic converse bound. This non-asymptotic bound, when specialized to the case where $|\calY|=1$, gives a non-asymptotic converse bound for the successive refinement problem, which is a strict generalization of \cite[Lemma 9]{zhou2016second}. Finally, we derive refined asymptotics of the Fu-Yeung problem for DMSes under bounded distortion measures and illustrate our results using numerical examples.

\subsection{Organization of the Paper}
The rest of the paper is organized as follows. We set up the notation, formulate the problems and present existing results for the Kaspi problem and the Fu-Yeung problem in Section \ref{sec:pform}. Subsequently, in Section \ref{sec:mainresults4kaspi}, we present our main results for the Kaspi problem: the properties of optimal test channels for the rate-distortion function, a non-asymptotic converse bound, refined asymptotics for DMSes under bounded distortion measures and several numerical examples. In Section \ref{sec:mainresults4fy}, we present our main results for the Fu-Yeung problem: the properties of optimal test channels for the  minimum sum-rate function, a non-asymptotic converse bound and its specialization to the successive refinement problem, refined asymptotics for DMSes under bounded distortion measures and several numerical examples. In Sections \ref{sec:proof4kaspi} and \ref{sec:proof4fy}, we present the proofs of second-order asymptotics for the Kaspi problem and the Fu-Yeung problem respectively. Finally, in Section \ref{sec:conclusion}, we conclude the paper and discuss future research directions. For the smooth presentation of the main results, the proofs of all supporting lemmas are deferred to the appendices.

\section{Problem Formulation and Existing Results}
\label{sec:pform}
\subsection*{Notation}
Random variables and their realizations are in capital (e.g.,\ $X$) and lower case (e.g.,\ $x$) respectively. All sets are denoted in calligraphic font (e.g.,\ $\mathcal{X}$). We use $\calX^{\mathrm{c}}$ to denote the complement of $\calX$. Let $X^n:=(X_1,\ldots,X_n)$ be a random vector of length $n$. We use $\exp(x)$ to denote $e^x$. All logarithms are base $e$. We use $\rmQ(\cdot)$ to denote the standard Gaussian complementary cumulative distribution function (cdf) and  $\rmQ^{-1}(\cdot )$ its inverse. We use standard asymptotic notation such as $O(\cdot)$  and $o(\cdot)$. We use $\bbR$, $\bbR_+$ and $\bbN$ to denote the set of all real numbers, non-negative real numbers and natural numbers respectively. For mutual information, we use $I(X;Y)$ and $I(P_{X},P_{Y|X})$ interchangeably and for conditional mutual information, we use $I(X;Y|Z)$ and $I(P_{X|Z},P_{Y|XZ}|P_Z)$ interchangeably.

The set of all probability distributions on a finite set $\calX$ is denoted as $\calP(\calX)$ and the set of all conditional probability distributions from $\calX$ to $\calY$ is denoted as $\calP(\calY|\calX)$. Given $P\in\calP(\calX)$ and $V\in\calP(\calY|\calX)$, we use $P\times V$ to denote the joint distribution induced by $P$ and $V$. Given a conditional distribution $P_{Z|XY}$ in $\calP(\calZ|\calX\times\calY)$ and a realization $y\in\calY$, we se $P_{Z|X,Y=y}(\cdot|\cdot)$ and $P_{Z|XY}(\cdot|\cdot,y)$ interchangeably. In terms of the method of types, we use the notation as~\cite{Tanbook}. Given a sequence $x^n$, the empirical distribution is denoted as $\hat{T}_{x^n}$. The set of types formed from length $n$ sequences in $\calX$ is denoted as $\calP_{n}(\calX)$. Given $P\in\calP_{n}(\calX)$, the set of all sequences of length $n$ with type $P$ is denoted as $\calT_{P}$.

\subsection{The Kaspi Problem}
The setting of the Kaspi problem is shown in Figure \ref{systemmodel}. In the Kaspi problem, we consider a correlated memoryless source with distribution $P_{XY}$ on the alphabet $\calX\times\calY$. Thus, $(X^n,Y^n)$ is an i.i.d. sequence where each $(X_i,Y_i)$ is generated according to the distribution $P_{XY}$. We assume that the reproduction alphabets for decoders $\phi_1$ and $\phi_2$ are $\hat{\calX}_1$ and $\hat{\calX}_2$ respectively.

\begin{definition}
\label{def:code4kaspi}
An $(n,M)$-code for the Kaspi problem consists of one encoder 
\begin{align}
f:\calX^n\times\calY^n\to \calM:=\{1,\ldots,M\}
\end{align}
and two decoders
\begin{align}
\phi_1&:\calM\to\hat{\calX}_1^n\\
\phi_2&:\calM\times\calY^n\to \hat{\calX}_2^n.
\end{align}
\end{definition}

Hence, $\hat{X}_1^n=\phi_1\big(f(X^n,Y^n)\big)$ and $\hat{X}_2^n=\phi_2\big(f(X^n,Y^n),Y^n\big)$. For $i=1,2$, let $d_i:\calX\times\hat{\calX}_i\to [0,\infty]$ be two distortion measures and let the distortion between $x^n$ and $\hat{x}_i^n$ be defined as $d_i(x^n,\hatx_i^n):=\frac{1}{n}\sum_{j=1}^n d_i(x_j,\hatx_{i,j})$. 

Following \cite{kaspi1994}, the first-order coding rate is defined as follows.
\begin{definition}
\label{firstorder}
A rate $R$ is said to be $(D_1,D_2)$-achievable for the Kaspi problem if there exists a sequence of $(n,M)$-codes such that
\begin{align}
\limsup_{n\to\infty} \frac{\log M}{n}\leq R,
\end{align}
and
\begin{align}
\limsup_{n\to\infty} \mathbb{E}\big[d_i(X^n,\hat{X}_i^n)\big]\leq D_i,~i=1,2.
\end{align}
The minimum $(D_1,D_2)$-achievable rate is denoted as $R_{\rmK}^*(D_1,D_2)$.
\end{definition}

Kaspi~\cite[Theorem 1]{kaspi1994} derived the minimum $(D_1,D_2)$-achievable rate for a DMS under bounded distortion measures. Define
\begin{align}
R_{\rmK}(P_{XY},D_1,D_2)
&:=\min_{\substack{P_{\hat{X}_1|XY},~P_{\hatX_2|XY\hatX_1}:\\\mathbb{E}[d_1(X,\hatX_1)]\leq D_1,\\\mathbb{E}[d_2(X,\hatX_2)]\leq D_2}} I(XY;\hat{X}_1)+I(X;\hat{X}_2|Y\hat{X}_1)\label{kaspiratefunc},
\end{align}

\begin{theorem}
The minimum $(D_1,D_2)$-achievable rate for the Kaspi problem satisfies
\begin{align}
R_{\rmK}^*(D_1,D_2)=R_{\rmK}(P_{XY},D_1,D_2).
\end{align}
\end{theorem}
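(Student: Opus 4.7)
The proof splits into achievability ($R_{\rmK}^*(D_1,D_2)\le R_{\rmK}(P_{XY},D_1,D_2)$) and converse (the reverse inequality).

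For achievability, I would fix any test channel $P_{\hat{X}_1|XY}P_{\hat{X}_2|XY\hat{X}_1}$ attaining the minimum in \eqref{kaspiratefunc} within an arbitrary slack $\eta>0$, and build a two-layer random superposition code. First, independently generate a codebook $\calC_1=\{\hat{X}_1^n(m_1)\}_{m_1=1}^{M_1}$ of size $M_1=\lceil\exp(n(I(XY;\hat{X}_1)+\eta))\rceil$ with entries i.i.d.\ from the marginal $P_{\hat{X}_1}$. For every pair $(\hat{x}_1^n,y^n)$, independently generate a conditional codebook $\calC_2(\hat{x}_1^n,y^n)$ of size $M_2=\lceil\exp(n(I(X;\hat{X}_2|Y\hat{X}_1)+\eta))\rceil$ with entries drawn from $\prod_{j=1}^n P_{\hat{X}_2|Y\hat{X}_1}(\cdot|y_j,\hat{x}_{1,j})$. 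The encoder observes $(X^n,Y^n)$ and, via the covering lemma, selects $\hat{X}_1^n(m_1)$ jointly typical with $(X^n,Y^n)$; via the conditional covering lemma it then selects $\hat{X}_2^n(m_1,m_2)\in\calC_2(\hat{X}_1^n(m_1),Y^n)$ jointly typical with $(X^n,Y^n,\hat{X}_1^n(m_1))$ and transmits $S=(m_1,m_2)$. Decoder~1 outputs $\hat{X}_1^n(m_1)$; Decoder~2, which also has $Y^n$, outputs $\hat{X}_2^n(m_1,m_2)$. The typical-average lemma converts joint typicality into the expected-distortion bounds, while $\tfrac{1}{n}\log(M_1M_2)\le R_{\rmK}(P_{XY},D_1,D_2)+2\eta$.

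For the converse, given any $(n,M)$-code as in Definition~\ref{def:code4kaspi} with $\mathbb{E}[d_i(X^n,\hat{X}_i^n)]\le D_i+o(1)$, I would exploit that $\hat{X}_1^n$ is a function of $S$ and that $\hat{X}_2^n$ is a function of $(S,Y^n)$ to obtain
\begin{align*}
\log M&\ge H(S)=H(\hat{X}_1^n)+H(S\mid \hat{X}_1^n)\\
&\ge I(X^nY^n;\hat{X}_1^n)+H(S\mid \hat{X}_1^n,Y^n)\\
&\ge I(X^nY^n;\hat{X}_1^n)+I(\hat{X}_2^n;X^n\mid \hat{X}_1^n,Y^n).
\end{align*}
Chain-rule expansions combined with the i.i.d.\ structure of $(X^n,Y^n)$ then yield the per-letter bounds $I(X^nY^n;\hat{X}_1^n)\ge\sum_i I(X_iY_i;\hat{X}_{1,i})$ and $I(\hat{X}_2^n;X^n\mid \hat{X}_1^n,Y^n)\ge\sum_i I(X_i;\hat{X}_{2,i}\mid \hat{X}_{1,i},Y_i)$. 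Introducing a time-sharing variable $Q\sim\Unif\{1,\ldots,n\}$ independent of the source and identifying $(X,Y,\hat{X}_1,\hat{X}_2):=(X_Q,Y_Q,\hat{X}_{1,Q},\hat{X}_{2,Q})$, the source marginal is preserved as $P_{XY}$ and the distortion constraints transfer to $\mathbb{E}[d_i(X,\hat{X}_i)]\le D_i+o(1)$. Convexity of mutual information in the channel with the input fixed then gives $\tfrac{1}{n}\log M\ge I(XY;\hat{X}_1)+I(X;\hat{X}_2|Y\hat{X}_1)\ge R_{\rmK}(P_{XY},D_1+o(1),D_2+o(1))$, and continuity of $R_{\rmK}$ in the distortion arguments closes the converse as $n\to\infty$.

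The main obstacle will be the single-letterization of the conditional term $I(\hat{X}_2^n;X^n\mid \hat{X}_1^n,Y^n)$: a direct chain-rule expansion forces a per-letter auxiliary $V_i=(\hat{X}_1^n,X^{i-1},Y_{-i})$ strictly richer than $\hat{X}_{1,i}$ into the conditioning, and conditional mutual information is neither monotone in nor convex in a straightforward way with respect to the conditioning sigma-algebra. The standard resolution exploits the memorylessness $(X_i,Y_i)\indep(X^{i-1},Y_{-i})$ to argue that the extra conditioning is benign after time-sharing, together with the fact that the minimization in \eqref{kaspiratefunc} is over all joint laws $P_{\hat{X}_1\hat{X}_2|XY}$ without any Markov restriction, which makes the resulting time-averaged law automatically feasible and the per-letter bound tight enough to match the single-letter expression.
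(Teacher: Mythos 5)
Note first that the paper does not prove this theorem; it is stated with attribution to Kaspi (1994), so your sketch must be assessed on its own merits rather than against a proof in the paper. Your achievability outline --- superposition coding, covering $(X^n,Y^n)$ with a first-layer codebook drawn from $P_{\hat{X}_1}$ and then conditionally covering with a second-layer codebook drawn from $\prod_j P_{\hat{X}_2|Y\hat{X}_1}(\cdot|y_j,\hat{x}_{1,j})$, keyed on decoder 2's side information $Y^n$ --- is the standard argument and is sound.

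The converse, however, has a genuine gap at the second per-letter bound: the inequality $I(\hat{X}_2^n;X^n\mid \hat{X}_1^n,Y^n)\ge\sum_i I(X_i;\hat{X}_{2,i}\mid \hat{X}_{1,i},Y_i)$ is false in general. The chain-rule expansion gives $\sum_i \big[H(X_i\mid\hat{X}_1^n,Y^n,X^{i-1})-H(X_i\mid\hat{X}_1^n,\hat{X}_2^n,Y^n,X^{i-1})\big]$, and the conditioning set $(\hat{X}_1^n,Y^n,X^{i-1})$ is strictly richer than $(\hat{X}_{1,i},Y_i)$, so the first conditional entropy is only \emph{upper}-bounded by $H(X_i\mid\hat{X}_{1,i},Y_i)$; the sign of the per-letter comparison is therefore indeterminate. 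Indeed, whenever $\phi_1$ is injective, $\hat{X}_2^n$ is a deterministic function of $(\hat{X}_1^n,Y^n)$ and the left side vanishes identically, while the right side is generically positive. Your closing remark about memorylessness and time-sharing does not surface the resolving step, which is exactly the algebraic identity \eqref{equivalent} displayed just after the theorem: at the $n$-letter level, $I(X^nY^n;\hat{X}_1^n)+I(X^n;\hat{X}_2^n\mid Y^n\hat{X}_1^n)=I(X^n;\hat{X}_1^n\hat{X}_2^n\mid Y^n)+I(\hat{X}_1^n;Y^n)$. Regroup into this right-hand form \emph{before} single-letterizing; then both terms split cleanly, $I(\hat{X}_1^n;Y^n)\ge\sum_i I(\hat{X}_{1,i};Y_i)$ and $I(X^n;\hat{X}_1^n\hat{X}_2^n\mid Y^n)\ge\sum_i I(X_i;\hat{X}_{1,i}\hat{X}_{2,i}\mid Y_i)$, because on the first-entropy sides memorylessness gives $H(Y_i\mid Y^{i-1})=H(Y_i)$ and $H(X_i\mid Y^n,X^{i-1})=H(X_i\mid Y_i)$ with no interfering $\hat{X}_1^n$ in the conditioning. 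Applying the identity again per letter, lower-bounding each per-letter term by $R_{\rmK}(P_{XY},D_{1,i},D_{2,i})$ with $D_{j,i}=\mathbb{E}[d_j(X_i,\hat{X}_{j,i})]$, and invoking the convexity of $R_{\rmK}(P_{XY},\cdot,\cdot)$ noted below the theorem statement then closes the converse.
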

We refer to $R_{\rmK}(P_{XY},D_1,D_2)$ as the Kaspi rate-distortion function. Note that $R_{\rmK}(P_{XY},D_1,D_2)$ is convex and non-increasing in both $D_1$ and $D_2$. From \cite[(3.4)]{kaspi1994}, we obtain
\begin{align}
 I(XY;\hat{X}_1)+I(X;\hat{X}_2|Y\hat{X}_1)=I(X;\hat{X}_1\hat{X}_2|Y)+I(\hat{X}_1;Y)\label{equivalent}.
\end{align}
Hence the objective function in \eqref{kaspiratefunc} is convex in $(P_{\hat{X}_1|XY},P_{\hatX_2|XY\hat{X}_1})$. We remark that the explicit formulas of the Kaspi rate-distortion function was derived by Perron, Diggavi and Telatar for a GMS under quadratic distortion measures~\cite{perron2005kaspi} and a binary memoryless erasure source under Hamming distortion measures~\cite{perron2006kaspi}.

\subsection{Fu-Yeung Problem}
The setting of the Fu-Yeung problem is shown in Figure \ref{systemmodelmd}. In the Fu-Yeung problem, we consider a memoryless source with distribution $P_{X}$ supported on an alphabet $\calX$. Hence $X^n$ is an i.i.d. sequence where each $X_i$ is generated according to $P_{X}$. Let reproduction alphabets for decoders $\phi_1,\phi_2$ be $\hat{\calX}_1$ and $\hat{\calX}_2$ respectively. Fix a finite set $\calY$ and define a deterministic function $g:\calX \to \calY$. Let $Y_i=g(X_i),~i\in[1:n]$. Note that $P_Y$ is induced by the source distribution $P_X$ and the deterministic function $g$, i.e., for $y\in\calY$, $P_{Y}(y)=\sum_{x:g(x)=y}P_X(x)$. We assume that for each $y$, $P_Y(y)>0$. Decoder $\phi_3$ is required to recover $Y^n=g(X^n)=(g(Y_1),\ldots,g(Y_n))$ losslessly and the decoded sequence is denoted as $\hat{Y}^n$. We follow the definitions of codes and the rate-distortion region in~\cite{fu2002rate}.
\begin{definition}
\label{def:code4fy}
An $(n,M_1,M_2)$-code for the Fu-Yeung problem consists of two encoders:
\begin{align}
f_1:&\calX^n\to\calM_1:=\{1,\ldots,M_1\},\\
f_2:&\calX^n\to\calM_2:=\{1,\ldots,M_2\},
\end{align}
and three decoders:
\begin{align}
\phi_1&:\calM_1\to\mathcal{\hat{X}}_1^n,\\
\phi_2&:\calM_1\times\calM_2\to \mathcal{\hat{X}}_2^n,\\
\phi_3&:\calM_2\to \calY^n.
\end{align}
\end{definition}

Using the encoding and decoding functions, we have $\hatX_1^n=\phi_1(f_1(X^n))$, $\hatX_2^n=\phi_2(f_1(X^n),f_2(X^n))$ and $\hatY^n=\phi_3(f_2(X^n))$. Let $d_{\rmH}$ be the hamming distortion measure and let $d_{\rmH}(Y^n,\hat{Y}^n):=\frac{1}{n}\sum_{i=1}^nd_{\rmH}(Y_i,\hat{Y}_i)$. Let $d_1,d_2$, $d_1(x^n,\hatx_1^n)$ and $d_2(x^n,\hatx_2^n)$ be the same as defined after Definition \ref{def:code4kaspi}. Then the first-order coding region for the Fu-Yeung problem is defined as follows.
\begin{definition}
A rate pair $(R_1,R_2)$ is said to be $(D_1,D_2)$-achievable for the Fu-Yeung problem if there exists a sequence of $(n,M_1,M_2)$-codes such that
\begin{align}
\limsup_{n\to\infty}\frac{\log M_i}{n}\leq R_i,~i=1,2,\label{def:ratecons4fy}
\end{align}
and
\begin{align}
\limsup_{n\to\infty} \mathbb{E}[d_i(X^n,\hat{X}_{i}^n)]&\leq D_i,~i=1,2,\\
\lim_{n\to\infty} \mathbb{E}[d_{\rmH}(Y^n,\hat{Y}^n)]&=0.
\end{align}
The closure of the set of all $(D_1,D_2)$-achievable rate pairs is called the first-order $(D_1,D_2)$-coding region and denoted as $\calR_{\mathrm{FY}}(D_1,D_2|P_X)$.
\end{definition}

The first-order coding region $\calR_{\mathrm{FY}}(D_1,D_2|P_X)$ was characterized by Fu and Yeung in~\cite{fu2002rate} for DMS. In particular, Fu and Yeung~\cite{fu2002rate} showed that the El-Gamal-Cover inner bound~\cite{gamal1982achievable} for the multiple description coding problem is tight for this case. Let $\calP(P_X,D_1,D_2)$ be the set of all pairs of conditional distributions $(P_{\hatX_1|X},P_{\hatX_2|X\hatX_1})$ such that $\mathbb{E}[d_1(X,\hatX_1)]\leq D_1$ and $\mathbb{E}[d_2(X,\hatX_2)]\leq D_2$. Given a pair of conditional distributions $(P_{\hatX_1|X},P_{\hatX_2|X\hatX_1})$, let $\calR(P_{\hatX_1|X},P_{\hatX_2|X\hatX_1})$ be the collection of rate pairs such that 
\begin{align}
R_1&\geq I(X;\hatX_1),\\
R_2&\geq H(Y),\\
R_1+R_2&\geq H(Y)+I(\hat{X}_1;Y)+I(X;\hat{X}_1\hat{X}_2|Y)
\end{align}

\begin{theorem}
\label{mdfirst}
The first-order coding region for the Fu-Yeung problem is
\begin{align}
\nn\calR_{\mathrm{FY}}(D_1,D_2|P_X)
&=\bigcup_{\substack{(P_{\hatX_1|X},P_{\hatX_2|X\hatX_1})\in \calP(P_X,D_1,D_2)}}\calR(P_{\hatX_1|X},P_{\hatX_2|X\hatX_1}).
\end{align}
\end{theorem}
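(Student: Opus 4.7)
The plan is to establish achievability by constructing an El Gamal--Cover-type coding scheme that exploits the semi-deterministic structure $Y_i=g(X_i)$, and to prove the matching converse by combining Fano's inequality (applied to the lossless recovery of $Y^n$ at decoder 3) with a chain-rule manipulation that splits $H(S_1)+H(S_2)$ into a joint entropy plus a mutual information term.

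For achievability, I would fix any $(P_{\hatX_1|X},P_{\hatX_2|X\hatX_1})\in\calP(P_X,D_1,D_2)$ and build a random code with three ingredients: (i) a standard rate--distortion codebook of size $2^{n(I(X;\hatX_1)+\delta)}$ drawn i.i.d.\ from $P_{\hatX_1}^{\otimes n}$ for the first description; (ii) a lossless encoding of $Y^n$ consuming at most $n(H(Y)+\delta)$ bits; and (iii) a conditional refinement codebook for $\hatX_2^n$ drawn i.i.d.\ from $P_{\hatX_2|Y\hatX_1}$ and indexed by the $(Y^n,\hatX_1^n)$ pair. Since $Y^n=g(X^n)$ and both encoders observe $X^n$, each encoder can locally compute both $Y^n$ and the $\hatX_1^n$ codeword chosen by encoder 1, so $(Y^n,\hatX_1^n)$ acts as common side information at encoder 2 at no extra rate. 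A standard joint-typicality analysis, combined with binning the $\hatX_2^n$ index against $(Y^n,\hatX_1^n)$ so that the individual $R_2$ bound remains $H(Y)$, yields achievability of every pair satisfying $R_1\geq I(X;\hatX_1)$, $R_2\geq H(Y)$, and
\begin{align}
R_1+R_2\geq I(X;\hatX_1)+H(Y)+I(X;\hatX_2|Y,\hatX_1),
\end{align}
which matches the claimed sum-rate bound upon invoking the identity
\begin{align}
I(X;\hatX_1)+I(X;\hatX_2|Y,\hatX_1)=I(\hatX_1;Y)+I(X;\hatX_1\hatX_2|Y),
\end{align}
itself an immediate consequence of $Y=g(X)$ together with the chain rule. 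Sending $\delta\downarrow 0$ and invoking convexity of the region (time-sharing among extreme points) completes the direct part.

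For the converse, I would apply Fano's inequality at decoder 3 to obtain $H(Y^n\mid S_2)\leq n\epsilon_n$, hence $nR_2\geq H(S_2)\geq H(Y^n)-n\epsilon_n=nH(Y)-n\epsilon_n$, and use the standard lossy-source converse to write $nR_1\geq H(S_1)\geq I(X^n;\hatX_1^n)$. The decisive step for the sum rate is
\begin{align}
n(R_1+R_2)
&\geq H(S_1)+H(S_2)\\
&=H(S_1,S_2)+I(S_1;S_2)\\
&\geq H(\hatX_1^n,\hatX_2^n,Y^n)+I(\hatX_1^n;Y^n)-n\epsilon_n,
\end{align}
where the last line uses data processing ($\hatX_1^n$ is a function of $S_1$ and $Y^n$ is, up to $n\epsilon_n$ by Fano, a function of $S_2$). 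Expanding the joint entropy via the chain rule and observing that for deterministic encoders $H(\hatX_1^n,\hatX_2^n\mid X^n)=0$ gives
\begin{align}
n(R_1+R_2)\geq H(Y^n)+I(X^n;\hatX_1^n)+I(X^n;\hatX_2^n\mid Y^n,\hatX_1^n)-n\epsilon_n',
\end{align}
which I would single-letterize via a time-sharing index $T\sim\Unif[1:n]$, reading off the auxiliary test channels from the joint law of $(X_T,Y_T,\hatX_{1,T},\hatX_{2,T})$; the distortion constraints transfer by convexity of $\bbE[d_i(\cdot,\cdot)]$.

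The main obstacle is the sum-rate converse. A naive El Gamal--Cover converse gives $R_1+R_2\geq I(X;\hatX_1,\hatX_2)+I(\hatX_1;\hatX_2)$, which is generally strictly weaker than what is needed here. The semi-deterministic structure $Y=g(X)$ must be exploited \emph{twice}: once via $H(S_2)\geq H(Y^n)$ to recover the $H(Y)$ term, and once via the data-processing bound $I(S_1;S_2)\geq I(\hatX_1^n;Y^n)$ to furnish the extra $I(\hatX_1;Y)$ absent from the classical two-description bound. A final subtlety is ensuring that the single-letterization preserves both the factorization $P_{X\hatX_1\hatX_2}=P_X\,P_{\hatX_1|X}\,P_{\hatX_2|X\hatX_1}$ and the deterministic relation $Y_T=g(X_T)$, both of which hold automatically because $g$ acts symbol-by-symbol.
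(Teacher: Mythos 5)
Theorem~\ref{mdfirst} is not proved in this paper; it is quoted from Fu and Yeung~\cite{fu2002rate}, so there is no in-paper proof to compare against. Judged on its own, your converse skeleton (Fano at $\phi_3$, the identity $H(S_1)+H(S_2)=H(S_1,S_2)+I(S_1;S_2)$, data processing to $(\hatX_1^n,\hatX_2^n,Y^n)$, then time-sharing) is sound, and your achievability scheme is conceptually what the paper's Lemma~\ref{mddtypecovering} and Lemma~\ref{mdduppjoint} implement in non-asymptotic, type-covering form.

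The one genuine mis-step is the appeal to ``binning the $\hatX_2^n$ index against $(Y^n,\hatX_1^n)$ so that the individual $R_2$ bound remains $H(Y)$.'' Binning buys nothing here and, taken literally, the step would fail: decoder $\phi_2$ already receives both messages and so knows $(\hatX_1^n,Y^n)$ exactly, but it has no further side information that could disambiguate a bin, since every codeword in the conditional codebook is drawn i.i.d.\ from $P_{\hatX_2|Y\hatX_1}$ and is statistically exchangeable given $(\hatX_1^n,Y^n)$. What actually sweeps out the full polytope is \emph{rate splitting}: because both encoders observe $X^n$, encoder $f_2$ can locally reproduce $\hatX_1^n$, $Y^n$, and the entire $\approx nI(X;\hatX_2|Y\hatX_1)$-bit refinement index, and those bits may be allocated in any proportion between $S_1$ and $S_2$; this is exactly what the paper's scheme does (``use either $f_1$ or $f_2$ to send out the index of $z_2^*$'' in the proof of Lemma~\ref{mdduppjoint}). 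Two smaller repairs: since $\phi_3$ is held only to vanishing \emph{expected} Hamming distortion rather than vanishing block-error probability, use the per-letter form of Fano, $H(Y^n\mid\hatY^n)\leq n\,h_b(\bar{p}_e)+n\bar{p}_e\log|\calY|$ with $\bar{p}_e=\bbE[d_{\rmH}(Y^n,\hatY^n)]\to 0$; and the block quantity $I(X^n;\hatX_1^n)+I(X^n;\hatX_2^n\mid Y^n\hatX_1^n)$ does not single-letterize term by term, so first rewrite it as $I(Y^n;\hatX_1^n)+I(X^n;\hatX_1^n\hatX_2^n\mid Y^n)$ (equal because $Y^n=g(X^n)$), each summand of which drops to a per-letter bound by conditioning-reduces-entropy, after which the time-sharing index $T$ with $T\indep (X,Y)$ yields the single-letter region.
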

As argued in \cite{fu2002rate}, the Kaspi problem~\cite[Theorem 1]{kaspi1994} and the successive refinement problem~\cite[Theorem 1]{rimoldi1994} are both special cases of the Fu-Yeung problem~(Theorem~\ref{mdfirst}) in the first-order asymptotics for DMSes under bounded distortion measures.

\section{Main Results for the Kaspi Problem}
\label{sec:mainresults4kaspi}
\subsection{Optimal Test Channels}
Throughout the section and its proofs, we are interested in the distortion levels $(D_1,D_2)$ such that $R_{\rmK}(P_{XY},D_1,D_2)$ (cf. \eqref{kaspiratefunc}) is finite. Further, we assume that there exists test channels $(P_{\hatX_1|XY}^*,P_{\hatX_2|XY\hatX_1}^*)$ achieving $R_{\rmK}(P_{XY},D_1,D_2)$. Given any (conditional) distributions $(P_{\hatX_1|XY},P_{\hatX_2|XY\hatX_1})$, let $P_{\hatX_1}$, $P_{X\hatX_1}$, $P_{X\hatX_2}$ and $P_{Y\hatX_1}$, $P_{\hatX_2|Y\hatX_1}$, $P_{Y\hatX_1\hatX_2}$ be induced by $P_{XY}$, $P_{\hatX_1|XY}$ and $P_{\hatX_2|XY\hatX_1}$. For a given distortion pair $(D_1,D_2)$, let $(\lambda_1^*,\lambda_2^*)$ be optimal solutions to the dual problem of $R_{\rmK}(P_{XY},D_1,D_2)$, i.e.,
\begin{align}
\lambda_1^*&:=\frac{\partial R(P_{XY},D,D_2)}{\partial D}\Big|_{D=D_1}\label{dualoptimal1},\\
\lambda_2^*&:=\frac{\partial R(P_{XY},D_1,D)}{\partial D}\Big|_{D=D_2}\label{dualoptimal2}.
\end{align}
Given $(x,y,\hatx_1)$ and arbitrary distributions $(Q_{\hatX_1},Q_{\hatX_2|Y\hatX_1})$, let
\begin{align}
\alpha_2(x,y,\hatx_1|Q_{\hatX_2|Y\hatX_1})
&:=\Big\{\mathbb{E}_{Q_{\hatX_2|Y\hatX_1}}\Big[\exp(-\lambda_2^*d_2(x,\hatX_2))\Big|Y=y,\hatX_1=\hatx_1\Big]\Big\}^{-1},\label{def:a2q4kaspi}\\
\alpha(x,y|Q_{\hatX_1},Q_{\hatX_2|Y\hatX_1})&:=\Bigg\{\mathbb{E}_{Q_{\hatX_1}}\Bigg[\frac{\exp\big(-\lambda_1^*d_1(x,\hatX_1)\big)}{\alpha_2(x,y,\hatX_1|Q_{\hatX_2|Y\hatX_1})}\Bigg]\Bigg\}^{-1}\label{def:a1q4kaspi}.
\end{align}

\begin{lemma}[Properties of Optimal Test Channels]
\label{opttest4kaspi}
A pair of test channels $(P_{\hatX_1|XY}^*,P_{\hatX_2|XY\hatX_1}^*)$ achieves $R_{\rmK}(P_{XY},D_1,D_2)$ if and only if 
\begin{itemize}
\item  For all $(x,y,\hatx_1)$,
\begin{align}
P_{\hatX_1|XY}^*(\hatx_1|x,y)=\frac{\alpha(x,y|P_{\hatX_1}^*,P_{\hatX_2|Y\hatX_1}^*)P_{\hatX_1}^*(\hatx_1)\exp(-\lambda_1^*d_1(x,\hatx_1))}{\alpha_2(x,y,\hatx_1|P_{\hatX_2|Y\hatX_1}^*)}\label{optcond14kaspi},
\end{align}
\item For all $(x,y,\hatx_1,\hatx_2)$ such that $P_{\hatX_1}^*(\hatx_1)>0$,
\begin{align}
P_{\hatX_2|XY\hatX_1}^*(\hatx_2|x,y,\hatx_1)=\alpha_2(x,y,\hatx_1|P_{\hatX_2|Y\hatX_1}^*)P_{\hatX_2|Y\hatX_1}^*(\hatx_2|y,\hatx_1)\exp(-\lambda_1^*d_2(x,\hatx_2))\label{optcond24kaspi}.
\end{align}
\end{itemize}
Further, if $(P_{\hatX_1|XY}^*,P_{\hatX_2|XY\hatX_1}^*)$ achieves $R_{\rmK}(P_{XY},D_1,D_2)$, then we have 
\begin{align}
R_{\rmK}(P_{XY},D_1,D_2)=\mathbb{E}[\log \alpha(X,Y)]-\lambda_1^*D_1-\lambda_2^*D_2\label{kaspipara},
\end{align}
\end{lemma}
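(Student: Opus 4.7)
The argument parallels Csisz\'ar's characterization of optimal test channels for the classical rate-distortion function~\cite{csiszar1974}, adapted to the nested conditional structure in the Kaspi problem. The key structural ingredient is the joint convexity of the objective in \eqref{kaspiratefunc} (transparent from the equivalent form \eqref{equivalent}), together with the convexity and monotonicity of $R_{\rmK}(P_{XY},\cdot,\cdot)$, which make $-\lambda_1^*,-\lambda_2^*\ge 0$ valid dual variables for the constrained program.

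First, I set up a Lagrangian reformulation. Because $R_{\rmK}(P_{XY},D_1,D_2)$ solves a convex programme with linear constraints, strong duality applies, and the subgradients in \eqref{dualoptimal1}--\eqref{dualoptimal2} serve as optimal dual multipliers. The Lagrangian
\[
L(P_{\hatX_1|XY},P_{\hatX_2|XY\hatX_1}):=I(XY;\hatX_1)+I(X;\hatX_2|Y\hatX_1)+\lambda_1^*\mathbb{E}[d_1(X,\hatX_1)]+\lambda_2^*\mathbb{E}[d_2(X,\hatX_2)]
\]
is jointly convex in the pair of test channels (treating $P_{\hatX_1}$ and $P_{\hatX_2|Y\hatX_1}$ as induced functionals); its global minimum equals $R_{\rmK}(P_{XY},D_1,D_2)-\lambda_1^*D_1-\lambda_2^*D_2$, attained precisely at the minimizers of \eqref{kaspiratefunc}.

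Second, I derive the stationarity conditions by variational calculus. Differentiating $L$ with respect to $P_{\hatX_2|XY\hatX_1}(\hatx_2|x,y,\hatx_1)$, with a Lagrange multiplier enforcing $\sum_{\hatx_2}P_{\hatX_2|XY\hatX_1}(\hatx_2|x,y,\hatx_1)=1$, yields, for every $(x,y,\hatx_1)$ with $P_{\hatX_1}^*(\hatx_1)>0$,
\[
\log\frac{P_{\hatX_2|XY\hatX_1}^*(\hatx_2|x,y,\hatx_1)}{P_{\hatX_2|Y\hatX_1}^*(\hatx_2|y,\hatx_1)}=-\lambda_2^*d_2(x,\hatx_2)+C_2(x,y,\hatx_1);
\]
exponentiating and using \eqref{def:a2q4kaspi} to identify the normalization constant produces \eqref{optcond24kaspi}. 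Substituting this expression back into $L$ yields a reduced Lagrangian in $P_{\hatX_1|XY}$ alone, and differentiating with respect to $P_{\hatX_1|XY}(\hatx_1|x,y)$ then gives \eqref{optcond14kaspi}, where the normalization constant is $\alpha(x,y|P_{\hatX_1}^*,P_{\hatX_2|Y\hatX_1}^*)$ of \eqref{def:a1q4kaspi}. Joint convexity of $L$ renders these conditions both necessary and sufficient.

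To obtain \eqref{kaspipara}, substitute \eqref{optcond14kaspi}--\eqref{optcond24kaspi} into the chain-rule identity
\[
R_{\rmK}(P_{XY},D_1,D_2)=\mathbb{E}\!\left[\log\frac{P_{\hatX_1|XY}^*(\hatX_1|X,Y)}{P_{\hatX_1}^*(\hatX_1)}+\log\frac{P_{\hatX_2|XY\hatX_1}^*(\hatX_2|X,Y,\hatX_1)}{P_{\hatX_2|Y\hatX_1}^*(\hatX_2|Y,\hatX_1)}\right].
\]
The two $\log\alpha_2$ contributions cancel, leaving $\mathbb{E}[\log\alpha(X,Y)]-\lambda_1^*\mathbb{E}[d_1(X,\hatX_1)]-\lambda_2^*\mathbb{E}[d_2(X,\hatX_2)]$; complementary slackness activates both distortion constraints, so the expected distortions coincide with $D_1$ and $D_2$. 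The principal obstacle is that \eqref{optcond14kaspi}--\eqref{optcond24kaspi} are coupled through the induced marginals $P_{\hatX_1}^*$ and $P_{\hatX_2|Y\hatX_1}^*$ and therefore constitute an implicit fixed-point system rather than a closed-form solution; it is precisely the joint convexity of $L$ that turns any solution of this system into a global minimizer and validates the "if and only if" claim.
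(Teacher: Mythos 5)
Your overall plan — Lagrangian duality with $\lambda_1^*,\lambda_2^*$ as dual variables, stationarity of the (jointly convex) Lagrangian, then substitution plus complementary slackness for \eqref{kaspipara} — is the same in spirit as the paper's (omitted) proof: the paper's analogous proof for the Fu--Yeung lemma (Appendix \ref{proofproptest4fy}) also passes through the Lagrangian dual and then identifies the optimum via the log-sum inequality, whose equality conditions are precisely the exponential-family stationarity conditions you derive by differentiation. So the two arguments differ only in how the first-order conditions are extracted (variational calculus vs.\ log-sum), which is a cosmetic distinction. Your reduction to a two-stage optimization (inner over $P_{\hatX_2|XY\hatX_1}$, then over $P_{\hatX_1|XY}$), the identification of $\alpha_2$ and $\alpha$ as normalizers, and the telescoping of $\log\alpha_2$ in deriving \eqref{kaspipara} are all correct.

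There is, however, a sign-bookkeeping inconsistency you should fix. You declare ``$-\lambda_1^*,-\lambda_2^*\ge 0$ valid dual variables'' (which matches \eqref{dualoptimal1}--\eqref{dualoptimal2} as literally written), but then write the Lagrangian as $L=I(XY;\hatX_1)+I(X;\hatX_2|Y\hatX_1)+\lambda_1^*\mathbb{E}[d_1]+\lambda_2^*\mathbb{E}[d_2]$, which is only a valid Lagrangian for the minimization if $\lambda_1^*,\lambda_2^*\ge 0$ are themselves the multipliers. The stationarity condition you extract, $\log\frac{P^*_{\hatX_2|XY\hatX_1}}{P^*_{\hatX_2|Y\hatX_1}}=-\lambda_2^*d_2+C_2$, and the definitions \eqref{def:a2q4kaspi}--\eqref{def:a1q4kaspi} and the final form \eqref{kaspipara} all presuppose the convention $\lambda_i^*\ge 0$ (this is also what the numerical example and the Fu--Yeung analogues \eqref{def:t1*}--\eqref{def:t2*} use, so \eqref{dualoptimal1}--\eqref{dualoptimal2} almost certainly have a missing minus sign). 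With $\lambda_i^*\ge 0$, the minimum of your $L$ is $R_{\rmK}(P_{XY},D_1,D_2)+\lambda_1^*D_1+\lambda_2^*D_2$, not $R_{\rmK}(P_{XY},D_1,D_2)-\lambda_1^*D_1-\lambda_2^*D_2$ as you state; the latter is what you would get if you had used $-\lambda_i^*$ as the multipliers, contradicting the exponent in your stationarity equation. The downstream substitution argument is untouched by this slip, but the setup sentence should be made internally consistent. One further small point: for the ``only if'' direction, it is worth stating explicitly that when $P^*_{\hatX_1}(\hatx_1)=0$ both sides of \eqref{optcond14kaspi} vanish, so the asserted identity is vacuous there; the nontrivial boundary information (that one cannot lower the Lagrangian by moving mass onto such a $\hatx_1$) is what Lemma~\ref{nule1} encodes via $\nu_1(\hatx_1)\le 1$, and it is not part of the present lemma's statement.
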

The proof of Lemma \ref{opttest4kaspi} is similar to \cite[Properties 1-3]{kostina2012converse}, \cite[Lemma 1]{watanabe2015second}, \cite[Lemma 3]{zhou2016second} and  thus omitted.

Now fix a pair of optimal test channels $(P_{\hatX_1|XY}^*,P_{\hatX_2|XY\hatX_1}^*)$. For simplicity, let
\begin{align}
\alpha_2(x,y,\hatx_1)&:=\alpha_2(x,y,\hatx_1|P_{\hatX_2|Y\hatX_1}^*)\label{def:a24kaspi},\\
\alpha(x,y)&:=\alpha(x,y|P_{\hatX_1}^*,P_{\hatX_2|Y\hatX_1}^*),\label{def:a4kaspi}.
\end{align}
Further, for any $\hatx_1$ and arbitrary distribution $Q_{\hatX_2|Y\hatX_1}$, define
\begin{align}
\nu_1(\hatx_1)&:=\mathbb{E}_{P_{XY}}\left[\frac{\alpha(X,Y)}{\alpha_2(X,Y,\hatx_1)}\exp(-\lambda_1^*d_1(X,\hatx_1))\right]\label{def:nu1},\\
\nu_2(\hatx_1,Q_{\hatX_2|Y\hatX_1})&:=\mathbb{E}_{P_{XY}\times Q_{\hatX_2|Y\hatX_1}}\Big[\alpha(X,Y)\exp(-\lambda_1^*d_1(X,\hatx_1)-\lambda_2^*d_2(X,\hatX_2))\Big|\hatX_1=\hatx_1\Big]\label{def:nu2}.
\end{align}
Then we have the following results regarding the properties of the quantities in \eqref{def:nu1} and \eqref{def:nu2}.
\begin{lemma}
\label{nule1}
For any $\hatx_1$ and arbitrary distribution $Q_{\hatX_2|Y\hatX_1}$, we have
\begin{align}
\nu_1(\hatx_1)&\leq 1,\label{nu1le1},\\
\nu_2(\hatx_1,Q_{\hatX_2|Y\hatX_1})&\leq 1.\label{nu2le1}
\end{align}
\end{lemma}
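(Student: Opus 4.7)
The plan is to derive both inequalities as complementary slackness conditions for the Kaspi rate-distortion minimization in \eqref{kaspiratefunc} at the optimizer $(P_{\hatX_1|XY}^*, P_{\hatX_2|XY\hatX_1}^*)$, paralleling the arguments of \cite[Property 3]{kostina2012converse}, \cite[Lemma 1]{watanabe2015second}, and \cite[Lemma 3]{zhou2016second}. A useful observation up front: expanding the reciprocal $1/\alpha_2(X,Y,\hatx_1|Q_{\hatX_2|Y\hatX_1})$ inside \eqref{def:nu2} lets us rewrite
\begin{align*}
\nu_2(\hatx_1, Q_{\hatX_2|Y\hatX_1}) = \mathbb{E}_{P_{XY}}\left[\frac{\alpha(X,Y)\exp(-\lambda_1^* d_1(X,\hatx_1))}{\alpha_2(X,Y,\hatx_1|Q_{\hatX_2|Y\hatX_1})}\right],
\end{align*}
so that $\nu_1(\hatx_1) = \nu_2(\hatx_1, P_{\hatX_2|Y\hatX_1}^*)$ and the two bounds share a common template.

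For $\nu_1(\hatx_1) \leq 1$ I would first handle $\hatx_1$ in the support of $P_{\hatX_1}^*$ by rearranging the optimality identity \eqref{optcond14kaspi} to
\begin{align*}
\frac{\alpha(x,y)\exp(-\lambda_1^* d_1(x,\hatx_1))}{\alpha_2(x,y,\hatx_1)} = \frac{P_{\hatX_1|XY}^*(\hatx_1|x,y)}{P_{\hatX_1}^*(\hatx_1)},
\end{align*}
and taking $\mathbb{E}_{P_{XY}}$ on both sides; the right-hand side then integrates to $P_{\hatX_1}^*(\hatx_1)/P_{\hatX_1}^*(\hatx_1) = 1$, so that $\nu_1(\hatx_1) = 1$. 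For $\hatx_1$ outside the support, the bound is the slackness condition for the non-negativity constraint on $P_{\hatX_1}^*(\hatx_1)$ and I would prove it by a first-order perturbation: consider $P_{\hatX_1}^{(\epsilon)} = (1-\epsilon) P_{\hatX_1}^* + \epsilon \delta_{\hatx_1}$ together with a consistent extension of $P_{\hatX_2|Y\hatX_1}^*$ to the augmented support, and compute the right-derivative at $\epsilon = 0^+$ of the Lagrangian $I(XY;\hatX_1) + I(X;\hatX_2|Y\hatX_1) + \lambda_1^*(\mathbb{E}[d_1] - D_1) + \lambda_2^*(\mathbb{E}[d_2] - D_2)$. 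A short calculation using the parametric identity \eqref{kaspipara} shows that this derivative is proportional to $1 - \nu_1(\hatx_1)$, and non-negativity (forced by optimality of $P_{\hatX_1}^*$) then yields $\nu_1(\hatx_1) \leq 1$.

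For $\nu_2(\hatx_1, Q_{\hatX_2|Y\hatX_1}) \leq 1$, observe that $1/\alpha_2(X,Y,\hatx_1|Q_{\hatX_2|Y\hatX_1})$, and hence $\nu_2$, is linear in $Q_{\hatX_2|Y\hatX_1}$, so the supremum over $Q_{\hatX_2|Y\hatX_1}$ is attained by a deterministic map $\hatx_2 = h(y)$. I would then apply the same KKT perturbation scheme to the inner conditional $P_{\hatX_2|Y\hatX_1}^*(\cdot|y,\hatx_1)$, perturbing toward $\delta_{h(y)}$ for arbitrary $h : \calY \to \hatcalX_2$ and averaging over $(y,\hatx_1)$, to conclude $\nu_2(\hatx_1, Q_{\hatX_2|Y\hatX_1}) \leq 1$ for every $Q_{\hatX_2|Y\hatX_1}$. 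The main obstacle is that the minimization in \eqref{kaspiratefunc} couples $P_{\hatX_1|XY}$ and $P_{\hatX_2|XY\hatX_1}$ through $\alpha_2$ and $\alpha$, so the derivative of the Lagrangian in the second-stage perturbation picks up indirect contributions from the induced change in $P_{\hatX_1|XY}^*$; invoking the envelope theorem (or carrying out the variation directly on the Lagrangian rather than the plug-in objective) is needed to discard these second-order effects. Care is also required in extending $P_{\hatX_2|Y\hatX_1}^*(\cdot|y,\hatx_1)$ to pairs $(y,\hatx_1)$ outside its natural support, where the optimal second-stage channel is not uniquely determined.
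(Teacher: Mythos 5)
Your high-level road map matches the paper's proof closely: establish $\nu_1(\hatx_1)=1$ on the support of $P_{\hatX_1}^*$ by rearranging \eqref{optcond14kaspi}, then extend to $\hatx_1$ with $P_{\hatX_1}^*(\hatx_1)=0$ by a KKT/slackness perturbation, and handle $\nu_2$ by reducing to a pointwise bound over deterministic $Q_{\hatX_2|Y\hatX_1}$ and a second perturbation at the inner optimizer. The paper likewise proves $\nu_2(\hatx_1,Q)\le\nu_1(\hatx_1)\le 1$ in these two regimes, and your correct observation that $\nu_1(\hatx_1)=\nu_2(\hatx_1,P_{\hatX_2|Y\hatX_1}^*)$ is the same unification.

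The gap is in the specific claim that perturbing $P_{\hatX_1}^{(\epsilon)}=(1-\epsilon)P_{\hatX_1}^*+\epsilon\delta_{\hatx_1}$ and differentiating the Lagrangian $I(XY;\hatX_1)+I(X;\hatX_2|Y\hatX_1)+\lambda_1^*(\mathbb{E}[d_1]-D_1)+\lambda_2^*(\mathbb{E}[d_2]-D_2)$ gives a derivative proportional to $1-\nu_1(\hatx_1)$. If the perturbation is applied in the primal (moving a small uniform mass $\epsilon$ of the forward channel $P_{\hatX_1|XY}^*(\cdot|x,y)$ onto $\hatx_1=a$), then $I(XY;\hatX_1^{(\epsilon)})=(1-\epsilon)I(XY;\hatX_1^*)$ and the distortion terms are affine in $\epsilon$, so the derivative at $\epsilon=0$ yields an inequality of the form
\begin{align}
\mathbb{E}_{P_{XY}}\Big[\log\tfrac{\alpha(X,Y)\exp(-\lambda_1^*d_1(X,a))}{\alpha_2(X,Y,a)}\Big]\le 0,
\end{align}
a log-domain bound. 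Since $\log$ is concave, Jensen goes the wrong way: $\mathbb{E}[\log Z]\le 0$ does not imply $\mathbb{E}[Z]\le 1$, so you cannot conclude $\nu_1(a)\le 1$ from this. The paper avoids this by perturbing not the primal marginal but the reference distributions $Q_{\hatX_1}$, $Q_{\hatX_2|Y\hatX_1}$ in the Csisz\'ar double minimization $\inf_{Q_{\hatX_1},Q_{\hatX_2|Y\hatX_1}}\mathbb{E}_{P_{X|y}}[\log\alpha(X,y|Q_{\hatX_1},Q_{\hatX_2|Y\hatX_1})]$ established in \eqref{tobeused*}. Under $Q_{\hatX_1}^{(\epsilon)}=(1-\epsilon)P_{\hatX_1}^*+\epsilon\delta_a$, the quantity $1/\alpha(x,y|Q^{(\epsilon)})$ is an arithmetic mixture (see \eqref{aetoa}), so the $\epsilon$-derivative of $\log\alpha(x,y|Q^{(\epsilon)})$ at $0$ is pointwise $1-\alpha(x,y)\exp(-\lambda_1^*d_1(x,a))/\alpha_2(x,y,a|\cdot)$, and averaging over $P_{X|y}$ and then $P_Y$ yields exactly $1-\nu_1(a)\ge 0$. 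The same distinction applies to your second-stage perturbation for $\nu_2$: you must perturb the reference $Q_{\hatX_2|y,\hatx_1}$ in the conditional variational form $\inf_Q\mathbb{E}_{P_{X|y,\hatx_1}^*}[\log\alpha_2(X,y,\hatx_1|Q)]$ (paper's \eqref{optqhatx2gyhatx1}--\eqref{nonnegativeopt1}), not the primal $P_{\hatX_2|X\hatX_1}$. Your "envelope theorem / plug-in objective" remark is aimed at a different worry (indirect effects through $P_{\hatX_1|XY}^*$) which the paper's conditional decomposition actually sidesteps; the real issue is which variable is perturbed, and the Csisz\'ar reference-perturbation is what delivers the ratio-form KKT condition rather than its weaker logarithmic relaxation.

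Your cautionary note about choosing the extension of $P_{\hatX_2|Y\hatX_1}^*$ off the support of $P_{Y\hatX_1}^*$ is well taken and corresponds to the paper's choice of $\barQ_{\hatX_2|Y\hatX_1}$ inside the $Q_{\hatX_1}$-perturbation, which is what lets the argument close $\nu_2(a,Q)\le 1$ for arbitrary $Q$ in the non-support regime (paper's step \eqref{nowstep2}).
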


We remark that Lemma \ref{nule1} holds for both discrete and continuous sources. The proof of Lemma \ref{nule1} is inspired by \cite[Lemma 1.4]{csiszar1974}, \cite[Lemma 5]{tuncel2003comp}, \cite{kostina2017suc} and available in Appendix \ref{proofnule1}. Invoking \eqref{nu2le1} in Lemma \ref{nule1}, as we show in Lemma \ref{oneshotconverse4kaspi}, we can prove a non-asymptotic converse bound for the Kaspi problem.

\subsection{Distortion-Tilted Information Density and Non-Asymptotic Converse Bound} 

Now we introduce the tilted information density which takes a similar role as distortion-tilted information density does for the lossy source coding problem \cite{kostina2012fixed} (See also \cite{ingber2011dispersion}). Recall the definition of $\alpha(\cdot)$ in \eqref{def:a4kaspi}. Define the $(D_1,D_2)$-tilted information density for the Kaspi problem as
\begin{align}
\jmath_{\mathrm{K}}(x,y|D_1,D_2,P_{XY})
&:=\log \alpha(x,y)-\lambda_1^*D_1-\lambda_2^*D_2\label{def:kaspitilt}.
\end{align}
We remark that for all optimal test channels, the value in \eqref{def:kaspitilt} remains the same, which can be verified in a similar manner as \cite[Lemma 2]{watanabe2015second}. The properties of the $(D_1,D_2)$-tilted information density follows from Lemma \ref{opttest4kaspi}. For example, invoking \eqref{optcond14kaspi} and \eqref{optcond24kaspi}, we obtain that for all $(x,y,\hatx_1,\hatx_2)$ such that $P_{\hatX_1}^*(\hatx_1)P_{\hatX_2|Y\hatX_1}^*(\hatx_2|y,\hatx_1)>0$,
\begin{align}
\jmath_{\mathrm{K}}(x,y|D_1,D_2,P_{XY})
\nn&=\log\frac{P_{\hatX_1|XY}^*(\hatx_1|x,y)}{P_{\hatX_1}^*(\hatx_1)}+\log\frac{P_{\hatX_2|XY\hatX_1}^*(\hatx_2|x,y,\hatx_1)}{P_{\hatX_2|Y\hatX_1}^*(\hatx_2|y,\hatx_1)}\\
&\qquad+\lambda_1^*(d_1(x,\hatx_1)-D_1)+\lambda_2^*(d_2(x,\hatx_2)-D_2)\label{expandj4kaspi}.
\end{align}

Define the joint excess-distortion probability as
\begin{align}
\varepsilon_n^{\rmK}(D_1,D_2):=\Pr\Big(d_1(X^n,\hat{X}_1^n)>D_1~\mathrm{or}~d_2(X^n,\hat{X}_2^n)>D_2\Big)\label{defexcessp}.
\end{align}

Invoking Lemma \ref{nule1} and \eqref{def:kaspitilt}, we can prove the following one-shot converse bound for the Kaspi problem.

\begin{lemma}
\label{oneshotconverse4kaspi}
For any $(1,M)$-code, we have that for any $\gamma\geq 0$,
\begin{align}
\varepsilon_1^{\rmK}(D_1,D_2)
&\geq \Pr\left(\jmath_{\mathrm{K}}(X,Y|D_1,D_2,P_{XY})\geq \log M+\gamma\right)-\exp(-\gamma).
\end{align}
\end{lemma}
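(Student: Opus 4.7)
The plan is a Kostina--Verd\'u style change-of-measure converse, built on top of Lemma~\ref{nule1}. I will partition the joint success event over the $M$ possible encoder outputs, invoke $\nu_2\leq 1$ with a conditional distribution $Q^{(s)}_{\hat{X}_2|Y\hat{X}_1}$ tailored to each output $s$, and finish with a Markov-style bound that converts the exponential growth of $\alpha(X,Y)$ on $\{\jmath_{\rmK}\geq \log M+\gamma\}$ into the desired probability inequality.

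First I would introduce the joint success event $\calE:=\{d_1(X,\hat{X}_1)\leq D_1,\ d_2(X,\hat{X}_2)\leq D_2\}$, so that $\Pr(\calE^{\mathrm{c}})=\varepsilon_1^{\rmK}(D_1,D_2)$. Writing $\hat{x}_1(s):=\phi_1(s)$, $\hat{x}_2(s,y):=\phi_2(s,y)$, and $\calE_s:=\{d_1(X,\hat{x}_1(s))\leq D_1,\ d_2(X,\hat{x}_2(s,Y))\leq D_2\}$, the fact that the encoder emits exactly one message $S=f(X,Y)\in[M]$ yields $\mathbf{1}\{\calE\}=\mathbf{1}\{\calE_{f(X,Y)}\}\leq\sum_{s\in[M]}\mathbf{1}\{\calE_s\}$.

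Second, for each $s\in[M]$ I define $Q^{(s)}_{\hat{X}_2|Y\hat{X}_1}$ to place all its mass on $\hat{x}_2(s,y)$ when $\hat{X}_1=\hat{x}_1(s)$ (and arbitrarily elsewhere), and apply \eqref{nu2le1} with $\hat{x}_1=\hat{x}_1(s)$ to obtain
\begin{align}
\mathbb{E}_{P_{XY}}\!\Big[\alpha(X,Y)\exp\!\big(-\lambda_1^*d_1(X,\hat{x}_1(s))-\lambda_2^*d_2(X,\hat{x}_2(s,Y))\big)\Big]\leq 1.
\end{align}
Under the non-negative sign convention for $\lambda_1^*,\lambda_2^*$ consistent with Lemma~\ref{opttest4kaspi}, the definition \eqref{def:kaspitilt} gives $\alpha(X,Y)\geq Me^{\gamma}\exp(\lambda_1^*D_1+\lambda_2^*D_2)$ on $\{\jmath_{\rmK}(X,Y|D_1,D_2,P_{XY})\geq\log M+\gamma\}$, while $d_i\leq D_i$ on $\calE_s$ forces $\exp(-\lambda_i^*d_i)\geq\exp(-\lambda_i^*D_i)$. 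Hence the integrand above is at least $Me^{\gamma}$ on $\calE_s\cap\{\jmath_{\rmK}\geq\log M+\gamma\}$, and a one-line Markov argument delivers $\Pr(\calE_s\cap\{\jmath_{\rmK}\geq\log M+\gamma\})\leq(Me^{\gamma})^{-1}$.

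Finally, summing over $s\in[M]$ gives $\Pr(\calE\cap\{\jmath_{\rmK}\geq\log M+\gamma\})\leq e^{-\gamma}$, and combining with the trivial splitting $\Pr(\jmath_{\rmK}\geq\log M+\gamma)\leq \Pr(\calE\cap\{\jmath_{\rmK}\geq\log M+\gamma\})+\Pr(\calE^{\mathrm{c}})$ and rearranging yields the claimed bound. The main obstacle is that Lemma~\ref{nule1} is stated for an arbitrary conditional $Q_{\hat{X}_2|Y\hat{X}_1}$ on the reproduction alphabet, whereas the decoder $\phi_2(s,\cdot)$ depends jointly on $s$ and $y$; indexing over $s$ and choosing a degenerate $Q^{(s)}$ pointed exactly at $\hat{x}_2(s,y)$ is the bridge, and the factor $M$ emerges from summing the $M$ per-output estimates.
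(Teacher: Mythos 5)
Your proof is correct and rests on the same two pillars as the paper's: Lemma~\ref{nule1} (specifically $\nu_2\leq 1$ for arbitrary $Q_{\hatX_2|Y\hatX_1}$) and a Markov-type change-of-measure step. The bookkeeping differs in a way worth noting. The paper introduces stochastic encoders/decoders, takes $Q_S$ uniform on $\calM$, builds a single auxiliary pair $(Q_{\hatX_1},Q_{\hatX_2|Y\hatX_1})$ by mixing the decoder kernels against $Q_S$ (cf.\ \eqref{choseqhatx1}--\eqref{chooseqhatx2gyhatx1}), and invokes $\nu_2\leq 1$ once with that mixture; the factor $M$ enters through $Q_S(s)=1/M$ when $\frac{1}{M}P_{S|X}(s|x)$ is replaced by $Q_S(s)$. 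You keep the code deterministic, invoke $\nu_2\leq 1$ separately for each message $s$ with a degenerate $Q^{(s)}$ concentrated on $\hatx_2(s,\cdot)$ at $\hatX_1=\hatx_1(s)$, deduce the per-message bound $\Pr(\calE_s\cap\{\jmath_{\rmK}\geq\log M+\gamma\})\leq (Me^{\gamma})^{-1}$, and recover the factor $M$ via the union bound $\mathbf{1}\{\calE\}\leq\sum_s\mathbf{1}\{\calE_s\}$. The two routes are equivalent for deterministic codes; yours avoids the stochastic-coding formalism, while the paper's extends with no extra work to stochastic codes. Your caveat about the sign of $\lambda_i^*$ is well placed: as literally written, \eqref{dualoptimal1}--\eqref{dualoptimal2} give nonpositive derivatives, but the quantities in \eqref{def:a2q4kaspi}--\eqref{def:a4kaspi} and the identity \eqref{kaspipara} only cohere if $\lambda_i^*$ is the negated (hence nonnegative) derivative, which is the Kostina--Verd\'u convention your argument uses.
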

The proof of Lemma \ref{oneshotconverse4kaspi} is given in Appendix \ref{proofoneshot}. We remark that Lemma \ref{oneshotconverse4kaspi} is a generalization of the one-shot bounds by Kostina and Verd\'u for the lossy source coding~\cite[Theorem 1]{kostina2012converse} and the lossy source coding with encoder and decoder side information~\cite[Theorem 4]{kostina2012converse}.

For any memoryless source, from the definition in \eqref{def:kaspitilt}, we obtain that
\begin{align}
\jmath_{\rmK}(X^n,Y^n|D_1,D_2,P_{XY}^n)
&=\sum_{i=1}^n \jmath_{\rmK}(X_i,Y_i|D_1,D_2,P_{XY}).
\end{align}
Hence, invoking Lemma \ref{oneshotconverse4kaspi}, we obtain that for any $\gamma\geq 0$,
\begin{align}
\varepsilon_n^{\rmK}(D_1,D_2)
&\geq \Pr\left(\sum_{i=1}^n\jmath_{\rmK}(X,Y|D_1,D_2,P_{XY})\geq \log M+n\gamma\right)-\exp(-n\gamma)\label{nshot}.
\end{align}
We remark that the bound in \eqref{nshot} plays a central role in the converse proof the second-order and moderate deviations asymptotics. The bound in \eqref{nshot} holds not only for DMSes but also for any continuous memoryless sources.

\subsection{Second-Order Asymptotics for DMSes}
In this subsection, we formulate and present second-order asymptotics of the Kaspi problem for DMSes under bounded distortion measures, i.e., $\calX$, $\calY$, $\hat{\calX_1}$, $\hat{\calX_2}$ are all finite and $\max_{x,\hatx_i}d_i(x,\hatx_i)~,i=1,2$ is finite. Let $\epsilon\in(0,1)$ be fixed.
\begin{definition}
\label{secondorder}
A rate $L$ is said to be second-order $(D_1,D_2,\epsilon)$-achievable for the Kaspi problem if there exists a sequence of $(n,M)$-codes such that
\begin{align}
\limsup_{n\to\infty} \frac{\log M-nR_{\rmK}(P_{XY},D_1,D_2)}{\sqrt{n}}\leq L,
\end{align}
and
\begin{align}
\limsup_{n\to\infty} \varepsilon_n^{\rmK}(D_1,D_2)\leq \epsilon.
\end{align}
The infimum second-order $(D_1,D_2,\epsilon)$-achievable rate is called the optimal second-order coding rate and denoted as $L^*(D_1,D_2,\epsilon)$.
\end{definition}

The central goal for this subsection is to characterize $L^*(D_1,D_2,\epsilon)$. Note that in Definition \ref{firstorder}, the average distortion criterion is used while in Definition \ref{secondorder}, the excess-distortion probability is considered. The reason is that for second-order asymptotics, we need a probability to quantify. To be specific, the excess-distortion probability plays a similar role as error probability for the lossless source coding problem~\cite{hayashi2008} or channel coding problems~\cite{hayashi2009information,polyanskiy2010thesis}. Let $\rmV(D_1,D_2|P_{XY})$ be the distortion-dispersion function for the Kaspi problem, i.e., 
\begin{align}
\rmV(D_1,D_2|P_{XY})&:=\mathrm{Var}\big[\jmath_{\rmK}(X,Y|D_1,D_2,P_{XY})\big]\label{dispersion4kaspi}.
\end{align}

We impose following conditions:
\begin{enumerate}
\item\label{kaspi:cond1} The distortion levels are chosen such that $R_\rmK(P_{XY},D_1,D_2)>0$ is finite;
\item\label{kaspi:condend} $Q_{XY}\to R_\rmK(Q_{XY},D_1,D_2)$ is twice differentiable in the neighborhood of $P_{XY}$ and the derivative is bounded.
\end{enumerate}
\begin{theorem}
\label{kaspisecond}
Under conditions (\ref{kaspi:cond1}) and (\ref{kaspi:condend}), the optimal second-order coding rate for the Kaspi problem is
\begin{align}
L^*(D_1,D_2,\epsilon)=\sqrt{\rmV(D_1,D_2|P_{XY})}\rmQ^{-1}(\epsilon).
\end{align}
\end{theorem}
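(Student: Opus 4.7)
The proof splits into a converse and an achievability, both reducing to the Berry-Esseen theorem applied to the i.i.d.\ summands $\jmath_\rmK(X_i,Y_i|D_1,D_2,P_{XY})$, which have common mean $R_\rmK(P_{XY},D_1,D_2)$ by \eqref{kaspipara} and common variance $\rmV(D_1,D_2|P_{XY})$ by \eqref{dispersion4kaspi}. Boundedness of the distortion measures together with condition~(ii) ensures a finite third absolute central moment, so the Berry-Esseen remainder will be $O(1/\sqrt{n})$ throughout.

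For the converse, I would apply the $n$-shot bound \eqref{nshot} with $\gamma=\tfrac{1}{2}\log n$ to any sequence of $(n,M_n)$-codes satisfying $\limsup_n \varepsilon_n^\rmK(D_1,D_2)\leq\epsilon$. Berry-Esseen converts the probability in \eqref{nshot} into $\rmQ\big((\log M_n+\tfrac{1}{2}\log n-nR_\rmK)/\sqrt{n\rmV}\big)+O(1/\sqrt{n})$; rearranging and inverting $\rmQ$ yields $\log M_n\geq nR_\rmK(P_{XY},D_1,D_2)+\sqrt{n\rmV(D_1,D_2|P_{XY})}\rmQ^{-1}(\epsilon)+O(\log n)$, which is the converse half of the theorem.

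For the achievability, I would first prove a type-covering lemma tailored to the Kaspi problem: for each joint type $Q_{XY}\in\calP_n(\calX\times\calY)$, construct a first-layer codebook $\calC_1(Q_{XY})\subseteq\hat{\calX}_1^n$ and a family of second-layer codebooks $\calC_2(\hatx_1^n,y^n)\subseteq\hat{\calX}_2^n$ so that every $(x^n,y^n)\in\calT_{Q_{XY}}$ admits reconstructions within the required $d_1,d_2$ distortions and the total log-cardinality is at most $nR_\rmK(Q_{XY},D_1,D_2)+O(\log n)$. The construction uses random coding by types driven by an optimizer $(P_{\hatX_1|XY}^*,P_{\hatX_2|XY\hatX_1}^*)$ of \eqref{kaspiratefunc}; the encoder first transmits $Q_{XY}$ (an $O(\log n)$-bit header) together with the concatenated layer indices. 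Setting $\log M_n=nR_\rmK(P_{XY},D_1,D_2)+\sqrt{n\rmV(D_1,D_2|P_{XY})}\rmQ^{-1}(\epsilon)+O(\log n)$, the excess-distortion event is contained in $\{R_\rmK(\hat{T}_{X^nY^n},D_1,D_2)>(\log M_n)/n-O((\log n)/n)\}$. Taylor expanding $Q_{XY}\mapsto R_\rmK(Q_{XY},D_1,D_2)$ around $P_{XY}$ using condition~(ii) and identifying the first-order coefficient with $\jmath_\rmK(x,y|D_1,D_2,P_{XY})$ (via \eqref{kaspipara} and Lemma~\ref{opttest4kaspi}) reduces this event to $\{\sum_{i=1}^n\jmath_\rmK(X_i,Y_i|D_1,D_2,P_{XY})>nR_\rmK+\sqrt{n\rmV}\,\rmQ^{-1}(\epsilon)+O(\log n)\}$ on the typical-type event $\{\|\hat{T}_{X^nY^n}-P_{XY}\|_\infty\leq n^{-1/2}\log n\}$ (which has probability $1-o(1)$); Berry-Esseen then gives $\limsup_n\varepsilon_n^\rmK(D_1,D_2)\leq\epsilon$.

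The principal technical obstacle will be the type-covering lemma: the second-layer codebook must be conditioned jointly on the first-layer codeword \emph{and} on $y^n$ to exploit the side information at decoder~2, yet the common message $S$ must be decodable by decoder~1 who does not see $y^n$, so the two-layer indexing has to be carefully interlocked in order to realize the decomposition $I(XY;\hatX_1)+I(X;\hatX_2|Y\hatX_1)$ in \eqref{kaspiratefunc}. The accompanying subtlety is to identify the gradient of $R_\rmK$ with $\jmath_\rmK$ and to bound the quadratic Taylor remainder uniformly on the typical-type event; this is standard but crucially relies on condition~(ii) to supply the required smoothness and on Lemma~\ref{opttest4kaspi} to evaluate the envelope derivative.
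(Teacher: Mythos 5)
Your proposal is correct and follows essentially the same route as the paper: the converse applies Berry--Esseen to the $n$-shot bound \eqref{nshot} with $n\gamma=\tfrac{1}{2}\log n$, and the achievability combines a type-covering lemma tailored to the Kaspi structure, an $O(\log n)$-nat type header, a Taylor expansion of $Q_{XY}\mapsto R_\rmK(Q_{XY},D_1,D_2)$ whose gradient is identified with $\jmath_\rmK$ via Lemma~\ref{opttest4kaspi}, and Berry--Esseen on a typical-type event. The ``interlocking'' concern you raise for the covering lemma resolves cleanly because decoder~1 only extracts the type and first-layer index from $S$ (ignoring the $y^n$-dependent second-layer index), so the paper's construction of a $Q_{XY}$-dependent first-layer codebook $\calB$ together with $(\hatx_1^n,y^n)$-dependent second-layer codebooks $\calB(z^*,y^n)$ suffices.
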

The proof of Theorem \ref{kaspisecond} is given in Section \ref{sec:proof4kaspi}. In the converse part, we apply the Berry-Esseen Theorem to the non-asymptotic bound in \eqref{nshot}.  In the achievability part, we first prove a type-covering lemma tailored for the Kaspi problem. Subsequently, we make use of the properties of $\jmath_{\rmK}(x,y|D_1,D_2,P_{XY})$ in Lemma \ref{opttest4kaspi}, appropriate Taylor expansions, and the Berry-Esseen Theorem.

We remark that the distortion-tilted information density for the Kaspi problem $\jmath_{\rmK}(x,y|D_1,D_2,P_{XY})$ reduces to the $D_1$-tilted information density for the lossy source coding problem~\cite{kostina2012converse} or the $D_2$-tilted information density for the lossy source coding problem with encoder and decoder side information~\cite{kostina2012converse} for particular choices of distortion levels $(D_1,D_2)$. Hence, our result in Theorem \ref{kaspisecond} is a strict generalization of the second-order coding rate for the lossy source coding problem~\cite{kostina2012} and the conditional lossy source coding problem~\cite{le2014second} for DMSes under bounded distortion measures. We illustrate this point in Section \ref{sec:numerical} via numerical examples.

\subsection{Large and Moderate Deviations Asymptotics for DMSes}
The optimal error exponent function is defined as follows.
\begin{definition} 
\label{defee}
{\em  A non-negative number $E$ is said to be an {\em $(R,D_1,D_2)$-achievable error exponent} if there exists a sequence of $(n,M)$-codes such that,
\begin{align}
\limsup_{n\to\infty}\frac{1}{n}\log M&\leq  R,\\
\liminf_{n\to\infty}-\frac{\log \varepsilon_n^\rmK(D_1,D_2)}{n}&\geq E.
\end{align}
The supremum of all $(R,D_1,D_2)$-achievable error exponents is called the optimal error exponent and denoted as $E^*(R,D_1,D_2)$.}
\end{definition}
Define the function
\begin{align}
\label{defeef}
F(P_{XY},R,D_1,D_2):=\inf_{Q_{XY}:R_\rmK(Q_{XY},D_1,D_2)\geq R} D(Q_{XY}\|P_{XY}).
\end{align}
\begin{theorem}
\label{eekaspi}
The optimal error exponent for the Kaspi problem is
\begin{align}
E^*(R,D_1,D_2)&=F(P_{XY},R,D_1,D_2).
\end{align}
\end{theorem}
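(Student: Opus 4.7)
The plan is to prove Theorem \ref{eekaspi} in the standard Marton/Csisz\'ar error-exponent fashion via the method of types, splitting into achievability and converse. Both directions hinge on the observation that the excess-distortion event on a type class $\calT_{Q_{XY}}$ is essentially governed by whether $R_\rmK(Q_{XY}, D_1, D_2)$ exceeds the code rate $R$.

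For achievability I would first establish a type-covering lemma tailored to the Kaspi problem: for every joint type $Q_{XY} \in \calP_n(\calX \times \calY)$, there is a Kaspi code incurring no excess distortion on $\calT_{Q_{XY}}$ with rate $\tfrac{1}{n}\log M \leq R_\rmK(Q_{XY}, D_1, D_2) + O(\tfrac{\log n}{n})$. This is the same covering lemma needed in Section \ref{sec:proof4kaspi} for Theorem \ref{kaspisecond}, so the two-stage construction (first a codebook of $\hat X_1^n$ covering the $X^n$-marginal type class, then, conditional on each chosen $\hat x_1^n$ and on $y^n$, a codebook of $\hat X_2^n$ covering the corresponding conditional type class) can be reused. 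The overall code takes the union, over all types $Q$ with $R_\rmK(Q, D_1, D_2) \leq R - \delta_n$ for a slowly vanishing $\delta_n$, of these per-type codebooks; since there are polynomially many types, the total rate is $R + o(1)$. The excess-distortion event then forces the empirical type $\hat T_{X^nY^n}$ to satisfy $R_\rmK(\hat T_{X^nY^n}, D_1, D_2) > R - \delta_n$, and the usual Sanov bound plus continuity of $Q \mapsto R_\rmK(Q, D_1, D_2)$ and $Q \mapsto D(Q\|P_{XY})$ on the simplex yields $E^*(R, D_1, D_2) \geq F(P_{XY}, R, D_1, D_2)$.

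For the converse I would combine Lemma \ref{oneshotconverse4kaspi} with a change-of-measure argument indexed by types. Fix $\delta > 0$ and a type $Q_{XY}$ with $R_\rmK(Q_{XY}, D_1, D_2) > R + 2\delta$. Conditioning on $(X^n, Y^n) \in \calT_{Q_{XY}}$ replaces the i.i.d.\ source by the uniform distribution on $\calT_{Q_{XY}}$, under which the identity \eqref{expandj4kaspi} and a direct type-class calculation show that the $n$-letter tilted information density (defined with respect to the surrogate source $Q_{XY}$, not $P_{XY}$) equals $n R_\rmK(Q_{XY}, D_1, D_2) + o(n)$ uniformly over $\calT_{Q_{XY}}$. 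Applying Lemma \ref{oneshotconverse4kaspi} to the product source $Q_{XY}^n$ with $\gamma = n\delta$ then forces the conditional excess-distortion probability to exceed $\tfrac{1}{2}$ for all large $n$. Since $\Pr_{P_{XY}^n}(\calT_{Q_{XY}}) \geq (n+1)^{-|\calX||\calY|} \exp(-n D(Q_{XY}\|P_{XY}))$, we conclude $\varepsilon_n^\rmK(D_1, D_2) \geq \tfrac{1}{2}(n+1)^{-|\calX||\calY|} \exp(-n D(Q_{XY}\|P_{XY}))$; optimizing over such $Q_{XY}$ and sending $\delta \to 0$ via continuity produces the matching upper bound.

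The principal obstacle is the Kaspi type-covering lemma itself, which must coordinate a single description $\hat X_1^n$ used by both decoders with a refinement $\hat X_2^n$ tailored to the side information $Y^n$, achieving the sum $I(XY; \hat X_1) + I(X; \hat X_2 | Y \hat X_1)$ that defines $R_\rmK$ in \eqref{kaspiratefunc}; a careful random-coding argument over joint and conditional type classes, together with a superposition structure, is required. A secondary subtlety is that $\jmath_\rmK$ in the converse must be recentered about the surrogate $Q_{XY}$ rather than $P_{XY}$ so that the identity \eqref{expandj4kaspi} and Lemma \ref{opttest4kaspi} deliver the clean value $n R_\rmK(Q_{XY}, D_1, D_2)$ on $\calT_{Q_{XY}}$; once this is set up, the application of Lemma \ref{oneshotconverse4kaspi} is routine.
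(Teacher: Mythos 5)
Your proposal is correct and matches the paper's (omitted) proof strategy: achievability invokes the type-covering Lemma \ref{coveringkaspi} combined with Sanov's theorem and continuity of $Q\mapsto R_\rmK(Q,D_1,D_2)$, while the converse derives a strong converse from the non-asymptotic bound \eqref{nshot} (recast for the surrogate source $Q_{XY}^n$ with the tilted information density defined relative to $Q_{XY}$) and then applies the change-of-measure technique. The only minor caveat is that your step from the strong converse under $Q_{XY}^n$ to the claim that the conditional excess-distortion probability given $\calT_{Q_{XY}}$ exceeds $1/2$ deserves an explicit line of justification, namely that $\varepsilon_n^{\rmK}$ under $Q_{XY}^n$ converges to one exponentially fast while $Q_{XY}^n(\calT_{Q_{XY}})\geq(n+1)^{-|\calX||\calY|}$ decays only polynomially, but this is routine.
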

The proof of Theorem \ref{eekaspi} is omitted due to similarity to \cite[Theorem 12]{zhou2015second}. The achievability part follows by invoking the type-covering Lemma (cf. Lemma \ref{coveringkaspi}) while the converse part follows from a strong converse result implied by \eqref{nshot} and the change-of-measure technique.

In the following, we define and present our result for the moderate deviations constant.
\begin{definition}
{\em  
Consider any sequence $\{\rho_n\}_{n\in\bbN}$ satisfying
\begin{align}
 \lim_{n\to\infty}\rho_n&=0,~\lim_{n\to\infty}\sqrt{n}\rho_n=\infty. \label{eqn:cond_rho}
\end{align}
A non-negative number $\nu$ is said to be a {\em $(R,D_1, D_2)$-achievable moderate deviations constant (with respect to $\{ \rho_n\}_{n=1}^{\infty}$)} if there exists a sequence of $(n,M)$-codes such that
\begin{align}
\limsup_{n\to\infty}\frac{1}{n\rho_n}(\log M-nR_\rmK(P_{XY},D_1,D_2))\leq 1,\\
\liminf_{n\to\infty}-\frac{\log \varepsilon^\rmK_n(D_1,D_2)}{n\rho_n^2}\geq \nu.
\end{align}
The supremum of all $(R,D_1, D_2)$-achievable moderate deviations constants is called the optimal moderate deviations constant and denoted as $\nu^*(R,D_1, D_2)$.
}
\end{definition}

\begin{theorem}
\label{theoremmdc}
Given the conditions in Theorem \ref{kaspisecond} and the assumption that the distortion-dispersion function $\mathrm{V}(D_1,D_2|P_{XY})$ is positive, the optimal moderate deviations constant for the Kaspi problem is
\begin{align}
\nu^*(R,D_1, D_2)=\frac{1}{2\mathrm{V}(D_1,D_2|P_{XY})}.
\end{align}
\end{theorem}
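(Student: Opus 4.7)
The plan is to treat Theorem~\ref{theoremmdc} as the natural interpolation between the Gaussian regime of Theorem~\ref{kaspisecond} and the large deviations regime of Theorem~\ref{eekaspi}. Both directions reduce, via the machinery already developed for those two theorems, to a single application of the moderate deviations principle to the i.i.d.\ sum of bounded random variables $\jmath_\rmK(X_i,Y_i|D_1,D_2,P_{XY})$, which by \eqref{kaspipara} and \eqref{dispersion4kaspi} have mean $R_\rmK(P_{XY},D_1,D_2)$ and variance $\mathrm{V}(D_1,D_2|P_{XY})$.

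For the converse, the starting point is the $n$-shot bound \eqref{nshot}. Given any sequence of codes with $\limsup_n (\log M - nR_\rmK)/(n\rho_n)\leq 1$, I would choose $\gamma = n\rho_n^2\delta$ for a fixed $\delta > 1/(2\mathrm{V})$. Because $n\rho_n^2 = o(n\rho_n)$, \eqref{nshot} yields
\begin{align}
\varepsilon_n^\rmK(D_1,D_2) \geq \Pr\!\left(\sum_{i=1}^n\bigl(\jmath_\rmK(X_i,Y_i|D_1,D_2,P_{XY})-R_\rmK\bigr)\geq n\rho_n(1+o(1))\right)-\exp(-n\rho_n^2\delta). \nonumber
\end{align}
Since $\jmath_\rmK$ is bounded under our DMS assumptions with bounded distortion measures, the standard moderate deviations principle assigns the first probability the asymptotic decay rate $\exp(-n\rho_n^2(1+o(1))/(2\mathrm{V}))$, while by the choice of $\delta$ the subtracted term decays strictly faster. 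Combining gives $\liminf_n -\frac{1}{n\rho_n^2}\log\varepsilon_n^\rmK(D_1,D_2)\leq 1/(2\mathrm{V})$, hence $\nu^*(R,D_1,D_2)\leq 1/(2\mathrm{V})$.

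For the achievability, I would reuse the type-covering lemma developed for Theorem~\ref{kaspisecond} in Section~\ref{sec:proof4kaspi}. With a code of rate $\log M = n(R_\rmK + \rho_n(1+o(1)))$, the excess-distortion event is contained, up to polynomially many type classes, in $\{R_\rmK(\hat{T}_{X^nY^n},D_1,D_2) > R_\rmK + \rho_n - o(\rho_n)\}$. A second-order Taylor expansion of $Q\mapsto R_\rmK(Q,D_1,D_2)$ around $P_{XY}$, valid under condition~(\ref{kaspi:condend}), combined with the parametric representation \eqref{kaspipara} and the optimal test-channel characterization \eqref{optcond14kaspi}--\eqref{optcond24kaspi}, identifies the first-order term with $\frac{1}{n}\sum_i\jmath_\rmK(X_i,Y_i|D_1,D_2,P_{XY}) - R_\rmK$. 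The event thus reduces, modulo the quadratic remainder, to a moderate deviation of this i.i.d.\ sum, whose probability is bounded above by $\exp(-n\rho_n^2(1+o(1))/(2\mathrm{V}))$, giving $\nu^*(R,D_1,D_2)\geq 1/(2\mathrm{V})$.

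The hard part will be keeping the quadratic remainder $O(\|\hat{T}_{X^nY^n} - P_{XY}\|^2)$ negligible relative to $\rho_n$ throughout the moderate deviations regime, because unlike the central limit regime (where typical fluctuations are $O(1/\sqrt{n})$) the target deviation $\rho_n$ is itself vanishing. The standard remedy is a truncation to $\|\hat{T}_{X^nY^n} - P_{XY}\|_1 \leq K\rho_n$ for a sufficiently large constant $K$: on this event the remainder is $O(K^2\rho_n^2) = o(\rho_n)$ and can be absorbed into the $o(1)$ factor, while the complementary event has probability at most $\exp(-nK^2\rho_n^2/2)$ by Pinsker's inequality together with Sanov's theorem, which for any $K > 1/\sqrt{\mathrm{V}}$ decays strictly faster than $\exp(-n\rho_n^2/(2\mathrm{V}))$ and is therefore negligible at the moderate deviations scale.
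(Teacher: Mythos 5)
Your plan coincides with the paper's: apply the moderate deviations principle to the information-spectrum bounds already set up in Section~\ref{sec:proof4kaspi} --- the converse from \eqref{nshot} and the achievability from Lemma~\ref{uppexcess} plus the Taylor expansion of $Q_{XY}\mapsto R_\rmK(Q_{XY},D_1,D_2)$ in Lemma~\ref{lemmataylor}. The converse is fine as outlined, modulo a small slip: you state ``choose $\gamma=n\rho_n^2\delta$,'' but your displayed bound is what results from choosing the parameter $\gamma$ in \eqref{nshot} to equal $\rho_n^2\delta$ (so that $n\gamma=n\rho_n^2\delta=o(n\rho_n)$ and the subtracted term is $\exp(-n\rho_n^2\delta)$). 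With the literal choice $\gamma=n\rho_n^2\delta$, the threshold shift $n\gamma=n^2\rho_n^2\delta\gg n\rho_n$ would destroy the lower bound.

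The one step that would actually fail as written is the atypical-set estimate in the achievability. The method-of-types Sanov union bound carries an $(n+1)^{|\calX|\cdot|\calY|}$ prefactor, and the MDP speed conditions \eqref{eqn:cond_rho} permit $n\rho_n^2=o(\log n)$, e.g.\ $\rho_n=\sqrt{\log\log n/n}$. In that regime $(n+1)^{|\calX|\cdot|\calY|}\exp(-nK^2\rho_n^2/2)$ does not even tend to zero, let alone decay faster than $\exp(-n\rho_n^2/(2\rmV(D_1,D_2|P_{XY})))$, so the truncation error is not negligible. The correct ingredient is a concentration inequality for the empirical pmf with a prefactor that depends on the alphabet size but \emph{not} on $n$, for instance $\Pr\big(\|\hat{T}_{X^nY^n}-P_{XY}\|_1\geq t\big)\leq 2^{|\calX|\cdot|\calY|}\exp(-nt^2/2)$. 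With this, $-\frac{1}{n\rho_n^2}\log\Pr\big(\|\hat{T}_{X^nY^n}-P_{XY}\|_1>K\rho_n\big)\to K^2/2$, which exceeds $\frac{1}{2\rmV(D_1,D_2|P_{XY})}$ once $K>1/\sqrt{\rmV(D_1,D_2|P_{XY})}$, and the rest of your truncation argument closes.
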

We observe that similarly to second order asymptotics (Theorem \ref{kaspisecond}), the distortion-dispersion function $\mathrm{V}(D_1,D_2|P_{XY})$ (cf. \eqref{dispersion4kaspi}) is a fundamental quantity that governs the speed of convergence of the excess-distortion probability to zero. The proof of Theorem \ref{theoremmdc} is done by applying the moderate deviations principle/theorem (cf. \cite[Theorem~3.7.1]{dembo2009large}) to the information spectrum bounds in the second-order asymptotics and omitted here due to similarity to \cite[Theorem 7]{zhou2015second}. We remark that the proof of Theorem \ref{theoremmdc} can also be done in a similar manner as \cite{tan2012moderate} using Euclidean information theory \cite{borade2008}.

\subsection{Numerical Examples}
\label{sec:numerical}
In this subsection, we consider two DMSes and Hamming distortion measures.

\subsubsection{Asymmetric Correlated Source}

In order to illustrate our results in Lemma \ref{opttest4kaspi} and Theorem \ref{kaspisecond}, we consider the following source. Let $\calX=\{0,1\}$, $\calY=\{0,1,\rme\}$ and $P_X(0)=P_X(1)=\frac{1}{2}$. Let $Y$ be the output of passing $X$ through a Binary Erasure Channel (BEC) with erasure probability $p$, i.e., $P_{Y|X}(y|x)=1-p$ if $x=y$ and $P_{Y|X}(\rme|x)=p$. The explicit formula of the Kaspi rate-distortion function for the above correlated source under Hamming distortion measures was derived by Perron, Diggavi and Telatar in \cite{perron2006kaspi}. Here we only recall the \emph{non-degenerate} result, i.e., the case where the distortion levels $(D_1,D_2)$ are chosen such that $\lambda_1^*>0$ and $\lambda_2^*>0$.

Define the set
\begin{align}
\calD_{\mathrm{bec}}:=
\Bigg\{(D_1,D_2)\in\bbR_+^2: D_1\leq \frac{1}{2},~D_1-\frac{1-p}{2}\leq D_2\leq p D_1
\Bigg\}
\end{align}
\begin{lemma}
If $(D_1,D_2)\in\calD_{\mathrm{bec}}$, then the Kaspi rate-distortion function for the above asymmetric correlated source under Hamming distortion measures is
\begin{align}
R_{\rmK}(P_{XY},D_1,D_2)&=\log 2-(1-p)H_b\Bigg(\frac{D_1-D_2}{1-p}\Bigg)-pH_b\Bigg(\frac{D_2}{p}\Bigg).
\end{align}
\end{lemma}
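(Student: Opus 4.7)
I would use the standard achievability--converse structure for single-letter lossy source coding formulas. Achievability supplies an explicit pair of test channels $(P_{\hat{X}_1|XY}, P_{\hat{X}_2|XY\hat{X}_1})$ realizing the claimed rate via the objective in \eqref{kaspiratefunc}. The converse uses the equivalent form $I(\hat{X}_1;Y) + I(X;\hat{X}_1\hat{X}_2|Y)$ from \eqref{equivalent}, splits on the erasure indicator $E:=\bone[Y=\rme]$, applies Fano's inequality on each event, and then maximizes the resulting binary-entropy penalty over all distortion-feasible conditional distortions $\delta_i(y):=\Pr(X\neq \hat{X}_i \mid Y=y)$. The regime $\calD_{\mathrm{bec}}$---in particular the condition $D_2 \leq p D_1$---will be exactly what pins the optimizer to the non-degenerate interior so that both parameters $D_2/p$ and $(D_1-D_2)/(1-p)$ sit in $[0,1/2]$ and the case analysis in the final step closes cleanly.

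\textbf{Achievability.} I would pick test channels that decouple the two BEC sub-sources. On $Y=\rme$ (probability $p$), set $\hat{X}_1 = \hat{X}_2 = X \oplus Z_{\rme}$ where $Z_{\rme} \sim \Ber(D_2/p)$ is independent of $X$; on $Y\in\{0,1\}$ (probability $1-p$), set $\hat{X}_2 = Y$ and $\hat{X}_1 = X \oplus Z_{ne}$ where $Z_{ne}\sim\Ber((D_1-D_2)/(1-p))$ is independent of $X$. Under $(D_1,D_2)\in\calD_{\mathrm{bec}}$ both noise parameters lie in $[0,1/2]$. A direct computation confirms that both expected Hamming distortions meet their targets with equality, that $\hat{X}_1$ is uniform on $\{0,1\}$ by symmetry, and that
\begin{align*}
I(XY;\hat{X}_1) &= \log 2 - p H_b(D_2/p) - (1-p) H_b((D_1-D_2)/(1-p)),\\
I(X;\hat{X}_2|Y\hat{X}_1) &= 0,
\end{align*}
the second vanishing because $\hat{X}_2$ is a deterministic function of $(Y,\hat{X}_1)$ in this construction. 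Summing yields the target rate.

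\textbf{Converse and main obstacle.} Fix any feasible pair of test channels. Since $X$ is determined by $Y$ on $\{Y\neq \rme\}$, only the erasure contribution survives in $I(X;\hat{X}_1\hat{X}_2|Y) = p(\log 2 - H(X|Y=\rme,\hat{X}_1,\hat{X}_2))$; applying Fano once via $\hat{X}_1$ and once via $\hat{X}_2$ and keeping the tighter bound gives $H(X|Y=\rme,\hat{X}_1,\hat{X}_2) \leq H_b(\min(\delta_1(\rme),\delta_2(\rme)))$ (valid since WLOG $\delta_i(\rme)\in[0,1/2]$). For the other term, decomposing $Y$ through $E$ and Fano on the non-erasure subsource give $H(Y|\hat{X}_1) \leq H_b(p) + (1-p) H_b(\delta_1(ne))$, whence $I(\hat{X}_1;Y) \geq (1-p)(\log 2 - H_b(\delta_1(ne)))$. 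Combining,
\begin{align*}
R_{\rmK}(P_{XY},D_1,D_2) \geq \log 2 - (1-p) H_b(\delta_1(ne)) - p H_b(\min(\delta_1(\rme),\delta_2(\rme))).
\end{align*}
The final task is to maximize $(1-p) H_b(\delta_1(ne)) + p H_b(\min(\delta_1(\rme),\delta_2(\rme)))$ over the constraints $p\delta_i(\rme) + (1-p)\delta_i(ne) \leq D_i$. This is the main obstacle: the naive bound that sets $\delta_1(\rme)=0$ and uses $\delta_1(ne) \leq D_1/(1-p)$ is too loose because it ignores the $\min(\cdot,\cdot)$ coupling between the two decoders. A case split on whether $\delta_1(\rme)\geq\delta_2(\rme)$ or $\delta_1(\rme)<\delta_2(\rme)$, combined with monotonicity of $H_b'$ on $[0,1/2]$ and the region condition $D_2\leq p D_1$ (which ensures that the sign of the derivative along the active constraint $p\delta_1(\rme)+(1-p)\delta_1(ne)=D_1$ keeps pushing $\delta_1(\rme)$ up toward $D_2/p$), pins the maximizer at $\delta_1(\rme)=\delta_2(\rme)=D_2/p$, $\delta_1(ne)=(D_1-D_2)/(1-p)$ and $\delta_2(ne)=0$, yielding the matching lower bound.
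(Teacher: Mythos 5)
The paper does not contain a proof of this lemma; it simply cites Perron, Diggavi, and Telatar \cite{perron2006kaspi} and records the non-degenerate formula, so there is no in-text argument to compare against. Your self-contained achievability--converse derivation appears to be correct. The achievability channel (independent $\Ber(D_2/p)$ bit-flip on $\{Y=\rme\}$, $\Ber((D_1-D_2)/(1-p))$ bit-flip on $\{Y\neq\rme\}$, with $\hatX_2$ a deterministic function of $(Y,\hatX_1)$) meets both distortion budgets with equality, makes $\hatX_1$ uniform so $H(\hatX_1)=\log 2$, annihilates $I(X;\hatX_2|Y\hatX_1)$, and the region conditions $D_2\le pD_1\le p/2$ and $D_1-D_2\le(1-p)/2$ place both flip parameters in $[0,1/2]$ so the construction is well-posed. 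On the converse, the Fano bound $R_{\rmK}\ge\log 2-(1-p)H_b(\delta_1(\mathrm{ne}))-pH_b\bigl(\min(\delta_1(\rme),\delta_2(\rme))\bigr)$ is correct, and the maximization closes as you say: writing $\mu:=\min(\delta_1(\rme),\delta_2(\rme))\le D_2/p$, for fixed $\mu$ the adversary sets $\delta_1(\rme)=\mu$ to free up the $D_1$-budget, giving the reduced objective $(1-p)H_b\bigl((D_1-p\mu)/(1-p)\bigr)+pH_b(\mu)$ whose $\mu$-derivative is $p\bigl[H_b'(\mu)-H_b'\bigl((D_1-p\mu)/(1-p)\bigr)\bigr]\ge 0$ precisely when $\mu\le D_1$; since $\mu\le D_2/p\le D_1$ (this is where $D_2\le pD_1$ is used) the objective is increasing on the feasible interval, pinning the optimum at $\mu^*=D_2/p$, $\delta_1(\mathrm{ne})^*=(D_1-D_2)/(1-p)$, which matches the achievability point. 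Two small things to spell out if you write this up: the ``WLOG $\delta_i(\rme)\le1/2$'' step rests on $H_b(\delta)=H_b(1-\delta)$, so any $\delta>1/2$ can be replaced by $1-\delta$ without loosening the Fano bound or violating the distortion budget; and when $D_1-p\mu>(1-p)/2$ the $D_1$-budget constraint on $\delta_1(\mathrm{ne})$ is slack and $\delta_1(\mathrm{ne})$ saturates at $1/2$, but the objective is then $(1-p)\log 2+pH_b(\mu)$, still increasing in $\mu$, so the optimizer is unchanged.
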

Hence, for $(D_1,D_2)\in\calD_{\mathrm{bec}}$, using the definitions of $\lambda_1^*$ in \eqref{dualoptimal1} and $\lambda_2^*$ in \eqref{dualoptimal2}, we obtain
\begin{align}
\lambda_1^*&=\log\frac{(1-p)-(D_1-D_2)}{1-p}-\log\frac{D_1-D_2}{1-p}=\log\frac{(1-p)-(D_1-D_2)}{D_1-D_2}\label{call1*},\\
\lambda_2^*&=\log\frac{p-D_2}{p}+\log\frac{D_1-D_2}{1-p}-\log\frac{(1-p)-(D_1-D_2)}{1-p}-\log\frac{D_2}{p}\\
&=-\lambda_1^*+\log\frac{p-D_2}{D_2}\label{call2*}.
\end{align}
Then, using the definitions of $\alpha_2(\cdot)$ in \eqref{def:a24kaspi} and $\alpha(\cdot)$ in \eqref{def:a4kaspi}, we have
\begin{align}
\alpha_2(0,0,0)&=\alpha_2(0,0,1)=\alpha_2(1,1,0)=\alpha_2(1,1,1)=\alpha_2(0,\rme,0)=\alpha_2(1,\rme,1)=1,\\
\alpha_2(1,0,0)&=\alpha_2(1,0,1)=\alpha_2(0,1,0)=\alpha_2(0,1,1)=\alpha_2(1,\rme,0)=\alpha_2(0,\rme,1)=\exp(\lambda_2^*),
\end{align}
and
\begin{align}
\alpha(0,0)&=\alpha(1,1)=\frac{2}{1+\exp(-\lambda_1^*)},\\
\alpha(0,\rme)&=\alpha(1,\rme)=\frac{2}{1+\exp(-\lambda_1^*-\lambda_2^*)}.
\end{align}

It can be verified easily that \eqref{optcond14kaspi}, \eqref{optcond24kaspi}, \eqref{kaspipara} and \eqref{nu1le1} hold. In the following, we will verify that \eqref{nu2le1} holds for arbitrary $Q_{\hatX_2|Y\hatX_1}$ and $\hatx_1$. As a first step, we can verify that for any $(y,\hatx_1,\hatx_2)$, we have
\begin{align}
\sum_x P_{XY}(x,y)\alpha(x,y)\exp(-\lambda_1^*d_1(x,\hatx_1)-\lambda_2^*d_2(x,\hatx_2))
&\leq \sum_x P_{XY}(x,y)\frac{\alpha(x,y)}{\alpha_2(x,y,\hatx_1)}\exp(-\lambda_1^*d_1(x,\hatx_1))\label{verify1}.
\end{align}
Then, for any distribution $Q_{\hatX_2|Y\hatX_1}$, using the definitions of $\nu_1(\cdot)$ in \eqref{def:nu1} and $\nu_2(\cdot)$ in \eqref{def:nu2}, multiplying $Q_{\hatX_2|Y\hatX_1}(\hatx_2|y,\hatx_1)$ over both sides of \eqref{verify1}, and summing over $(y,\hatx_2)$, we obtain that
\begin{align}
\nu_2(\hatx_1,Q_{\hatX_2|Y\hatX_1})\leq \nu_1(\hatx_1)\leq 1\label{verify2},
\end{align}
where the second inequality in \eqref{verify2} follows from \eqref{nu1le1}.

Using the definition of $\jmath_{\rmK}(\cdot)$ in \eqref{def:kaspitilt}, we have
\begin{align}
\jmath_{\rmK}(0,0|D_1,D_2,P_{XY})
&=\jmath_{\rmK}(1,1|D_1,D_2,P_{XY})\\
&=\log \alpha(0,0)-\lambda_1^*D_1-\lambda_2^*D_2\label{j1cal4kaspi},
\end{align}
and
\begin{align}
\jmath_{\rmK}(0,\rme|D_1,D_2,P_{XY})
&=\jmath_{\rmK}(1,\rme|D_1,D_2,P_{XY})\\
&=\log \alpha(0,\rme)-\lambda_1^*D_1-\lambda_2^*D_2\label{j2cal4kaspi}.
\end{align}

Further, using the definition the distortion-dispersion function $\rmV(D_1,D_2|P_{XY})$ in \eqref{dispersion4kaspi}, we have
\begin{align}
\rmV(D_1,D_2|P_{XY})
&=\mathrm{Var}[\jmath_\rmK(X,Y|D_1,D_2,P_{XY})]\\
&=\mathrm{Var}[\log \alpha(X,Y)]\\
\nn&=(1-p)\Big(\log\alpha(0,0)-(1-p)\log\alpha(0,0)-p\log\alpha(0,\rme)\Big)^2\\
&\qquad+p\Big(\log\alpha(0,\rme)-(1-p)\log\alpha(0,0)-p\log\alpha(0,\rme)\Big)^2\\
&=[(1-p)p^2+p(1-p)^2]\Big(\log \alpha(0,0)-\log\alpha(0,\rme)\Big)^2\\
&=p(1-p)\Bigg(\log\frac{p-D_2}{p}-\log\frac{(1-p)-(D_1-D_2)}{1-p}\Bigg)^2.
\end{align}

Define $R_n(D_1,D_2,\epsilon):=R_{\rmK}(P_{XY},D_1,D_2)+\frac{L^*(D_1,D_2,\epsilon)}{\sqrt{n}}$. In order to illustrate out result, we plot $R_n(D_1,D_2,\epsilon)$ for $p=0.3$, $D_1=0.2$, $D_2=0.05$ and $\epsilon=0.05,0.5,0.95$ in Figure \ref{asymsecond}. Note that with this choice, $(D_1,D_2)\in\calD_{\mathrm{bec}}$ and $R_n(D_1,D_2,0.5)=R_{\rmK}(P_{XY},D_1,D_2)$.
\begin{figure}
\centering
\includegraphics[width=12cm]{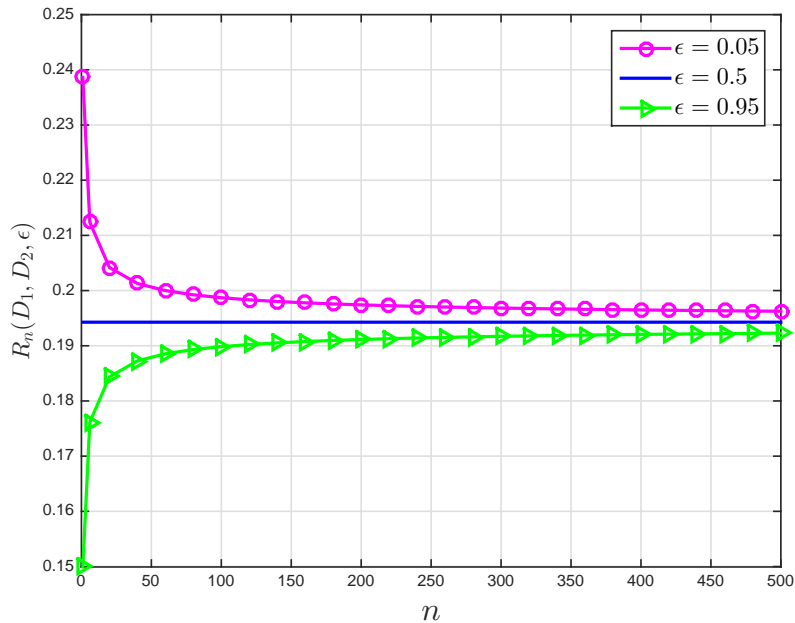}
\caption{Rate $R_n(D_1,D_2,\epsilon)$ when $p=0.3$, $D_1=0.2$ and $D_2=0.05$.}
\label{asymsecond}
\end{figure}

\subsubsection{Doubly Symmetric Binary Source (DSBS)}
In this example, we show that under certain distortion levels, the Kaspi rate-distortion function reduces to the rate-distortion function~\cite{shannon1959coding} (see also \cite[Theorem 3.5]{el2011network}) and the conditional rate-distortion function~\cite[Eq. (11.2)]{el2011network}. We consider the DSBS where $\calX=\calY=\{0,1\}$, $P_{XY}(0,0)=P_{XY}(1,1)=\frac{1-p}{2}$ and $P_{XY}(0,1)=P_{XY}(1,0)=\frac{p}{2}$ for some $p\in[0,\frac{1}{2}]$. 
\begin{lemma}
\label{kaspidsbs}
Depending on the distortion levels $(D_1,D_2)$, the Kaspi rate-distortion function for the DSBS with Hamming distortion measures satisfies
\begin{itemize}
\item $D_1\geq \frac{1}{2}$ and $D_2\geq p$
\begin{align}
R_{\rmK}(P_{XY},D_1,D_2)=0.
\end{align}
\item $D_1<\frac{1}{2}$ and $D_2\geq \min\{p,D_1\}$
\begin{align}
R_{\rmK}(P_{XY},D_1,D_2)=\log 2-H_b(D_1),
\end{align}
where $H_b(x)=-x\log x-(1-x)\log (1-x)$ is the binary entropy function.
\item $D_1\geq D_2+\frac{1-2p}{2}$ and $D_2<p$ 
\begin{align}
R_{\rmK}(P_{XY},D_1,D_2)&=H_b(p)-H_b(D_2).
\end{align}
\end{itemize}
\end{lemma}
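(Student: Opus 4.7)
My plan is to verify each of the three distortion regimes by matching converse and achievability bounds, exploiting the identity in \eqref{equivalent} to rewrite the Kaspi objective as $I(\hat X_1;Y)+I(X;\hat X_1\hat X_2|Y)$.

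Case 1 is immediate: $R_{\rmK}\ge 0$ trivially, and the reconstructions $\hat X_1\equiv 0$ (distortion $\tfrac12\le D_1$) and $\hat X_2=Y$ (distortion $p\le D_2$) achieve zero rate. For Case 2, the converse chain is $R_{\rmK}\ge I(XY;\hat X_1)\ge I(X;\hat X_1)\ge R_X(D_1)=\log 2-H_b(D_1)$, using the ordinary rate-distortion lower bound. For achievability I take $\hat X_1=X\oplus W$ with $W\sim \Ber(D_1)$ independent of $(X,Y)$, so that $\hat X_1-X-Y$ is Markov and $I(XY;\hat X_1)=\log 2-H_b(D_1)$; then I set $\hat X_2=\hat X_1$ when $D_1\le D_2$ or $\hat X_2=Y$ when $p\le D_2$, and either choice meets the $D_2$-constraint and makes $I(X;\hat X_2|Y\hat X_1)=0$.

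The heart of the proof is Case 3. The converse is $R_{\rmK}\ge I(X;\hat X_1\hat X_2|Y)\ge I(X;\hat X_2|Y)\ge R_{X|Y}(D_2)=H_b(p)-H_b(D_2)$, where the last equality is the standard conditional rate-distortion function of a DSBS. For achievability I would fix $\hat X_2=Y\oplus\hat Z$, with $\hat Z\sim\Ber(q)$, $q=(p-D_2)/(1-2D_2)$, and $Z=X\oplus Y=\hat Z\oplus V$, $V\sim\Ber(D_2)$ independent of $\hat Z$; this is the conditional-R-D optimal test channel, so $I(X;\hat X_2|Y)=H_b(p)-H_b(D_2)$ and $\mathbb{E}[d(X,\hat X_2)]=D_2$. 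I then let $\hat X_1$ be a stochastic function of $(Y,\hat Z)$ alone---which automatically yields the Markov chain $X-(Y,\hat X_2)-\hat X_1$, hence $I(X;\hat X_1|Y,\hat X_2)=0$---and further require $\hat X_1\perp Y$, so that $I(\hat X_1;Y)=0$. Under both conditions the total rate collapses to $H_b(p)-H_b(D_2)$.

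The hard part will be showing that such an $\hat X_1$ can simultaneously satisfy $\mathbb{E}[d(X,\hat X_1)]\le D_1$ exactly when $D_1\ge D_2+(1-2p)/2$. Parameterizing by $\alpha_{y,\hat z}=\Pr(\hat X_1=1\mid Y=y,\hat Z=\hat z)$ and using $X=y\oplus\hat z\oplus V$ conditional on $(Y,\hat Z)=(y,\hat z)$, the distortion $\mathbb{E}[d(X,\hat X_1)]$ becomes an affine functional of $(\alpha_{y,\hat z})_{y,\hat z\in\{0,1\}}$, while the marginal independence $\hat X_1\perp Y$ reduces to the single linear constraint $(1-q)(\alpha_{0,0}-\alpha_{1,0})+q(\alpha_{0,1}-\alpha_{1,1})=0$. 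Minimizing the distortion functional over $[0,1]^4$ subject to this constraint yields an infimum of exactly $D_2+(1-2p)/2$, attained at an explicit extreme point such as $\alpha_{0,0}=\alpha_{1,1}=0$, $\alpha_{0,1}=1$, $\alpha_{1,0}=q/(1-q)$. This matches the threshold in the hypothesis of Case 3, so feasible $\alpha$'s exist iff $D_1\ge D_2+(1-2p)/2$, closing the achievability and completing the proof.
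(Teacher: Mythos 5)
Your proof is correct, but the paper states Lemma~\ref{kaspidsbs} without any proof, so there is no argument in the paper to compare against; your case-by-case matching of a converse chain (via the identity $R_{\rmK}=I(\hatX_1;Y)+I(X;\hatX_1\hatX_2|Y)$) against an explicit achieving test channel is the natural route and all three cases check out.

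One point worth flagging, because it shows your argument does genuine work beyond what the paper's surrounding prose suggests: the paper's remark following the lemma asserts that for Case~3 ``under the optimal test channel, $\hatX_1=0/1$'' (i.e.\ a constant reconstruction), but a constant $\hatX_1$ meets the $D_1$-constraint only when $D_1\geq\tfrac12$, whereas Case~3 also covers $D_2+\tfrac{1-2p}{2}\leq D_1<\tfrac12$. Your construction, in which $\hatX_1$ is a non-trivial stochastic function of $(Y,\hatZ)$ constrained to be marginally independent of $Y$, is what closes this sub-regime. I verified the LP at the heart of it: writing $c=1-2D_2$ and $q=(p-D_2)/(1-2D_2)$, with $A=\alpha_{0,0}-\alpha_{1,0}$ and $B=\alpha_{0,1}-\alpha_{1,1}$ the distortion functional is $\tfrac12+\tfrac{c}{2}\bigl[(1-q)A-qB\bigr]$ and the independence constraint is $(1-q)A+qB=0$; substituting gives $\mathbb{E}[d_1(X,\hatX_1)]=\tfrac12-cqB$, maximizing $B=1$ is feasible precisely because $q\leq\tfrac12$ whenever $p\leq\tfrac12$, and the resulting minimum is $\tfrac12-cq=D_2+\tfrac{1-2p}{2}$, matching the threshold in the lemma. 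The converse for Case~3 via $R_{\rmK}\geq I(X;\hatX_2|Y)\geq H_b(p)-H_b(D_2)$ and the routine Cases~1--2 are also correct.
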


When $D_1<\frac{1}{2}$ and $D_2<pD_1$, the Kaspi rate-distortion function reduces to the rate-distortion function for the lossy source coding problem. Thus, the distortion-tilted information density for the Kaspi problem reduces to the $D_1$-tilted information density~\cite[Definition 1]{kostina2012converse}, i.e.,
\begin{align}
\jmath_{\rmK}(x,y|D_1,D_2,P_{XY})&=\log 2-H_b(D_1)\label{def:d1tilt}.
\end{align} 
Hence, $L^*(D_1,D_1|P_{XY})=0$. When $D_1\geq D_2+\frac{1-2p}{2}$ and $D_2<p$, the Kaspi rate-distortion function reduces to the conditional rate-distortion function. Under the optimal test channel, we have $\hatX_1=0/1$ and $X\to\hatX_2\to Y$ forms a Markov chain. In this case, the distortion-tilted information density for the Kaspi problem reduces to the conditional distortion-tilted information density~\cite[Definition 5]{kostina2012converse} (see also \cite{le2014second}), i.e., 
\begin{align}
\jmath_{\rmK}(x,y|D_1,D_2,P_{XY})
&=-\log P_{X|Y}(x|y)-H_b(D_2).
\end{align}
Hence,
\begin{align}
\rmV(D_1,D_2|P_{XY})
&=\mathrm{Var}[-\log P_{X|Y}(X|Y)]\\
&=(1-p)(-\log(1-p)-H_b(p))^2+p(-\log p-H_b(p))^2:=\rmV(p),
\end{align}
and
\begin{align}
L^*(D_1,D_2,\epsilon)=\sqrt{\rmV(p)}\rmQ^{-1}(\epsilon).
\end{align}

\section{Main Results for the Fu-Yeung Problem}
\label{sec:mainresults4fy}
\subsection{Preliminaries}

Throughout the section and its proofs, given any distributions $(P_{\hatX_1|X},P_{\hatX_2|X\hatX_1})$, let $P_{XY}$, $P_{X|Y}$, $P_{\hatX_1}$, $P_{Y\hatX_1}$, $P_{X\hatX_1}$, $P_{X\hatX_2}$, $P_{XY\hatX_1}$, $P_{\hatX_1|XY}$ and $P_{\hatX_2|Y\hatX_1}$ be induced by $P_X$, $P_{\hatX_1|X}$, $P_{\hatX_2|X\hatX_1}$ and the deterministic function $g:\calX\to\calY$. Recalling the definition of $\calP(P_X,D_1,D_2)$ above Theorem \ref{mdfirst}. For subsequent analyses, we need the following definition
\begin{align}
\rvR_{\rm{FY}}(R_1,D_1,D_2|P_X)&:=\min_{\substack{(P_{\hatX_1|X},P_{\hatX_2|X\hatX_1})\\\in\calP(P_X,D_1,D_2):\\R_1\geq I(X;\hatX_1)}} I(\hatX_1;Y)+I(X;\hatX_1\hatX_2|Y)\label{rvrmin}.
\end{align} 
Invoking Theorem \ref{firstorder}, we conclude that given $R_1$, the minimum achievable sum rate is given by $\rvR_{\rm{FY}}(R_1,D_1,D_2|P_X)+H(P_Y)$. Hence, any optimal test channel for $\rvR_{\rm{FY}}(R_1,D_1,D_2|P_X)$ is also an optimal test channel for the minimum sum rate function in the Fu-Yeung problem. From Theorem \ref{mdfirst}, we also conclude that the minimum rate for encoder $f_1$ is the rate-distortion function $R(P_X,D_1)$~\cite{cover2012elements}, i.e.,
\begin{align}
R(P_X,D_1)=\min_{P_{\hatX_1|X}:\mathbb{E}_{P_X\times P_{\hatX_1|X}}[d_1(X,\hatX_1)\leq D_1]}I(X;\hatX_1)\label{def:ratedisd1},
\end{align}
and the minimum rate for encoder $f_2$ is the entropy $H(P_Y)$. With these three observations regarding the minimum rates and minimum sum-rates, we obtain boundary rate pairs for second-order asymptotics in Theorem \ref{mddsecondregion}.

Note that \eqref{rvrmin} is a convex optimization problem. Let $(s^*,t_1^*,t_2^*)$ be the optimal solutions to the dual problem, i.e.,
\begin{align}
s^*:=-\frac{\partial \rvR_{\mathrm{FY}}(R,D_1,D_2|P_X)}{\partial R}\Bigg|_{R=R_1}\label{def:s*},\\
t_1^*:=-\frac{\partial \rvR_{\mathrm{FY}}(R_1,D,D_2|P_X)}{\partial D}\Bigg|_{D=D_1}\label{def:t1*},\\
t_2^*:=-\frac{\partial \rvR_{\mathrm{FY}}(R_1,D_1,D|P_X)}{\partial D}\Bigg|_{D=D_2}\label{def:t2*}.
\end{align}

Given distributions $(Q_{\hatX_1},Q_{\hatX_2|Y\hatX_1})$ and $(x,y,\hatx_1)$, define
\begin{align}
\beta_2(x,y,\hatx_1|Q_{\hatX_2|Y\hatX_1})
&:=\Big\{\mathbb{E}_{Q_{\hatX_2|Y\hatX_1}}\Big[\exp(-t_2^*d_2(x,\hatX_2))\big|Y=y,\hatX_1=\hatx_1\Big]\Big\}^{-1},\label{def:b2q}\\
\beta(x,y|Q_{\hatX_1},Q_{\hatX_2|Y\hatX_1})&:=\Bigg\{\mathbb{E}_{Q_{\hatX_1}}\Bigg[\exp\Bigg(-\frac{t_1^*d_1(x,\hatX_1)+\log\beta_2(x,y,\hatX_1|Q_{\hatX_2|Y\hatX_1})}{1+s^*}\Bigg)\Bigg]\Bigg\}^{-1}\label{def:bq}.
\end{align}

Define the bivariate generalization of the Gaussian cdf as follows:
\begin{align}
\Psi(x,y,\bmu,\mathbf{\Sigma})&:=\int_{-\infty}^{x}\int_{-\infty}^{y}\calN(\bx; \bmu;\bSigma)\, \rmd \bx.
\end{align}
Here, $\calN(\bx; \bmu;\bSigma)$ is the pdf of a bivariate Gaussian with mean $\bmu$ and covariance matrix $\bSigma$~\cite[Chapter 1]{Tanbook}. 


\subsection{Optimal Test Channels}
We first present the properties of the optimal test channels achieving \eqref{rvrmin}.
\begin{lemma}
\label{proptest4fy}
A pair of test channels $(P^*_{\hatX_1|X},P_{\hatX_2|X\hatX_1}^*)$ achieves $\rvR_{\rm{FY}}(R_1,D_1,D_2|P_X)$ if and only if
\begin{itemize}
\item For all $(x,y,\hatx_1,\hatx_2)$ such that $y=g(x)$,
\begin{align}
P_{\hatX_1|X}^*(\hatx_1|x)&=\beta(x,y|P_{\hatX_1}^*,P_{\hatX_2|Y\hatX_1}^*)P^*_{\hatX_1}(\hatx_1)\exp\Bigg(-\frac{t_1^*d_1(x,\hatx_1)+\log\beta_2(x,y,\hatx_1|P^*_{\hatX_2|Y\hatX_1})}{1+s^*}\Bigg)\label{optcond2},
\end{align}
\item For all $(x,y,\hatx_1,\hatx_2)$ such that $y=g(x)$ and $P_{\hatX_1|X}^*(\hatx_1|x)>0$
\begin{align}
P_{\hatX_2|X\hatX_1}^*(\hatx_2|x,\hatx_1)&=\beta_2(x,y,\hatx_1|P_{\hatX_2|Y\hatX_1}^*)P^*_{\hatX_2|Y\hatX_1}(\hatx_2|y,\hatx_1)\exp(-t_2^*d_2(x,\hatx_2))\label{optcond1}.
\end{align}
\item For all $(x,\hatx_1,\hatx_2)$ such that $P_{\hatX_1|X}^*(\hatx_1|x)=0$, $P_{\hatX_2|X\hatX_1}^*(\cdot|x,\hatx_1)$ can be arbitrary distribution.
\end{itemize} 
Further, if a pair of channels $(P^*_{\hatX_1|X},P_{\hatX_2|X\hatX_1}^*)$ achieves $\rvR_{\rm{FY}}(R_1,D_1,D_2)$, then we have
\begin{itemize}
\item The parametric representation of $\rvR_{\rm{FY}}(R_1,D_1,D_2|P_X)$ is
\begin{align}
\rvR_{\rm{FY}}(R_1,D_1,D_2|P_X)
&=(1+s^*)\mathbb{E}_{P_{XY}}[\log\beta(X,Y|P_{\hatX_1}^*,P_{\hatX_2|Y\hatX_1^*})]-s^*R_1-t_1^*D_1-t_2^*D_2\label{para4fy},
\end{align}
\item For $(x,y,\hatx_1,\hatx_2)$ such that $y=g(x)$ and $P_{\hatX_1}^*(\hatx_1)P_{\hatX_2|Y\hatX_1}^*(\hatx_2|g(x),\hatx_1)>0$, we have
\begin{align}
\nn&(1+s^*)\log \beta(x,y|P_{\hatX_1}^*,P_{\hatX_2|Y\hatX_1}^*)\\
&=(1+s^*)\log\frac{P_{\hatX_1|X}^*(\hatx_1|x)}{P_{\hatX_1}^*(\hatx_1)}+t_1^*d_1(x,\hatx_1)+\log\frac{P_{\hatX_2|X\hatX_1}^*(\hatx_2|x,\hatx_1)}{P_{\hatX_2|Y\hatX_1}^*(\hatx_2|y,\hatx_1)}+t_2^*d_2(x,\hatx_2)\label{expantypical}.
\end{align}
\end{itemize}
\end{lemma}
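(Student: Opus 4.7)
My plan is to establish Lemma \ref{proptest4fy} by a Lagrangian/KKT analysis of the convex program in \eqref{rvrmin}, mirroring the argument that underlies Lemma \ref{opttest4kaspi}. Because $Y=g(X)$ is deterministic, $I(XY;\hatX_1)=I(X;\hatX_1)$ and the identity \eqref{equivalent} gives
\begin{align*}
I(\hatX_1;Y)+I(X;\hatX_1\hatX_2|Y)=I(X;\hatX_1)+I(X;\hatX_2|Y\hatX_1).
\end{align*}
Treating $(P_{\hatX_1|X},P_{\hatX_2|X\hatX_1})$ as the free primal variables, the reformulated objective is convex in these variables, the distortion constraints are affine, and $I(X;\hatX_1)\le R_1$ is convex, so the program is convex with strong duality. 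Introducing dual variables $s^*,t_1^*,t_2^*\ge 0$ as in \eqref{def:s*}--\eqref{def:t2*}, the Lagrangian reads
\begin{align*}
L=(1+s^*)I(X;\hatX_1)+I(X;\hatX_2|Y\hatX_1)+t_1^*\mathbb{E}[d_1]+t_2^*\mathbb{E}[d_2]-s^*R_1-t_1^*D_1-t_2^*D_2.
\end{align*}
The $1+s^*$ coefficient on $I(X;\hatX_1)$---which appears in the denominator of the exponent in \eqref{optcond2}---arises from adding the $I(X;\hatX_1)$ already present in the reformulated objective to the $s^*I(X;\hatX_1)$ coming from the rate constraint.

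Next I would derive the stationarity conditions in two stages. Differentiating $L$ with respect to $P_{\hatX_2|X\hatX_1}(\hatx_2|x,\hatx_1)$, the direct contribution and the indirect contribution through the induced $P_{\hatX_2|Y\hatX_1}$ combine via a standard log-sum cancellation (as in \cite[Lemma~1.4]{csiszar1974}) to yield, for each $(x,\hatx_1)$ with $P^*_{X\hatX_1}(x,\hatx_1)>0$,
\begin{align*}
\log\frac{P^*_{\hatX_2|X\hatX_1}(\hatx_2|x,\hatx_1)}{P^*_{\hatX_2|Y\hatX_1}(\hatx_2|g(x),\hatx_1)}+t_2^*d_2(x,\hatx_2)=\text{const in }\hatx_2,
\end{align*}
which rearranges to \eqref{optcond1} once $\beta_2$ is identified via the normalization $\sum_{\hatx_2}P^*_{\hatX_2|X\hatX_1}(\hatx_2|x,\hatx_1)=1$, matching \eqref{def:b2q}. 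Substituting this back reduces $I(X;\hatX_2|Y\hatX_1)+t_2^*\mathbb{E}[d_2]$ to $\mathbb{E}_{P_{XY\hatX_1}}[\log\beta_2(X,Y,\hatX_1)]$, leaving a reduced Lagrangian in $P_{\hatX_1|X}$ alone. Differentiating this reduced Lagrangian in $P_{\hatX_1|X}(\hatx_1|x)$, with the analogous cancellation handling the residual implicit dependence of $P_{\hatX_2|Y\hatX_1}$ on $P_{\hatX_1|X}$, I obtain
\begin{align*}
(1+s^*)\log\frac{P^*_{\hatX_1|X}(\hatx_1|x)}{P^*_{\hatX_1}(\hatx_1)}+t_1^*d_1(x,\hatx_1)+\log\beta_2(x,g(x),\hatx_1)=\text{const in }\hatx_1,
\end{align*}
which rearranges to \eqref{optcond2} after identifying $\beta$ via normalization as in \eqref{def:bq}. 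Cases with $P^*_{\hatX_1|X}(\hatx_1|x)=0$ leave $P^*_{\hatX_2|X\hatX_1}(\cdot|x,\hatx_1)$ unconstrained, matching the third bullet of the lemma.

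Finally, the pointwise identity \eqref{expantypical} and the parametric representation \eqref{para4fy} drop out from \eqref{optcond1} and \eqref{optcond2}. Taking the logarithm of \eqref{optcond2}, multiplying through by $1+s^*$, and using \eqref{optcond1} to replace $\log\beta_2(x,y,\hatx_1)$ by $\log\frac{P^*_{\hatX_2|X\hatX_1}(\hatx_2|x,\hatx_1)}{P^*_{\hatX_2|Y\hatX_1}(\hatx_2|y,\hatx_1)}+t_2^*d_2(x,\hatx_2)$ produces \eqref{expantypical}. Taking the $P^*_{XY\hatX_1\hatX_2}$-expectation of \eqref{expantypical} and invoking complementary slackness ($I(X;\hatX_1)^*=R_1$, $\mathbb{E}[d_i]^*=D_i$) together with $\rvR_{\rm{FY}}(R_1,D_1,D_2|P_X)=I(X;\hatX_1)^*+I(X;\hatX_2|Y\hatX_1)^*$ rearranges to \eqref{para4fy}. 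The ``if'' direction is then immediate: convexity of the program implies that any primal-feasible pair satisfying \eqref{optcond2}--\eqref{optcond1} with valid dual variables is globally optimal. The main technical obstacle is the chained dependence of the induced $P_{\hatX_2|Y\hatX_1}$ on both primal variables, which I expect to handle by precisely the two-stage optimization just described, paralleling the proofs of \cite[Lemma~1]{watanabe2015second}, \cite[Lemma~3]{zhou2016second}, and Lemma \ref{opttest4kaspi}.
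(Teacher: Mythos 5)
Your proposal is correct in outline and reaches the same equality conditions, but it routes through a KKT/stationarity calculation, whereas the paper's proof in Appendix~D avoids differentiation entirely. The paper introduces \emph{free auxiliary distributions} $(Q_{\hatX_1},Q_{\hatX_2|Y\hatX_1})$ in the Lagrangian $F$ of \eqref{def:Fppqqstt}, notes that the mutual-information terms are recovered by taking the infimum over these $Q$'s, and then lower-bounds $F$ by two successive applications of the log-sum inequality---first over $\hatx_2$ to produce $\log\beta_2$ in \eqref{log-sum}, then over $\hatx_1$ to produce $\log\beta$ in \eqref{uselogsumagain}. Because the $Q$'s are decoupled free variables, the second application does not have to track how the induced marginal $P_{\hatX_2|Y\hatX_1}$ moves when $P_{\hatX_1|X}$ changes; the equality cases of the two log-sum inequalities give \eqref{condcond1}-style conditions directly. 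Your plan instead differentiates the Lagrangian twice, and the crux is exactly the ``residual implicit dependence of $P_{\hatX_2|Y\hatX_1}$ on $P_{\hatX_1|X}$'' that you flag but do not resolve: after substituting the optimal $P^*_{\hatX_2|X\hatX_1}$, the quantity $\beta_2$ still depends on $P_{\hatX_1|X}$ through the induced $P_{\hatX_2|Y\hatX_1}$, and dismissing this contribution requires an envelope-theorem argument (the derivative of an inner minimum with respect to an outer variable picks up only the explicit partial) combined with a second marginal-cancellation. That step is doable, and the overall KKT route is a legitimate alternative that many readers will find more familiar, but as written your proposal hand-waves past precisely the point that the paper's auxiliary-$Q$ construction is designed to eliminate. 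Fleshing out the envelope/cancellation argument at the second stage would close the gap. Your derivations of \eqref{expantypical} from \eqref{optcond1}--\eqref{optcond2} and of \eqref{para4fy} via complementary slackness, and the sufficiency direction via convexity of the program, match the paper's conclusions.
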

The proof of Lemma \ref{proptest4fy} is similar to \cite[Lemma 1.4]{csiszar1974}, \cite[Lemma 3]{watanabe2015second}, \cite[Lemma 3]{zhou2016second} and given in Appendix \ref{proofproptest4fy} for completeness.

Now fix a pair of optimal test channels $(P_{\hatX_1}^*,P_{\hatX_2|Y\hatX_1}^*)$. Given $(x,y,\hatx_1)$, for simplicity, define
\begin{align}
\beta_2(x,y,\hatx_1)
&:=\beta_2(x,y,\hatx_1|P_{\hatX_2|Y\hatX_1}^*),\label{def:b2}\\
\beta(x,y)&:=\beta(x,y|P_{\hatX_1}^*,P_{\hatX_2|Y\hatX_1}^*)\label{def:b}\\\imath_1(x,y,\hatx_1)&=\log \beta(x,y)-\frac{1}{1+s^*}\log\beta_2(x,y,\hatx_1)\label{def:i14fy}\\
\imath_2(x,y,\hatx_1)&:=\log \beta(x,y)+\frac{s^*}{1+s^*}\log \beta_2(x,y,\hatx_1)\label{def:i24fy}.
\end{align}
Further, given any $\hatx_1$ and arbitrary conditional distribution $Q_{\hatX_2|Y\hatX_1}$, define
\begin{align}
w_1(\hatx_1)
&:=\mathbb{E}_{P_{XY}}\bigg[\exp\bigg(\imath_1(X,Y,\hatx_1)-\frac{t_1^*d_1(X,\hatx_1)}{1+s^*}\bigg)\bigg]\label{def:w1},\\
w_2(\hatx_1,Q_{\hatX_2|Y\hatX_1})&:=\mathbb{E}_{P_{XY}\times Q_{\hatX_2|Y\hatX_1}}\bigg[\exp\big(\imath_2(X,Y,\hatx_1)-\frac{t_1^*}{1+s^*}d_1(X,\hatx_1)-t_2^*d_2(X,\hatX_2)\big)\Big|\hatX_1=\hatx_1\bigg].\label{def:w2}
\end{align}
Similarly as \cite{watanabe2015second}, we can show that, for any pair of optimal test channels $(P_{\hatX_1|X}^*,P_{\hatX_2|X\hatX_1}^*)$, the value of $\beta(x,y)$ remains unchanged and hence it is well defined. From now on, fix a pair of test channels $(P^*_{\hatX_1|X},P_{\hatX_2|X\hatX_1}^*)$ such that that i) \eqref{optcond2}, \eqref{optcond1} hold; ii) for any $(y,\hatx_1)$ such that $P_{Y\hatX_1}^*(y,\hatx_1)=0$, the induced distribution defined as $P_{\hatX_2|Y\hatX_1}^*(\hatx_2|y,\hatx_1):=\sum_x P_X(x)1\{y=g(x)\}P_{\hatX_2|X\hatX_1}^*(\hatx_2|x,\hatx_1)$ satisfies
\begin{align}
P_{\hatX_2|Y\hatX_1}^*=\argsup_{Q_{\hatX_2|Y\hatX_1}} \mathbb{E}_{P_{X|y}}\bigg[\beta(X,y)\beta_2^{-\frac{1}{1+s^*}}(X,y,\hatx_1|Q_{\hatX_2|Y\hatX_1})\exp\Big(-\frac{t_1^*}{1+s^*}d_1(X,\hatx_1)\Big)\bigg]\label{chooseopt}.
\end{align}
We remark that the choice of $P_{\hatX_2|X\hatX_1}^*$ satisfying \eqref{chooseopt} is possible due to the fact that sets $\{x:g(x)=y\}$ is disjoint for each $y\in\calY$.

In the following, we present an important property of the quantities in \eqref{def:w1} and \eqref{def:w2}.
\begin{lemma}
\label{wleq1}
For the given pair of test channel $(P^*_{\hatX_1|XY},P_{\hatX_2|X\hatX_1}^*)$ satisfying \eqref{optcond2}, \eqref{optcond1}, and \eqref{chooseopt}, we have that for any $\hatx_1\in\hat{\calX}_1$ and arbitrary distribution $Q_{\hatX_2|Y\hatX_1}$, 
\begin{align}
w_2(\hatx_1,Q_{\hatX_2|Y\hatX_1})\leq w_1(\hatx_1)\leq 1\label{w2leqw1le1}.
\end{align}
\end{lemma}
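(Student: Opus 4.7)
The statement decomposes naturally into two inequalities, $w_2(\hatx_1, Q_{\hatX_2|Y\hatX_1}) \leq w_1(\hatx_1)$ and $w_1(\hatx_1) \leq 1$, and I would handle each by exploiting a distinct optimality condition among those assembled in Lemma \ref{proptest4fy} and \eqref{chooseopt}.

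For the step $w_2(\hatx_1, Q_{\hatX_2|Y\hatX_1}) \leq w_1(\hatx_1)$, I would first observe the identities
\begin{align*}
w_1(\hatx_1) &= \mathbb{E}_{P_Y}[g(Y)],\\
w_2(\hatx_1, Q_{\hatX_2|Y\hatX_1}) &= \mathbb{E}_{P_Y}\Big[\textstyle\sum_{\hatx_2} Q_{\hatX_2|Y\hatX_1}(\hatx_2|Y,\hatx_1)\, h(Y,\hatx_2)\Big],
\end{align*}
where $g(y) := \mathbb{E}_{P_{X|y}}[\beta(X,y)\,\beta_2(X,y,\hatx_1)^{-1/(1+s^*)}\exp(-t_1^*d_1(X,\hatx_1)/(1+s^*))]$ and $h(y,\hatx_2) := \mathbb{E}_{P_{X|y}}[\beta(X,y)\,\beta_2(X,y,\hatx_1)^{s^*/(1+s^*)}\exp(-t_1^*d_1(X,\hatx_1)/(1+s^*) - t_2^*d_2(X,\hatx_2))]$. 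The fact that $\beta_2(x,y,\hatx_1) = \beta_2(x,y,\hatx_1|P^*_{\hatX_2|Y\hatX_1})$ yields the algebraic identity $g(y) = \sum_{\hatx_2} P^*_{\hatX_2|Y\hatX_1}(\hatx_2|y,\hatx_1)\, h(y,\hatx_2)$. Next, the functional $F_y(Q)$ inside \eqref{chooseopt} is concave in $Q(\cdot|y,\hatx_1)$ (composition of $q \mapsto q^{1/(1+s^*)}$ with a linear functional, integrated against $P_{X|y}$), and condition \eqref{chooseopt} states that $P^*_{\hatX_2|Y\hatX_1}(\cdot|y,\hatx_1)$ maximizes $F_y$ on the simplex. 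Direct computation gives $\partial F_y/\partial Q(\hatx_2|y,\hatx_1)\big|_{Q = P^*} = h(y,\hatx_2)/(1+s^*)$, so the KKT stationarity for the concave maximum (with Lagrange multiplier $g(y)/(1+s^*)$) produces $h(y,\hatx_2) \leq g(y)$ for every $\hatx_2$. Integrating this pointwise inequality against $P_Y \times Q_{\hatX_2|Y\hatX_1}$ delivers $w_2(\hatx_1, Q_{\hatX_2|Y\hatX_1}) \leq w_1(\hatx_1)$.

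For $w_1(\hatx_1) \leq 1$, I would split by whether $P^*_{\hatX_1}(\hatx_1) = 0$. When $P^*_{\hatX_1}(\hatx_1) > 0$, multiplying both sides of \eqref{optcond2} by $P_X(x)$ and summing over $x$ (using $Y = g(X)$) yields $P^*_{\hatX_1}(\hatx_1) = P^*_{\hatX_1}(\hatx_1)\,w_1(\hatx_1)$, hence $w_1(\hatx_1) = 1$. When $P^*_{\hatX_1}(\hatx_1) = 0$, I would establish the variational bound
\[
\rvR_{\rm{FY}}(R_1,D_1,D_2|P_X) + s^*R_1 + t_1^*D_1 + t_2^*D_2 \leq (1+s^*)\,\mathbb{E}_{P_{XY}}[\log\tilde{\beta}(X,Y|Q_{\hatX_1})]
\]
for every distribution $Q_{\hatX_1}$, where $\tilde{\beta}(x,y|Q)^{-1} := \mathbb{E}_Q[\exp(-(t_1^*d_1(x,\hatX_1) + \log\beta_2(x,y,\hatX_1))/(1+s^*))]$. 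This follows from two applications of the Gibbs inequality: the first minimizes over $P_{\hatX_2|X\hatX_1}$ with reference $P^*_{\hatX_2|Y\hatX_1}$, producing the $\log\beta_2$ term; the second minimizes over $P_{\hatX_1|X}$ with reference $Q_{\hatX_1}$ (with an extra factor $(1+s^*)$ carried throughout). The parametric representation \eqref{para4fy} shows equality holds at $Q_{\hatX_1} = P^*_{\hatX_1}$, so $P^*_{\hatX_1}$ is a minimizer of the right-hand side. Plugging in the admissible perturbation $Q_\epsilon := (1-\epsilon)P^*_{\hatX_1} + \epsilon\,\delta_{\hatx_1}$ and computing
\[
0 \leq \frac{d}{d\epsilon}\bigg|_{\epsilon=0} \mathbb{E}_{P_{XY}}[\log\tilde{\beta}(X,Y|Q_\epsilon)] = 1 - w_1(\hatx_1)
\]
then produces the claimed bound.

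The main obstacle I expect is the second case of $w_1(\hatx_1) \leq 1$: the two Gibbs steps must be ordered and signed so that $P^*_{\hatX_1}$ emerges as a minimizer (not a maximizer) of $\mathbb{E}_{P_{XY}}[\log\tilde{\beta}(X,Y|\cdot)]$, since the opposite convention would flip the sign in the first-order condition and produce the reverse inequality. The extra $(1+s^*)$ factor arising from the $I(X;\hatX_1)$ constraint in \eqref{rvrmin} must also be tracked consistently through both Gibbs applications, in parallel with (but differently from) the proof of Lemma \ref{nule1} for the Kaspi setting.
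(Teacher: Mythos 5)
Your proposal follows the same general strategy as the paper — perturbation arguments and first-order optimality conditions — and your introduction of $g(y)$ and $h(y,\hatx_2)$ together with the identity $g(y)=\sum_{\hatx_2}P^*_{\hatX_2|Y\hatX_1}(\hatx_2|y,\hatx_1)h(y,\hatx_2)$ is a clean way to organize the argument. The step $w_1(\hatx_1)=1$ for supported $\hatx_1$ and the perturbation argument for $w_1(\hatx_1)\le 1$ for unsupported $\hatx_1$ both match the paper's Appendix~\ref{proofwleq1}, and your variational bound (which you derive via two log-sum/Gibbs applications, as in \eqref{log-sum} and \eqref{uselogsumagain}) is correct.

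There is, however, a gap in the $w_2(\hatx_1,Q_{\hatX_2|Y\hatX_1})\le w_1(\hatx_1)$ step. You state that ``condition \eqref{chooseopt} states that $P^*_{\hatX_2|Y\hatX_1}(\cdot|y,\hatx_1)$ maximizes $F_y$ on the simplex,'' and use this uniformly. But the paper stipulates \eqref{chooseopt} \emph{only} for those $(y,\hatx_1)$ with $P^*_{Y\hatX_1}(y,\hatx_1)=0$; for $(y,\hatx_1)$ with $P^*_{Y\hatX_1}(y,\hatx_1)>0$, $P^*_{\hatX_2|Y\hatX_1}(\cdot|y,\hatx_1)$ is the conditional induced by the optimal test channel and is not \emph{assumed} to maximize $F_y$. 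For the discrete bounded-distortion setting with $P_Y(y)>0$ everywhere, the two regimes correspond exactly to $P^*_{\hatX_1}(\hatx_1)=0$ versus $P^*_{\hatX_1}(\hatx_1)>0$. Your KKT argument delivers $h(y,\hatx_2)\le g(y)$ only in the first regime, which is precisely where $w_1(\hatx_1)$ might be strictly below $1$; but you also need the pointwise bound in the second regime, where $w_1(\hatx_1)=1$ and yet $w_2(\hatx_1,Q_{\hatX_2|Y\hatX_1})\le 1$ is nontrivial.

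The paper closes this by a separate optimality property for $P^*_{\hatX_1}(\hatx_1)>0$: via the log-sum inequality, $P^*_{\hatX_2|Y\hatX_1}(\cdot|y,\hatx_1)$ minimizes $Q\mapsto\mathbb{E}_{P^*_{X|y,\hatx_1}}[\log\beta_2(X,y,\hatx_1|Q)]$ (equations \eqref{optb2}--\eqref{optqhatx2gyhatx14fy}), and the perturbation $\barQ=(1-\epsilon)P^*+\epsilon\delta_b$ at $\epsilon=0$ gives $\mathbb{E}_{P^*_{X|y,\hatx_1}}[\beta_2(X,y,\hatx_1)\exp(-t_2^*d_2(X,b))]\le 1$, which after using \eqref{optcond2} to expand $P^*_{X|Y\hatX_1}$ becomes exactly $h(y,b)\le g(y)$. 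Alternatively, you could observe that this first-order condition is equivalent to the one you want for $F_y$ (both reduce to $h\le g$), and then concavity of $F_y$ makes it sufficient, so $P^*$ does maximize $F_y$ in this regime too — but that observation still requires the separate argument as a lemma. Either fix is routine, but as written your proof leans on \eqref{chooseopt} in a regime where it is not given.
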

The proof of Lemma \ref{wleq1} is inspired by \cite[Lemma 5]{tuncel2003comp}, \cite{kostina2017suc} and available in Appendix \ref{proofwleq1}. We remark that Lemma \ref{proptest4fy} and Lemma \ref{wleq1} hold for abstract source, not restricted to DMSes. For simplicity, we only prove the results in Lemma \ref{proptest4fy} and Lemma \ref{wleq1} for DMSes. However, it can be extended easily to continuous sources by replacing the log-sum inequality with its continuous version~\cite[Lemma 2.14]{donsker1975asymptotic}. As we shall show later in Lemma \ref{oneshotconverse4fy}, the result in Lemma \ref{wleq1} leads to a non-asymptotic converse bound for the Fu-Yeung problem.

\subsection{Rate-Distortion-Tilted Information Density and Non-Asymptotic Converse Bound}
Recall that $P_{XY}$ is induced by $P_X$ and the deterministic function $g:\calX\to\calY$. Given $(R_1,D_1,D_2)$, using the definition of $\beta(\cdot)$ in \eqref{def:b}, we define the rate-distortion-tilted information density as follows:
\begin{align}
\jmath_{\rm{FY}}(x,y|R_1,D_1,D_2,P_X):=(1+s^*)\log \beta(x,y)-s^*R_1-t_1^*D_1-t_2^*D_2\label{def:j4fy}.
\end{align}
The properties of $\jmath_{\rm{FY}}(x,y|R_1,D_1,D_2,P_X)$ follows from Lemma \ref{proptest4fy}. For example, invoking \eqref{para4fy}, we obtain that
\begin{align}
\rvR_{\rm{FY}}(R_1,D_1,D_2|P_X)
&=\mathbb{E}_{P_{XY}}[\jmath_{\rm{FY}}(X,Y|R_1,D_1,D_2,P_X)]\\
&=\mathbb{E}_{P_X}[\jmath_{\rm{FY}}(X,g(X)|R_1,D_1,D_2,P_X)].
\end{align}
Let $\jmath(x,D_1|P_X)$ be the $D_1$-tilted information density~\cite{kostina2012fixed}, i.e.,
\begin{align}
\jmath(x,D_1|P_X):=-\log \Big(\sum_{\hatx_1}P_{\hatX_1}^*(\hatx_1)\exp(-t^*(d_1(x,\hatx_1)-D_1))\Big)\label{djkostina},
\end{align}
where $P_{\hatX_1}^*$ is induced by the source distribution $P_X$ and the optimal test channel $P_{\hatX_1|X}^*$ for the rate-distortion function $R(P_X,D_1)$ (cf. \eqref{def:ratedisd1}) and $t^*=-\frac{\partial R(P_X,D)}{\partial D}|_{D=D_1}$.

Define the joint error and excess-distortion probability for the Fu-Yeung problem as
\begin{align}
\varepsilon_n^{\rm{FY}}(D_1,D_2):=\Pr(d_1(X^n,\hatX_1^n)>D_1~\mathrm{or}~d_2(X^n,\hatX_2^n)>D_2~\mathrm{or}~\hatY^n\neq Y^n)\label{def:error4fy}.
\end{align}
In the following, we will present a non-asymptotic converse bound for the Fu-Yeung problem. Given any $\gamma\geq 0$, define the following sets:
\begin{align}
\calA_1&:=\{(x,y):\jmath(x,D_1|P_X)\geq \log M_1+\gamma\},\\
\calA_2&:=\{(x,y):-\log P_Y(y)\geq \log M_2+\gamma\},\\
\calA_3&:=\{(x,y):\jmath_{\rm{FY}}(x,y|R_1,D_1,D_2,P_X)\geq \log M_1M_2+s^*\log M_1+(1+s^*)\gamma\}.
\end{align}
\begin{lemma}
\label{oneshotconverse4fy}
Any $(1,M_1,M_2)$-code for the Fu-Yeung problem satisfies that for any $\gamma\geq 0$,
\begin{align}
\varepsilon_1^{\rm{FY}}(D_1,D_2)
&\geq \Pr((X,Y)\in(\calA_1\cup\calA_2\cup\calA_3))-4\exp(-\gamma).
\end{align}
\end{lemma}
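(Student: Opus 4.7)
\textbf{Proof Proposal for Lemma \ref{oneshotconverse4fy}.}

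The plan is to decompose the success event over the three decoders via the three atypicality subevents $\calA_1,\calA_2,\calA_3$, and bound the joint probability of ``success and atypical'' separately for each $\calA_i$ in the spirit of Lemma \ref{oneshotconverse4kaspi}. Concretely, let $S_1=f_1(X)$, $S_2=f_2(X)$, $\hatX_1=\phi_1(S_1)$, $\hatX_2=\phi_2(S_1,S_2)$, $\hatY=\phi_3(S_2)$, and define the success event
\begin{equation*}
\calS:=\{d_1(X,\hatX_1)\le D_1,\ d_2(X,\hatX_2)\le D_2,\ \hatY=Y\}.
\end{equation*}
Then $1-\varepsilon_1^{\rm FY}(D_1,D_2)=\Pr(\calS)$, and by splitting on $\calA_1\cup\calA_2\cup\calA_3$ and applying the union bound,
\begin{equation*}
\Pr(\calS)\le \Pr\bigl((\calA_1\cup\calA_2\cup\calA_3)^\rmc\bigr)+\Pr(\calS\cap\calA_1)+\Pr(\calS\cap\calA_2)+\Pr(\calS\cap\calA_3).
\end{equation*}
It therefore suffices to prove $\Pr(\calS\cap\calA_1)\le\exp(-\gamma)$, $\Pr(\calS\cap\calA_2)\le\exp(-\gamma)$, and $\Pr(\calS\cap\calA_3)\le 2\exp(-\gamma)$.

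The first two bounds reduce to known one-shot converses. The pair $(f_1,\phi_1)$ constitutes a standalone one-shot lossy source code of size $M_1$ for $X$ under distortion $D_1$, since $\hatX_1$ depends on the source only through $S_1\in[M_1]$. Hence Kostina and Verd\'u's one-shot converse~\cite[Theorem 1]{kostina2012converse}, applied with the $D_1$-tilted information density $\jmath(\cdot,D_1|P_X)$ defined in \eqref{djkostina}, yields $\Pr(\calS\cap\calA_1)\le\Pr(d_1(X,\hatX_1)\le D_1,\jmath(X,D_1|P_X)\ge\log M_1+\gamma)\le\exp(-\gamma)$. Similarly, the pair $(f_2,\phi_3)$ is a zero-distortion code of size $M_2$ for $Y$, so enumerating the codebook image $\{\phi_3(s_2):s_2\in[M_2]\}$ and using $P_Y(y)\le \exp(-\log M_2-\gamma)$ on $\calA_2$ gives $\Pr(\calS\cap\calA_2)\le\exp(-\gamma)$.

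The main obstacle is bounding $\Pr(\calS\cap\calA_3)\le 2\exp(-\gamma)$, which is the genuinely new content and is the Fu-Yeung analogue of the bound behind Lemma \ref{oneshotconverse4kaspi}. The strategy is: enumerate over $(s_1,s_2)\in[M_1]\times[M_2]$ pairs and identify $\hatx_1=\phi_1(s_1)$, $\hatx_2=\phi_2(s_1,s_2)$, $y=\phi_3(s_2)$. On $\calS\cap\calA_3$, the distortion constraints let us drop the $t_1^*d_1$ and $t_2^*d_2$ terms, while the identity \eqref{expantypical} lets us substitute $(1+s^*)\log\beta(x,y)$ by a sum of log likelihood ratios involving the optimal test channels. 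The $(1+s^*)\gamma$ slack on $\calA_3$ will be absorbed in two stages corresponding exactly to the two-level inequality $w_2(\hatx_1,Q_{\hatX_2|Y\hatX_1})\le w_1(\hatx_1)\le 1$ from Lemma \ref{wleq1}: first an average over the choice of $\hatx_2$ given $(y,\hatx_1)$ (exploiting $w_2\le w_1$), then an average over the choice of $\hatx_1$ given $y$ (exploiting $w_1\le 1$). Matching the combinatorial count $M_1M_2$ on the $\beta(x,y)$ side against the tilted threshold $\log M_1M_2+s^*\log M_1+(1+s^*)\gamma$ (noting that the $s^*\log M_1$ surplus handles the fact that the first tilting is with exponent $1/(1+s^*)$), the two applications of the bound each cost one factor of $\exp(-\gamma)$, giving $2\exp(-\gamma)$ total. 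The crucial subtlety is that the particular choice of $P_{\hatX_2|Y\hatX_1}^*$ singled out by \eqref{chooseopt} is exactly what makes Lemma \ref{wleq1} applicable to the measure arising from enumerating over the codebook, so this construction cannot be bypassed. Combining the three bounds gives $\Pr(\calS)\le \Pr((\calA_1\cup\calA_2\cup\calA_3)^\rmc)+4\exp(-\gamma)$, which rearranges to the claim.
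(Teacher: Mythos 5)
Your overall blueprint matches the paper: decompose the success event by $\calA_1,\calA_2,\calA_3$, bound $\Pr(\calS\cap\calA_1)\le\exp(-\gamma)$ via \cite[Theorem 1]{kostina2012converse}, bound $\Pr(\calS\cap\calA_2)\le\exp(-\gamma)$ by enumerating over $S_2$ and using $P_Y(y)\le\exp(-\gamma)/M_2$, and the real work is showing $\Pr(\calS\cap\calA_3)\le 2\exp(-\gamma)$.

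The imprecision is in your account of where the factor of two comes from for $\calA_3$. You describe a nested two-stage average that first exploits $w_2\le w_1$ and then $w_1\le 1$; but a chained inequality $w_2\le w_1\le 1$ applied inside a single Markov step only buys $w_2\le 1$, i.e., one factor of $\exp(-\gamma)$, not two. What the paper actually does is split the random variable additively before any averaging: by the identity \eqref{j=j1j24fy}, $\jmath_{\rm{FY}}(x,y|R_1,D_1,D_2,P_X)=s^*\jmath_1(x,y,\hatx_1,D_1)+\jmath_2(x,y,\hatx_1,D_1,D_2)-s^*R_1$, so the threshold $\log M_1M_2+s^*\log M_1+(1+s^*)\gamma$ is decomposed as $s^*(\log M_1+\gamma)+(\log M_1M_2+\gamma)$ and the elementary union bound $\Pr(A+B\ge c+d)\le\Pr(A\ge c)+\Pr(B\ge d)$ is applied. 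Each of the two resulting probabilities is then controlled by its own Markov-inequality/tilting argument, the first invoking only $w_1\le 1$ (second inequality of \eqref{w2leqw1le1}) with threshold $\log M_1+\gamma$, and the second invoking only $w_2\le 1$ with threshold $\log M_1M_2+\gamma$. So the $2\exp(-\gamma)$ arises from a union bound over two events, each handled by a single application of Lemma \ref{wleq1}, not from a double averaging within one event. Your threshold bookkeeping (the $s^*\log M_1$ surplus absorbing the $1/(1+s^*)$ tilting exponent) is otherwise on target, and the role of the choice \eqref{chooseopt} in making Lemma \ref{wleq1} applicable is correctly emphasized; once you replace the nested-average picture with the additive split in \eqref{j=j1j24fy} plus the probability union bound, the proposal lines up with the paper's proof.
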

The proof of Lemma \ref{oneshotconverse4fy} is given in Appendix \ref{proofoneshot4fy}. We remark that Lemma \ref{wleq1} plays an important role in the proof of Lemma \ref{oneshotconverse4fy}. This can be made clear by the following definitions. Given $(x,y,\hatx_1,\hatx_2)$, using the definitions of $\imath_1(\cdot)$ in \eqref{def:i14fy} and $\imath_2(\cdot)$ in \eqref{def:i24fy}, we define
\begin{align}
\jmath_1(x,y,\hatx_1,D_1)&:=\imath_1(x,y,\hatx_1)-\frac{t_1^*D_1}{1+s^*}\label{def:j14fy},\\
\jmath_2(x,y,\hatx_1,D_1,D_2)&:=\imath_2(x,y,\hatx_1)-\frac{t_1^*D_1}{1+s^*}-t_2^*D_2\label{def:j24fy}.
\end{align}
Using the definition of the rate-distortion-tilted information density in \eqref{def:j4fy}, we conclude that
\begin{align}
\jmath_{\rm{FY}}(x,y|R_1,D_1,D_2,P_X)=s^*\jmath_1(x,y,\hatx_1,D_1)+\jmath_2(x,y,\hatx_1,D_1,D_2)-s^*R_1\label{j=j1j24fy}.
\end{align}
In the proof of Lemma \ref{oneshotconverse4fy}, we make use of \eqref{j=j1j24fy} and the fact that $\Pr(A+B\geq c+d)\leq \Pr(A\geq c)+\Pr(B\geq d)$ for any variables $(A,B)$ and constants $(c,d)$.

Recall the setting of the Fu-Yeung problem in Figure \ref{systemmodelmd}. Note that when $Y$ is a constant, i.e. $|\calY|=1$, we recover the setting of the successive refinement problem~\cite{rimoldi1994}. Define a $(n,M_1,M_2)$-code for the successive refinement problem in a similar manner as Definition \ref{def:code4fy} (cf. \cite[Definition 1]{zhou2016second}). Let $\varepsilon_n^{\rm{SR}}(D_1,D_2)$ be the joint excess-distortion probability. Further, let $\rvR_{\rm{SR}}(R_1,D_1,D_2|P_X)$ be the minimum sum-rate function and let $\jmath_{\rm{SR}}(x|R_1,D_1,D_2,P_X)$ be the rate-distortion-tilted information density for the successive refinement problem~(cf. \cite{zhou2016second} and \cite{zhousuc2016}). For the case when $|\calY|=1$, we have that
\begin{align}
\rvR_{\rm{SR}}(R_1,D_1,D_2|P_X)&=\rvR_{\rm{FY}}(R_1,D_1,D_2|P_X),\\
\jmath_{\rm{SR}}(x|R_1,D_1,D_2,P_X)&=\jmath_{\rm{FY}}(x,g(x)|R_1,D_1,D_2,P_X)\label{def:j4sr}.
\end{align}
We remark that although the definition of the rate-distortion-tilted information density for the successive refinement problem in \eqref{def:j4sr} appears different from the definition in \cite[Definition 6]{zhou2016second}, the two quantities in \eqref{def:j4sr} and \cite[Definition 6]{zhou2016second} share same properties (cf. \cite[Lemma 3]{zhou2016second}) and are thus essentially the same. Invoking Lemma \ref{oneshotconverse4fy} with $\calY=\{1\}$, we obtain the following non-asymptotic converse bound for the successive refinement problem.
\begin{lemma}
\label{lemma:oneshot4sr}
Any $(1,M_1,M_2)$-code for the successive refinement problem satisfies that for any $\gamma\geq 0$,
\begin{align}
\varepsilon_1^{\rm{SR}}(D_1,D_2)
\nn&\geq \Pr\Big\{\jmath(X,D_1|P_X)\geq \log M_1+\gamma~\mathrm{or}\\
&\qquad\quad\jmath_{\rm{SR}}(X|R_1,D_1,D_2,P_X)\geq \log M_1M_2+s^*\log M_1+(1+s^*)\gamma\Big\}-4\exp(-\gamma)\label{converseoneshot4sr}.
\end{align}
\end{lemma}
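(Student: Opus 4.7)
The plan is to obtain Lemma~\ref{lemma:oneshot4sr} as a direct specialization of Lemma~\ref{oneshotconverse4fy} to the degenerate case $|\calY|=1$. The successive refinement problem is recovered from the Fu-Yeung setup by taking $\calY=\{y_0\}$ to be a singleton and $g$ to be the (trivial) constant map $x\mapsto y_0$. Under this identification the requirement ``$\hat Y^n=Y^n$ losslessly'' is automatically satisfied by any code (both sides equal the deterministic sequence $(y_0,\ldots,y_0)$), so the joint error-and-excess-distortion probability $\varepsilon_1^{\rm{FY}}(D_1,D_2)$ defined in \eqref{def:error4fy} coincides with the joint excess-distortion probability $\varepsilon_1^{\rm{SR}}(D_1,D_2)$ of the corresponding successive refinement code. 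The identification of the information-theoretic quantities needed in the statement, namely $\rvR_{\rm{SR}}(R_1,D_1,D_2|P_X)=\rvR_{\rm{FY}}(R_1,D_1,D_2|P_X)$ and $\jmath_{\rm{SR}}(x|R_1,D_1,D_2,P_X)=\jmath_{\rm{FY}}(x,g(x)|R_1,D_1,D_2,P_X)$, has already been recorded in \eqref{def:j4sr}, and the associated Lagrange parameters $(s^*,t_1^*,t_2^*)$ coincide between the two problems because the variational problem defining $\rvR_{\rm{FY}}$ collapses to the one defining $\rvR_{\rm{SR}}$ when $Y$ is constant.

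With this reduction in hand I would instantiate Lemma~\ref{oneshotconverse4fy} with $\calY=\{y_0\}$ and examine each of the three defining sets $\calA_1,\calA_2,\calA_3$. The key observation is that $P_Y(y_0)=1$ and therefore $-\log P_Y(y_0)=0$; consequently, for every integer $M_2\geq 1$ and every $\gamma\geq 0$, the inequality defining $\calA_2$ fails identically and $\calA_2=\emptyset$. The union $\calA_1\cup\calA_2\cup\calA_3$ therefore collapses to $\calA_1\cup\calA_3$. Moreover, since every realization $(X,Y)$ in the Fu-Yeung model satisfies $Y=g(X)$ deterministically, applying \eqref{def:j4sr} shows that the event defining $\calA_3$ is exactly $\{\jmath_{\rm{SR}}(X|R_1,D_1,D_2,P_X)\geq \log M_1M_2+s^*\log M_1+(1+s^*)\gamma\}$, while $\calA_1$ depends only on the $X$-marginal and is unchanged. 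Combining these reductions, the bound of Lemma~\ref{oneshotconverse4fy} transcribes directly into \eqref{converseoneshot4sr}.

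There is really no substantive obstacle to this proof: the result is a corollary, and its inclusion in the paper is primarily to document that the earlier bound \cite[Lemma~9]{zhou2016second} for successively refinable sources is strictly subsumed. The only point requiring a sentence of care is that the residual slack $-4\exp(-\gamma)$ is inherited verbatim from Lemma~\ref{oneshotconverse4fy} and is not tight in the specialized setting (a tighter $-3\exp(-\gamma)$ could be extracted by retracing the proof of Lemma~\ref{oneshotconverse4fy} with the knowledge that one of the four error contributions, the one associated with $\calA_2$, vanishes), but the weaker form stated in \eqref{converseoneshot4sr} suffices for all downstream applications and is the cleanest to read off from the more general lemma.
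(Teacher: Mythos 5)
Your proposal is correct and follows exactly the paper's route: the paper dispenses with a proof by simply writing ``Invoking Lemma~\ref{oneshotconverse4fy} with $\calY=\{1\}$\ldots'' and you have supplied precisely the elaboration of that one-liner, including the observations that $\hatY^n=Y^n$ holds automatically (so $\varepsilon_1^{\rm{FY}}=\varepsilon_1^{\rm{SR}}$), that the identifications in \eqref{def:j4sr} carry over, and that the $\calA_2$ event can be dropped. One small imprecision: $\calA_2$ is not literally empty in the boundary case $M_2=1$, $\gamma=0$ (there $-\log P_Y(y_0)=0=\log M_2+\gamma$), but this is harmless since discarding $\calA_2$ only weakens the lower bound and thus preserves validity in all cases.
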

We remark that the non-asymptotic converse bound in \eqref{converseoneshot4sr} can be used to establish converse results for second-order and moderate deviations asymptotics for memoryless sources (both discrete and continuous). The non-asymptotic converse bound in \eqref{converseoneshot4sr} is a strict generalization of \cite[Lemma 9]{zhou2016second} which is only valid for the successively refinable source-distortion triplets. Invoking Lemma \ref{lemma:oneshot4sr}, we have the potential to establish tight second-order and moderate deviations asymptotics for non-successively refinable continuous memoryless sources, such as the symmetric GMS under the quadratic distortion measures~\cite{chowberger}. For DMSes under arbitrary bounded distortion measures and GMSes under quadratic distortion measures as shown in \cite{zhou2016second}, we can make use of Lemma \ref{lemma:oneshot4sr} to simplify the converse proof significantly.

\subsection{Second-Order Asymptotics for DMSes}
In this subsection, we consider DMSes under bounded distortion measures, i.e., $\calX$, $\hat{\calX_1}$, $\hat{\calX_2}$ are all finite sets and $\max_{x,\hatx_i}d_i(x,\hatx_i),~i=1,2$ is finite. Let $\epsilon\in(0,1)$ be fixed. In the following, we give a formal definition of second-order coding region for the Fu-Yeung problem.
\begin{definition}
Given $(R_1,R_2,D_1,D_2,\epsilon)$, a pair $(L_1,L_2)$ is said to be second-order achievable for the Fu-Yeung problem if there exists a sequence of $(n,M_1,M_2)$-codes such that
\begin{align}
\limsup_{n\to\infty}\frac{\log M_i-nR_i}{\sqrt{n}}\leq L_i,~i=1,2,
\end{align}
and
\begin{align}
\limsup_{n\to\infty} \varepsilon_n^{\rm{FY}}(D_1,D_2)\leq \epsilon.
\end{align}
Given $(R_1,R_2,D_1,D_2,\epsilon)$, the set of all second-order achievable pairs is called the second-order $(R_1,R_2,D_1,D_2,\epsilon)$-coding region and denoted as $\calL(R_1,R_2,D_1,D_2,\epsilon)$.
\end{definition}

Before presenting the characterization of $\calL(R_1,R_2,D_1,D_2,\epsilon)$, we need several definitions. Recall that $P_{XY}$ and $P_Y$ are induced by $P_X$ and the deterministic function $g:\calX\to\calY$. Let the source dispersion~\cite{strassen1962asymptotische} be
\begin{align}
\rmV(P_Y)
&=\sum_y P_Y(y)\big(-\log P_Y(y)-H(P_Y)\big)^2\\
&=\sum_x P_X(x)\big(-\log P_Y(g(x))-H(P_Y)\big)^2.
\end{align}

Recall that $\rmV(P_X,D_1)$ is the distortion-dispersion function~\cite{kostina2012}, which is the variance of the distortion-tilted information density $\jmath(x,D_1|P_X)$~\cite{kostina2012converse} (cf. \eqref{djkostina}). Let the rate-distortion-dispersion function be
\begin{align}
\rmV(R_1,D_1,D_2|P_X):=\mathrm{Var}\Big[\jmath_{\rm{FY}}(X,g(X)|R_1,D_1,D_2,P_X)-\log P_Y(Y)\Big].
\end{align}
Define two covariance matrices:
\begin{align}
\mathbf{V}_1(R_1,D_1,D_2|P_X)
&:=\mathrm{Cov}\Big([\jmath(X,D_1|P_X),\jmath_{\rm{FY}}(X,g(X)|R_1,D_1,D_2,P_X)-\log P_Y(g(X))]^T\Big),\\
\mathbf{V}_2(R_1,D_1,D_2|P_X)
&:=\mathrm{Cov}\Big([\jmath_{\rm{FY}}(X,g(X)|D_1,D_2,P_X)-\log P_Y(g(X)),-\log P_Y(g(X))]^T\Big).
\end{align}

We impose the following conditions:
\begin{enumerate}
\item \label{cond1}$(Q_X,D_1')\to R(Q_X,D_1')$ is twice differentiable in the neighborhood of $(P_X,D_1)$ and the derivatives are bounded;
\item $(Q_X,R_1',D_1',D_2')\to \rvR_{\rm{FY}}(R_1',D_1',D_2'|Q_X)$ is twice differentiable in the neighborhood of $(P_X,R_1,D_1,D_2)$ and the derivatives are bounded;
\item $s^*$ in \eqref{def:s*}, $t_1^*$ in \eqref{def:t1*} and $t_2^*$ in \eqref{def:t2*} are well defined;
\item $\infty>\rvR_{\rm{FY}}(R(P_X,D_1),D_1,D_2|P_X)>R(P_X,D_1)>0$;
\item  \label{condf} The covariance matrices $\mathbf{V}_1(R_1,D_1,D_2|P_X)$ and $\mathbf{V}_2(R_1,D_1,D_2|P_X)$ are positive semi-definite.
\end{enumerate}

\begin{theorem}
\label{mddsecondregion}
Under conditions \eqref{cond1} to \eqref{condf}, depending on $(R_1,R_2)$, the second-order $(R_1,R_2,D_1,D_2,\epsilon)$-coding region for the Fu-Yeung problem satisfies
\begin{itemize}
\item Case (i): $R_1=R(P_X,D_1)$ and $R_1+R_2>\rvR_{\rm{FY}}(R_1,D_1,D_2|P_X)+H(P_Y)$
\begin{align}
\calL(R_1,R_2,D_1,D_2,\epsilon)=\big\{(L_1,L_2):L_1\geq \sqrt{\rmV(P_X,D_1)}\rmQ^{-1}(\epsilon)\big\}.
\end{align}
\item Case (ii): $R_1=R(P_X,D_1)$ and  $R_1+R_2=\rvR_{\rm{FY}}(R_1,D_1,D_2|P_X)+H(P_Y)$
\begin{align}
\calL(R_1,R_2,D_1,D_2,\epsilon)=\big\{(L_1,L_2):\Psi(L_1,(1+s^*)L_1+L_2;\mathbf{0};\mathbf{V}_1(R_1,D_1,D_2|P_X))\geq 1-\epsilon\big\}.
\end{align}
\item Case (iii): $R(P_X,D_1)<R_1<\rvR_{\rm{FY}}(R_1,D_1,D_2|P_X)$ and $R_1+R_2=\rvR_{\rm{FY}}(R_1,D_1,D_2|P_X)+H(P_Y)$
\begin{align}
\calL(R_1,R_2,D_1,D_2,\epsilon)=\big\{(L_1,L_2):(1+s^*)L_1+L_2\geq \sqrt{\rmV(R_1,D_1,D_2|P_X)}\rmQ^{-1}(\epsilon)\big\}.
\end{align}
\item Case (iv) $R_1=\rvR_{\rm{FY}}(R_1,D_1,D_2|P_X)$ and $R_2=H(P_Y)$
\begin{align}
\calL(R_1,R_2,D_1,D_2,\epsilon)=\big\{(L_1,L_2):\Psi((1+s^*)L_1+L_2,L_2;\mathbf{0},\mathbf{V}_2(R_1,D_1,D_2|P_X))\geq 1-\epsilon\big\}.
\end{align}
\item Case (v) $R_1>\rvR_{\rm{FY}}(R_1,D_1,D_2|P_X)$ and $R_2=H(P_Y)$
\begin{align}
\calL(R_1,R_2,D_1,D_2,\epsilon)=\big\{(L_1,L_2):L_2\geq \sqrt{\rmV(P_Y)}\rmQ^{-1}(\epsilon)\big\}.
\end{align}
\end{itemize}
\end{theorem}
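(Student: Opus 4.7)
The plan is to obtain the converse and achievability halves separately, and in each half to split the argument according to which first-order rate constraints are binding; this is what gives rise to the five cases. Throughout, the rate-distortion-tilted information densities $\jmath(x,D_1|P_X)$, $-\log P_Y(y)$, and $\jmath_{\rm{FY}}(x,y|R_1,D_1,D_2,P_X)$ play the role that i.i.d.\ sums play in classical second-order results: for an i.i.d.\ source they tensorize, and their (co)variances are exactly $\rmV(P_X,D_1)$, $\rmV(P_Y)$, $\rmV(R_1,D_1,D_2|P_X)$, together with the cross-covariances encoded in $\mathbf{V}_1$ and $\mathbf{V}_2$.

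For the converse, I would apply Lemma \ref{oneshotconverse4fy} to the $n$-fold product source with $M_1=\exp(nR_1+\sqrt{n}L_1)$, $M_2=\exp(nR_2+\sqrt{n}L_2)$, and $\gamma=\tfrac{1}{2}\log n$. Because the three tilted information densities tensorize, the events $\calA_1,\calA_2,\calA_3$ become sums of i.i.d.\ random variables. After subtracting the means $nR(P_X,D_1)$, $nH(P_Y)$, and $n\rvR_{\rm{FY}}(R_1,D_1,D_2|P_X)+nH(P_Y)+ns^*R_1$ respectively, the residuals on the left are $O(\sqrt{n})$ while the rate backoffs on the right are also $O(\sqrt{n})$, so the appropriate Berry-Esseen theorem (scalar for Cases (i), (iii), (v), bivariate for the corner Cases (ii), (iv)) converts the probability into a Gaussian cdf up to an $O(1/\sqrt{n})$ remainder. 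In the corner cases one has to be careful that the two relevant sums share the same underlying per-letter random variables so that the joint Gaussian limit has the covariance $\mathbf{V}_1$ or $\mathbf{V}_2$ rather than diagonal covariance; this is automatic once the correct linear combinations $L_1$ and $(1+s^*)L_1+L_2$ (respectively $(1+s^*)L_1+L_2$ and $L_2$) are identified via \eqref{j=j1j24fy} and the definition of $\jmath_{\rm{FY}}$. Cases outside the listed five (e.g.\ $R_1<R(P_X,D_1)$) are ruled out by the usual strong-converse type argument also available from Lemma \ref{oneshotconverse4fy}.

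For the achievability, I would first establish a type-covering lemma tailored to the Fu-Yeung problem, analogous to the one used for the Kaspi problem in Section \ref{sec:proof4kaspi}. Roughly, for every type $\hat{T}_{x^n}$ close to $P_X$ in total variation $O(1/\sqrt{n})$, one exhibits codebooks $\calC_1$ of $\hatx_1^n$-sequences, indexed by $\hat{T}_{x^n}$ of size $\exp(nI(\hat{T}_{x^n};P^*_{\hatX_1|X})+o(n))$ covering type classes in the $d_1$ sense, and an inner codebook $\calC_2(\hatx_1^n)$ of $\hatx_2^n$-sequences of size controlled by the corresponding conditional mutual information; in parallel the $y^n$-sequence is losslessly encoded at rate $H(P_Y)+o(1)$ via a standard lossless prefix code with the third encoder. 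Given the type-covering lemma, one bounds the excess-distortion-plus-error probability by the probability that $(X^n,Y^n)$ falls in a type for which any of the three rate budgets $\log M_1$, $\log M_2$, $\log M_1M_2$ is insufficient. Substituting the test-channel-based relations from Lemma \ref{proptest4fy} and performing a Taylor expansion of $\rvR_{\rm{FY}}(\cdot)$ and $R(\cdot,D_1)$ in the type (using the twice-differentiability assumptions), one rewrites these insufficiency events in terms of sums of $\jmath$, $\jmath_{\rm{FY}}$, and $-\log P_Y$ evaluated on the source letters, after which Berry-Esseen delivers the matching direct bound.

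The main obstacle I anticipate is the corner Cases (ii) and (iv), where two rate constraints are simultaneously tight and the second-order region becomes a genuinely two-dimensional set cut out by a bivariate Gaussian cdf. On the converse side this requires showing that the union bound over $\calA_1,\calA_2,\calA_3$ is not lossy to second order, which is where the identity \eqref{j=j1j24fy} relating $\jmath_{\rm{FY}}$ to $\jmath_1,\jmath_2$ is essential: it ensures the three events reduce to two genuinely distinct linear statistics in the per-letter tilted densities, with covariance exactly $\mathbf{V}_1$ or $\mathbf{V}_2$. On the achievability side, one must construct and analyze the codebook so that the two type-level rate constraints are simultaneously controlled without paying an extra dispersion penalty; the standard trick, made possible by the deterministic nature of $Y=g(X)$ and the freedom in \eqref{chooseopt} of selecting $P^*_{\hatX_2|X\hatX_1}$, is to encode $Y^n$ separately and apply a joint Berry-Esseen bound to the pair of shifted sums, so that the limiting probability is precisely $\Psi(\cdot,\cdot;\mathbf{0},\mathbf{V}_j(R_1,D_1,D_2|P_X))$. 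The remaining technical care is to verify that the Taylor remainders in the type-covering argument are $o(1/\sqrt{n})$, for which conditions (\ref{cond1})--(\ref{condf}) were imposed.
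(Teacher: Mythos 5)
Your achievability plan is essentially the paper's: a Fu-Yeung type-covering lemma (Lemma~\ref{mddtypecovering}), a resulting type-level upper bound on $\varepsilon_n^{\rm FY}$ (Lemma~\ref{mdduppjoint}), Taylor expansions of $R(\cdot,D_1)$, $H(\cdot)$, and $\rvR_{\rm FY}(\cdot)$ in the type, and then scalar or bivariate Berry--Esseen depending on which first-order constraints are tight. That half is fine.

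Your converse, however, is not what the paper does and as written it does not close. You propose to tensorize Lemma~\ref{oneshotconverse4fy} and ``subtract the means $nR(P_X,D_1)$, $nH(P_Y)$, and $n\rvR_{\rm FY}(R_1,D_1,D_2|P_X)+nH(P_Y)+ns^*R_1$.'' The third quantity is not the mean of the relevant sum. By \eqref{para4fy} and \eqref{def:j4fy}, $\mathbb{E}[\jmath_{\rm FY}(X,g(X)|R_1,D_1,D_2,P_X)]=\rvR_{\rm FY}(R_1,D_1,D_2|P_X)$, so $\mathbb{E}\big[\sum_i\jmath_{\rm FY}(X_i,Y_i|\cdot)\big]=n\rvR_{\rm FY}(R_1,D_1,D_2|P_X)$. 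Meanwhile the threshold in $\calA_3$ is $\log M_1M_2+s^*\log M_1+(1+s^*)n\gamma \approx n(R_1+R_2+s^*R_1)$, which on the sum-rate boundary $R_1+R_2=\rvR_{\rm FY}(R_1,D_1,D_2|P_X)+H(P_Y)$ equals $n\rvR_{\rm FY}(R_1,D_1,D_2|P_X)+n\big(H(P_Y)+s^*R_1\big)$. The threshold therefore exceeds the mean by $\Theta(n)$ whenever $|\calY|>1$ (so $H(P_Y)>0$) or $s^*R_1>0$, and $\Pr(\calA_3)$ decays exponentially. Consequently $\calA_3$ contributes nothing at second order, and the union $\calA_1\cup\calA_2\cup\calA_3$ only yields the marginal $L_1$ and $L_2$ constraints centered at $R(P_X,D_1)$ and $H(P_Y)$; you never recover the $(1+s^*)L_1+L_2$ constraint or the covariance matrices $\mathbf V_1,\mathbf V_2$ that govern Cases~(ii)--(iv). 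The mismatch is structural: the achievability bound involves the statistic $\jmath_{\rm FY}(X,g(X)|\cdot)-\log P_Y(g(X))$, whose mean is $\rvR_{\rm FY}+H(P_Y)$, whereas $\calA_3$ in Lemma~\ref{oneshotconverse4fy} is stated in terms of $\jmath_{\rm FY}$ alone. (For the same reason Lemma~\ref{lemma:oneshot4sr} only centers correctly when $|\calY|=1$ and $s^*R_1=0$, which is why the paper emphasizes it as a converse tool for the successive refinement setting rather than for this theorem.)

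The paper's converse does not go through Lemma~\ref{oneshotconverse4fy} at all. It instead uses the type-based strong converse of Lemma~\ref{mddstrongconverse}, obtained via the Gu--Effros perturbation argument combined with the method of types, which produces (Lemma~\ref{mddlbjoint}) a lower bound on $\varepsilon_n^{\rm FY}$ that is structurally the mirror image of the achievability bound in Lemma~\ref{mdduppjoint}: the third event there involves $\rvR_{\rm FY}(R_{1,n},D_{1,n},D_{2,n}|\hat T_{X^n})+H(\hat T_{g(X^n)})$, which Taylor-expands to the correct statistic $\jmath_{\rm FY}(\cdot)-\log P_Y(\cdot)$. You should replace the one-shot-bound converse with this type-based route (or else prove a different non-asymptotic converse bound whose $\calA_3$-analogue features $\jmath_{\rm FY}(x,y|\cdot)-\log P_Y(y)$ against the threshold $\log M_1M_2+s^*\log M_1$).
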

The proof of Theorem \ref{mddsecondregion} is provided in Section \ref{sec:proof4fy}.

We remark that the second-order coding region for discrete successive refinement problem~\cite[Theorem 5]{zhou2016second} is recovered by cases (i) to (iii) in Theorem \ref{mddsecondregion} by letting $Y$ be a constant. Further, let $R_{\rm{FY}}(D_1,D_2|P_X)$ be defined as
\begin{align}
R_{\rm{FY}}(D_1,D_2|P_X):=\min_{P_{\hatX_1|X},P_{\hatX_2|X\hatX_1}\in\calP(P_X,D_1,D_2)} I(\hatX_1;Y)+I(X;\hatX_1\hatX_2|Y)\label{def:rfyd1d2},
\end{align}
where $\calP(P_X,D_1,D_2)$ was defined above Theorem \ref{mdfirst}. It can be verified easily that $R_{\rm{FY}}(D_1,D_2|P_X)$ is the solution of $R_1$ to $R_1=\rvR_{\rm{FY}}(R_1,D_1,D_2|P_X)$ and thus $s^*=0$. Hence, the expressions in cases (iv) can be simplified so that $R_1$ is not included in the results.

\subsection{Large and Moderate Deviations Asymptotics for DMSes}
The optimal error exponent function is defined as follows.
\begin{definition} \label{defee4fy}
{\em  A  non-negative number $E$ is said to be an {\em $(R_1,R_2,D_1,D_2)$-achievable error exponent} if there exists a sequence of $(n,M_1,M_2)$-codes such that \eqref{def:ratecons4fy} holds and
\begin{align}
\liminf_{n\to\infty}-\frac{\log \varepsilon_n^{\rm{FY}}(D_1,D_2)}{n}\geq E.
\end{align}
The supremum of all $(R_1,R_2,D_1,D_2)$-achievable error exponents is called the optimal error exponent and denoted as $E^*(R_1,R_2,D_1,D_2)$. }
\end{definition}
Given a rate-distortion tuple $(R_1,D_1,D_2)$, define
\begin{align}
\label{defeef4fy}
F(P_X,R_1,R_2,D_1,D_2)&:=\inf_{Q_X:\rvR_{FY}(R_1,D_1,D_2|Q_X)+H(Q_Y)\geq R_1+R_2} D(Q_X\|P_X),\\
&=\min\Big\{\min_{Q_X:R(Q_X,D_1)> R_1} D(Q_X\|P_X),\min_{\substack{ Q_X: R(Q_X,D_1)\leq R_1\\ \rvR_{FY}(R_1,D_1,D_2|Q_X)+H(Q_Y)\geq R_1+R_2}}D(Q_X\|P_X)\Big\}\label{equivalentee},
\end{align}
where $Q_Y$ is induced by $Q_X$ and the deterministic function $g:\calX\to\calY$ and \eqref{equivalentee} holds since if $R_1<R(Q_X,D_1)$, then $\rvR_{\rm{FY}}(R_1,D_1,D_2|Q_X)$ is infeasible and thus equals $\infty$.
\begin{theorem}
\label{eegray}
The optimal error exponent for the Fu-Yeung problem is
\begin{align}
E^*(R_1,R_2,D_1,D_2)&=F(P_X,R_1,R_2,D_1,D_2).
\end{align}
\end{theorem}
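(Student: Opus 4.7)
The proof plan follows the pattern of Theorem~\ref{eekaspi}: split into an achievability part based on a type-covering argument, and a converse part based on applying the non-asymptotic bound of Lemma~\ref{oneshotconverse4fy} to a tilted source distribution combined with a change-of-measure step.

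For the achievability, the main ingredient to establish is a type-covering lemma for the Fu-Yeung setting: for every type $Q_X\in\calP_n(\calX)$, provided $R_1\geq R(Q_X,D_1)+\delta_n$ and $R_1+R_2\geq \rvR_{\rm{FY}}(R_1,D_1,D_2|Q_X)+H(Q_Y)+\delta_n$ with $\delta_n=o(1)$, one can construct a pair of encoders and three decoders whose joint excess-distortion and error event has probability tending to zero uniformly on $\calT_{Q_X}$. This is proved by a layered random coding argument using the El Gamal--Cover scheme restricted to $\calT_{Q_X}$ under the test channels achieving $\rvR_{\rm{FY}}(R_1,D_1,D_2|Q_X)$, combined with standard fixed-length coding of the induced type-class sequence $Y^n=g(X^n)$ by encoder~$f_2$. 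Merging codebooks over all $|\calP_n(\calX)|=O(n^{|\calX|})$ types costs only $O(\log n/n)$ in rate. Decomposing $\varepsilon_n^{\rm{FY}}(D_1,D_2)$ according to the source type and using $P_X^n(\calT_{Q_X})\leq \exp(-nD(Q_X\|P_X))$ yields, upon infimizing over types where the covering inequalities fail, the upper bound $\exp(-n F(P_X,R_1,R_2,D_1,D_2)+o(n))$, in view of the equivalent form in \eqref{equivalentee}.

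For the converse, fix any $Q_X$ satisfying $\rvR_{\rm{FY}}(R_1,D_1,D_2|Q_X)+H(Q_Y)>R_1+R_2$ strictly. I would apply Lemma~\ref{oneshotconverse4fy} to the $n$-fold product $Q_X^n$ using the tilted information density defined with respect to $Q_X$ (not $P_X$), and taking $\gamma=\sqrt{n}$. Because the three sums $\sum_i\jmath(X_i,D_1|Q_X)$, $-\sum_i\log Q_Y(g(X_i))$, and $\sum_i\jmath_{\rm{FY}}(X_i,g(X_i)|R_1,D_1,D_2,Q_X)$ concentrate under $Q_X^n$ around $n R(Q_X,D_1)$, $n H(Q_Y)$, and $n\rvR_{\rm{FY}}(R_1,D_1,D_2|Q_X)$ respectively, a law-of-large-numbers argument shows that at least one of the three events $\calA_1,\calA_2,\calA_3$ (suitably scaled to $n$ letters) occurs with probability tending to $1$, so the joint excess-distortion probability under $Q_X^n$ tends to $1$. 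A standard change-of-measure across the type class then gives
\[
\varepsilon_n^{\rm{FY}}(D_1,D_2)\;\geq\;Q_X^{n}\!\bigl(\calE_n\cap\calT_{Q_X^{(n)}}\bigr)\cdot\exp\!\bigl(-n D(Q_X^{(n)}\|P_X)\bigr),
\]
for a type $Q_X^{(n)}\in\calP_n(\calX)$ approximating $Q_X$. Infimizing over admissible $Q_X$ and invoking the density of types together with continuity of $\rvR_{\rm{FY}}$ in $Q_X$ (assured by condition~(\ref{cond1})) yields the desired lower bound $F(P_X,R_1,R_2,D_1,D_2)$.

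The principal obstacle will be the type-covering lemma itself. Compared to the Kaspi version (Lemma~\ref{coveringkaspi}), one now has two separate encoders with distinct rate budgets rather than a single rate-limited link, and the second encoder must simultaneously transmit $Y^n$ losslessly and provide the refinement information used by $\phi_2$. This forces a layered random code in which $f_2$'s message contains a ``lossless header'' of rate $H(Q_Y)+o(1)$ followed by a refinement layer whose rate together with $R_1$ meets the multiple-description sum-rate constraint; one must verify jointly typical covering for $\hatX_1^n$ and $\hatX_2^n$ under the optimal test channels, carefully accounting for the deterministic dependence $Y=g(X)$ and the condition \eqref{chooseopt} in the construction of $P_{\hatX_2|X\hatX_1}^*$. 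Once this covering lemma is secured, both the achievability averaging and the converse change-of-measure are relatively routine method-of-types manipulations.
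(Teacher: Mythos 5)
Your achievability plan coincides with the paper's: the paper already provides exactly the type-covering lemma you set out to build (Lemma~\ref{mddtypecovering}), together with the merged-codebook upper bound on the joint excess-distortion/error probability (Lemma~\ref{mdduppjoint}), and the resulting exponent matches $F$ in view of \eqref{equivalentee}. One small omission: your covering-lemma hypotheses list only $R_1\geq R(Q_X,D_1)+\delta_n$ and $R_1+R_2\geq\rvR_{\rm{FY}}(R_1,D_1,D_2|Q_X)+H(Q_Y)+\delta_n$, but the lossless-header layer for $Y^n$ also requires $R_2\geq H(Q_Y)+\delta_n$, which is the second condition in Lemma~\ref{mdduppjoint}.

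Your converse, however, departs from the paper's and, as written, has a genuine gap. The paper does \emph{not} prove the error-exponent converse via the non-asymptotic bound of Lemma~\ref{oneshotconverse4fy}; it uses the \emph{type-based strong converse} (Lemma~\ref{mddstrongconverse}, a Gu--Effros perturbation argument combined with method of types) and change of measure, which yields Lemma~\ref{mddlbjoint}:
\begin{align}
\varepsilon_n^{\rm{FY}}(D_1,D_2)\geq \Pr\Big(R_{1,n}<R(\hatT_{X^n},D_{1,n})\ \text{or}\ R_{2,n}<H(\hatT_{g(X^n)})\ \text{or}\ R_{1,n}+R_{2,n}<\rvR_{\rm{FY}}(R_{1,n},D_{1,n},D_{2,n}|\hatT_{X^n})+H(\hatT_{g(X^n)})\Big).
\end{align}
The third event here involves the \emph{sum-rate} threshold $\rvR_{\rm{FY}}+H(\hatT_Y)$, which is exactly the quantity entering $F$. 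By contrast, your plan applies Lemma~\ref{oneshotconverse4fy} under the tilted law $Q_X^n$ and asserts that, because $\rvR_{\rm{FY}}(R_1,D_1,D_2|Q_X)+H(Q_Y)>R_1+R_2$, at least one of $\calA_1,\calA_2,\calA_3$ occurs with probability tending to one. That inference fails. Under $Q_X^n$ the sum $\sum_i\jmath_{\rm{FY}}(X_i,g(X_i)|R_1,D_1,D_2,Q_X)$ concentrates at $n\,\rvR_{\rm{FY}}(R_1,D_1,D_2|Q_X)$, whereas the threshold in $\calA_3$ is $\log M_1M_2+s_Q^*\log M_1+(1+s_Q^*)\gamma\approx n\big(R_1+R_2+s_Q^*R_1\big)$; note the threshold does \emph{not} subtract $-\log Q_Y(g(X_i))$, so $\Pr_{Q_X^n}(\calA_3^n)\to 1$ requires $\rvR_{\rm{FY}}(R_1,D_1,D_2|Q_X)>R_1+R_2+s_Q^*R_1$, which is a strictly stronger requirement than $\rvR_{\rm{FY}}+H(Q_Y)>R_1+R_2$ whenever $H(Q_Y)+s_Q^*R_1>0$. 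In the generic case $R(Q_X,D_1)\leq R_1$ and $H(Q_Y)\leq R_2$ (so that $\Pr_{Q_X^n}(\calA_1^n)$ and $\Pr_{Q_X^n}(\calA_2^n)$ both vanish), there are admissible $Q_X$ in the feasibility set of $F$ with $\rvR_{\rm{FY}}(R_1,D_1,D_2|Q_X)\leq R_1+R_2+s_Q^*R_1$; for such $Q_X$ the union $\Pr_{Q_X^n}(\calA_1^n\cup\calA_2^n\cup\calA_3^n)$ tends to zero, Lemma~\ref{oneshotconverse4fy} becomes vacuous, and the change-of-measure step delivers nothing. To close the gap you should replace the Lemma~\ref{oneshotconverse4fy} step by the type-based strong converse of Lemma~\ref{mddstrongconverse} (or equivalently invoke Lemma~\ref{mddlbjoint}) and then proceed with change of measure across the type class as you planned; it is that lemma, not the non-asymptotic one-shot bound, that correctly isolates the sum-rate event $\{R_{1,n}+R_{2,n}<\rvR_{\rm{FY}}+H(\hatT_Y)\}$.

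Also a small slip: you cite condition~(\ref{cond1}) for continuity of $\rvR_{\rm{FY}}$ in $Q_X$, but condition~(\ref{cond1}) concerns $R(Q_X,D_1')$; the differentiability of $\rvR_{\rm{FY}}$ is the second of the listed conditions.
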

The proof of Theorem \ref{eegray} is omitted due to the similarity to \cite[Theorem 13]{zhou2015second}. Theorem \ref{eegray} can be proven by invoking the type-covering lemma (cf. Lemma \ref{mddtypecovering}) in the achievability part and by using the strong converse result and the change of measure technique in the converse part. We remark that the result in \cite{kanlis1996error} is a special case of Theorem \ref{eegray} by letting $|\calY|=1$.

The moderate deviations constant for the Fu-Yeung problem is defined as follows.
\begin{definition}
{\em  
Consider any sequence $\{\rho_n\}_{n\in\bbN}$ satisfying \eqref{eqn:cond_rho}. Let $\theta_i$ for $i=1,2$ be positive numbers. A number $\nu\ge 0$ is said to be a {\em $(R_1,R_2, D_1, D_2)$-achievable moderate deviations constant} if there exists a sequence of $(n,M_1,M_2)$-codes such that
\begin{align}
\limsup_{n\to\infty}\frac{1}{n\rho_n}(\log M_i-nR_i)&\leq \theta_i,~i=1,2,\\
\liminf_{n\to\infty}-\frac{\log \varepsilon_n^{\rm{FY}}(D_1,D_2)}{n\rho_n^2}&\geq \nu.
\end{align}
The supremum of all $(R_1,R_2, D_1, D_2)$-achievable moderate deviations constants is denoted as $\nu^*(R_1,R_2|D_1, D_2)$.
}
\end{definition}

\begin{theorem}
\label{mdcfy}
Given the conditions in Theorem \ref{mddsecondregion} and the assumption that $\rmV(P_X,D_1)$ and $\rmV(R_1,D_1,D_2|P_X)$ are positive, depending on $(R_1,R_2)$, the moderate-deviations constant for the Fu-Yeung problem satisfies that
\begin{itemize}
\item Case (i): $R_1=R(P_X,D_1)$ and $R_1+R_2>\rvR(R_1,D_1,D_2|P_X)+H(P_Y)$
\begin{align}
\nu^*(R_1,R_2|D_1, D_2)=\frac{\theta_1^2}{2\rmV(P_X,D_1)}.
\end{align}
\item Case (ii): $R_1=R(P_X,D_1)$ and  $R_1+R_2=\rvR(R_1,D_1,D_2|P_X)+H(P_Y)$
\begin{align}
\nu^*(R_1,R_2|D_1, D_2)&=\min\Big\{\frac{\theta_1^2}{2\rmV(P_X,D_1)},\frac{\big((1+s^*)\theta_1+\theta_2\big)^2}{2\rmV(R_1,D_1,D_2|P_X)}\Big\}
\end{align}
\item Case (iii): $R(P_X,D_1)<R_1<\rvR(R_1,D_1,D_2|P_X)$ and $R_1+R_2=\rvR(R_1,D_1,D_2|P_X)+H(P_Y)$
\begin{align}
\nu^*(R_1,R_2|D_1, D_2)=\frac{\big((1+s^*)\theta_1+\theta_2\big)^2}{2\rmV(R_1,D_1,D_2|P_X)}.
\end{align}
\item Case (iv): $R_1=\rvR(R_1,D_1,D_2|P_X)$ and $R_2=H(P_Y)$
\begin{align}
\nu^*(R_1,R_2|D_1, D_2)&=\min\Big\{\frac{\big((1+s^*)\theta_1+\theta_2\big)^2}{2\rmV(R_1,D_1,D_2|P_X)},\frac{\theta_2^2}{2\rmV(P_Y)}\Big\}\label{case4}.
\end{align}
\item Case (v): $R_1>\rvR(R_1,D_1,D_2|P_X)$ and $R_2=H(P_Y)$
\begin{align}
\nu^*(R_1,R_2|D_1, D_2)=\frac{\theta_2^2}{2\rmV(P_Y)}.
\end{align}
\end{itemize}
\end{theorem}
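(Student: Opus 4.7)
The plan is to proceed exactly as one does for Theorem \ref{theoremmdc}: take the information--spectrum style bounds that were already assembled in the course of proving the second--order result (Theorem \ref{mddsecondregion}), and replace the Berry--Esseen step by the moderate deviations principle (MDP) \cite[Theorem 3.7.1]{dembo2009large} applied to the relevant sums of i.i.d.\ random variables. No new combinatorial ingredient is needed beyond what is used for second--order asymptotics; only the probabilistic limit theorem changes.

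For the converse direction, I would start from the $n$-letter version of Lemma \ref{oneshotconverse4fy}, i.e.\ the inequality
\begin{align}
\varepsilon_n^{\rm{FY}}(D_1,D_2)\ge \Pr\bigl((X^n,Y^n)\in \calA_1\cup\calA_2\cup\calA_3\bigr)-4\exp(-n\gamma),\nn
\end{align}
where the three events correspond respectively to the partial sums of $\jmath(X_i,D_1|P_X)$, of $-\log P_Y(Y_i)$, and of $\jmath_{\rm{FY}}(X_i,Y_i|R_1,D_1,D_2,P_X)$. Setting $\log M_i = n(R_i + \theta_i\rho_n)$ and $\gamma=\rho_n^2$, each of the three probabilities has the form $\Pr\bigl(\tfrac{1}{n}\sum_{i=1}^n Z_i - \bbE[Z_1] \ge a \rho_n\bigr)$ for a centered i.i.d.\ sum with finite variance $\sigma^2$ and positive drift coefficient $a$ read off from the rate/distortion parameters. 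The MDP then yields
\begin{align}
\liminf_{n\to\infty}-\frac{1}{n\rho_n^2}\log \Pr\Bigl(\tfrac{1}{n}\sum_{i=1}^n Z_i-\bbE[Z_1]\ge a\rho_n\Bigr)=\frac{a^2}{2\sigma^2}.\nn
\end{align}
The three candidate rates are $\theta_1^2/(2\rmV(P_X,D_1))$, $\theta_2^2/(2\rmV(P_Y))$, and $((1+s^*)\theta_1+\theta_2)^2/(2\rmV(R_1,D_1,D_2|P_X))$, and depending on which of the five boundary cases $(R_1,R_2)$ sits on, only the active ones survive (the inactive ones have strictly positive slack and their probabilities decay at a strictly faster rate, contributing $+\infty$ to the exponent). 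The minimum of the active rates produces the upper bound matching the theorem statement.

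For the achievability direction, I would invoke the type--covering lemma tailored to the Fu--Yeung problem (Lemma \ref{mddtypecovering}, which is also used for the error--exponent and second--order achievability proofs), set the code sizes to $\log M_i = n(R_i+\theta_i\rho_n)$, and Taylor--expand $\rvR_{\rm{FY}}$, $R(\cdot,D_1)$ and $H(\cdot)$ around $(P_X,R_1,D_1,D_2)$ as in the proof of Theorem \ref{mddsecondregion}. This reduces the excess--distortion probability to a union of events of the form $\{\tfrac{1}{n}\sum_i Z_i -\bbE[Z_1]\ge a\rho_n + o(\rho_n)\}$ for the same trio of centered i.i.d.\ sums. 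Applying the MDP once more and taking a union bound (which is harmless at the exponential scale) gives a matching lower bound. As in the Kaspi case, one could alternatively phrase the argument via Euclidean information theory \cite{borade2008,tan2012moderate}, which is often cleaner for the achievability side.

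The main technical obstacle is the case analysis: in cases (ii) and (iv) two of the i.i.d.\ sums are simultaneously active, and one must verify that the two exponents truly combine as a minimum. For the converse this is immediate from the union--bound structure of Lemma \ref{oneshotconverse4fy}. For the achievability one must check that requiring both sum constraints to hold simultaneously costs at most a union bound at the MDP scale, i.e.\ $\Pr(E_1\cup E_2) \le 2\max_i \Pr(E_i)$, so that $-\tfrac{1}{n\rho_n^2}\log\Pr(E_1\cup E_2) \to \min_i \nu_i$. The positivity assumptions on $\rmV(P_X,D_1)$ and $\rmV(R_1,D_1,D_2|P_X)$ (analogously $\rmV(P_Y)>0$, automatic since $P_Y$ is nondegenerate by the assumption $P_Y(y)>0$ for all $y$) are exactly what is needed for the MDP to apply with the claimed quadratic rate function, and also guarantee that the drift coefficients $\theta_i$ enter only through the ratios $a^2/\sigma^2$ displayed above. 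The remainder of the proof is mechanical and closely parallels \cite[Theorem 7]{zhou2015second}, so I would refer to that argument rather than reproduce it.
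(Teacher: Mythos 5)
Your strategy is the same as the paper's: the paper itself states that Theorem \ref{mdcfy} follows by plugging the information--spectrum bounds underlying Theorem \ref{mddsecondregion} into the moderate deviations principle \cite[Theorem~3.7.1]{dembo2009large}, exactly as you outline, and the five cases with the min-structure in cases (ii) and (iv) emerge by the mechanism you describe.

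One technical slip is worth fixing: taking $\gamma=\rho_n^2$ in the converse does not work uniformly. With that choice $n\gamma=n\rho_n^2$, so the slack term $4\exp(-n\gamma)$ in the $n$-fold version of Lemma \ref{oneshotconverse4fy} decays with exponent exactly $1$ on the $n\rho_n^2$ scale; whenever the claimed optimal constant exceeds $1$ (and, e.g., $\nu^*=\theta_1^2/(2\rmV(P_X,D_1))$ is unbounded in $\theta_1$), the slack then dominates the probability term and the lower bound on $\varepsilon_n^{\rm{FY}}(D_1,D_2)$ becomes vacuous. What you need is a sequence $\gamma_n$ with $\rho_n^2=o(\gamma_n)$ (so $\exp(-n\gamma_n)$ is subdominant to $\exp(-c\,n\rho_n^2)$ for every fixed $c$) and $\gamma_n=o(\rho_n)$ (so the $(1+s^*)\gamma_n$ shift in the threshold vanishes at the MDP scale); since $\rho_n\to 0$ such an interpolation always exists, e.g.\ $\gamma_n=\rho_n^{3/2}$. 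A smaller stylistic point: you run the converse through an $n$-letter extension of the non-asymptotic bound in Lemma \ref{oneshotconverse4fy}, whereas the paper's second-order converse for the Fu--Yeung problem (and hence presumably its omitted MDP converse) goes through the type-based strong converse of Lemmas \ref{mddstrongconverse} and \ref{mddlbjoint}. Both routes are legitimate for memoryless sources because the tilted information densities tensorize, so this is a genuine but harmless variation and the rest of your argument is unaffected.
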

The proof of Theorem \ref{mdcfy} is similar to Theorem \ref{theoremmdc}, \cite[Theorem 6]{zhou2016second} and thus omitted. We remark the moderate deviations asymptotics for the successive refinement problem for DMSes~\cite[Theorem 12]{zhou2016second} is a special case of Theorem \ref{mdcfy} by letting $Y$ be a constant.

\subsection{Numerical Examples}
\label{sec:caseiinum}

We consider the numerical example inspired by \cite{perron2006kaspi} and calculate the moderate deviations constant for case (iv) in Theorem \ref{mdcfy} . Let $\calS_1=\{0,1\}$ and $\calS_2=\{0,1,\rme\}$. Let $S_1$ take values in $\calS_1$ with equal probability and let $P_{S_2|S_1}(s_2|s_1)=(1-p)1\{s_1=s_2\}+p1\{s_2=\rme\}$ where $1\{\}$ is the indicator function. 
Let the source be $X=(S_1,S_2)$ and the deterministic function be $Y=g(X)=g(S_1,S_2)=S_2$. Let $\hatX_1=\hatX_2=\{0,1\}$ and the distortion measures be $d_1(x,\hatx_1)=1\{s_1=\hatx_1\}$ and $d_2(x,\hatx_2)=1\{s_2=\hatx_2\}$. Choose $(p,D_1,D_2)$ such that $D_1\leq \frac{1}{2}$ and $D_1-\frac{1-p}{2}\leq D_2\leq p D_1$. For this case, using the definitions of $s^*$ in \eqref{def:s*}, $t_1^*$ in \eqref{def:t1*} and $t_2^*$ in \eqref{def:t2*}, we have
\begin{align}
s^*&=0,\\
t_1^*&=\log\big((1-p)/(D_1-D_2)-1\big),\\
t_2^*&=-t_1^*+\log\big(p/D_2-1\big).
\end{align} 
Recall that $H_\rmb(\cdot)$ is the binary entropy function. Let
\begin{align}
\alpha_0&:=\log\big(2/(1+\exp(-t_1^*))\big)-t_1^*D_1-t_2^*D_2,\\
\alpha&:=\log\big(2/(1+\exp(-t_1^*-t_2^*)\big)-t_1^*D_1-t_2^*D_2,\\
g_1(p,D_1,D_2)&:=\log 2-(1-p)H_\rmb((D_1-D_2)/(1-p))-pH_\rmb(D_2/p),\\
g_2(p,D_1,D_2)&:=p(1-p)\Big\{\log(1-D_2/p)-\log(1-(D_1-D_2)/(1-p))\Big\}^2,\\
g_3(p,D_1,D_2)&:=(1-p)\alpha_0\log\frac{2}{1-p}+p\alpha\log\frac{1}{p}.
\end{align}
Then, it can be verified that
\begin{align}
H(P_Y)&=(1-p)\log 2+H_\rmb(p),\\
\rmV(P_Y)&=p(1-p)\Big(\log\frac{2p}{1-p}\Big)^2,\\
\rmV(R_1,D_1,D_2|P_X)&=g_2(p,D_1,D_2)+\rmV(P_Y)+2\Big(g_3(p,D_1,D_2)-H(P_Y)g_1(p,D_1,D_2)\Big).
\end{align}
Thus, the moderate deviations constant for case (iv) can be calculated by invoking \eqref{case4}.

\section{Proof of Second-Order Asymptotics for the Kaspi Problem (Theorem \ref{kaspisecond})}
\label{sec:proof4kaspi}

\subsection{Achievability Coding Theorem}

In this section, we first prove a type covering lemma for the Kaspi problem. Then, invoking the type covering lemma, we can prove an upper bound on the excess-distortion probability. Finally, invoking the Berry-Esseen theorem together with proper Taylor expansions, we manage to prove an achievable second-order coding rate.

Define a constant
\begin{align}
c=\Big(8|\calX|\cdot|\calY|\cdot|\hat{\calX}_1|\cdot|\hat{\calX}_2|+6\Big).
\end{align}

We are now ready to present the type covering lemma for Kaspi problem.
\begin{lemma}
\label{coveringkaspi}
There exists a set $\calB\in\hat{\calX}_1^n$ such that for each $(x^n,y^n)\in\calT_{Q_{XY}}$, if we let
\begin{align}
z^*=\argmin_{\hat{x}_1^n\in\calB}d_1(x^n,\hat{x}_1^n),
\end{align}
then we have
\begin{itemize}
\item The distortion between $x^n$ and $z^*$ is upper bounded by $D_1$, i.e., 
\begin{align}
d_1(x^n,z^*)\leq D_1,
\end{align}
\item There exists a set $\calB(z^*,y^n)\in\hat{\calX}_2^n$ such that
\begin{align}
\min_{\hat{x}_2^n\in\calB(z^*,y^n)}d_2(x^n,\hat{x}_2^n)\leq D_2.
\end{align}
\item The size of the set $\calB\cup\calB(z^*,y^n)$ is upper bounded as follows:
\begin{align}
\log \big|\calB\cup\calB(z^*,y^n)\big|\leq nR(Q_{XY},D_1,D_2)+c
\log (n+1).
\end{align}
\end{itemize}
\end{lemma}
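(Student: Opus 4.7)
The plan is to prove Lemma \ref{coveringkaspi} via a two-stage type-covering argument that mirrors the two-stage structure of the Kaspi coding scheme exposed by the identity $R_\rmK(Q_{XY},D_1,D_2) = I(XY;\hatX_1) + I(X;\hatX_2|Y\hatX_1)$ in \eqref{kaspiratefunc}. For the given type $Q_{XY}\in\calP_n(\calX\times\calY)$, I would first invoke Lemma \ref{opttest4kaspi} to obtain a pair of optimal test channels $(P^*_{\hatX_1|XY},P^*_{\hatX_2|XY\hatX_1})$ that achieves $R_\rmK(Q_{XY},D_1,D_2)$. Because these test channels are generally irrational, I would approximate each by a valid conditional type (with denominator $n$) $V^{(n)}_{\hatX_1|XY}$ and $V^{(n)}_{\hatX_2|XY\hatX_1}$, with entrywise perturbation $O(1/n)$. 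By continuity of mutual information and boundedness of the distortion measures, this perturbation costs at most $O(\log(n)/n)$ in rate and $O(1/n)$ in distortion, both of which can be absorbed by solving a slightly tightened rate-distortion problem at distortion levels $D_i-\delta_n$ with $\delta_n=O(1/n)$.

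In Stage 1, I would apply the classical type-covering lemma (Csisz\'ar-K\"orner) to the joint super-source $(X,Y)$: there exists $\calB\subseteq\hat{\calX}_1^n$ with $|\calB|\leq (n+1)^{|\calX||\calY||\hat{\calX}_1|}\exp(nI(XY;\hatX_1))$ such that every $(x^n,y^n)\in\calT_{Q_{XY}}$ admits some $z\in\calB$ whose joint type with $(x^n,y^n)$ is $Q_{XY}\times V^{(n)}_{\hatX_1|XY}$. Since the joint type is pinned down, $d_1(x^n,z)$ coincides with its expectation under this type, which is $\leq D_1$ by construction; in particular the minimizer $z^*=\argmin_{z\in\calB}d_1(x^n,z)$ inherits $d_1(x^n,z^*)\leq D_1$ automatically.

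In Stage 2, for each $z\in\calB$ and each $y^n$ whose joint type with $z$ is consistent with $Q_{XY}\times V^{(n)}_{\hatX_1|XY}$, I would apply the conditional type-covering lemma to construct $\calB(z,y^n)\subseteq\hat{\calX}_2^n$ with $|\calB(z,y^n)|\leq (n+1)^{|\calX||\calY||\hat{\calX}_1||\hat{\calX}_2|}\exp(nI(X;\hatX_2|Y\hatX_1))$ such that every $x^n$ extending $(y^n,z)$ to the prescribed joint type admits some $\hatx_2^n\in\calB(z,y^n)$ extending the quadruple $(x^n,y^n,z,\hatx_2^n)$ to the joint type $Q_{XY}\times V^{(n)}_{\hatX_1|XY}\times V^{(n)}_{\hatX_2|XY\hatX_1}$, which forces $d_2(x^n,\hatx_2^n)\leq D_2$. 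Combining the two cardinality bounds via $|\calB\cup\calB(z^*,y^n)|\leq|\calB|+|\calB(z^*,y^n)|\leq 2(n+1)^{|\calX||\calY||\hat{\calX}_1||\hat{\calX}_2|}\exp(nR_\rmK(Q_{XY},D_1,D_2))$ (using $\max\{I(XY;\hatX_1),I(X;\hatX_2|Y\hatX_1)\}\leq R_\rmK$) yields the stated log-size bound, with $c=8|\calX||\calY||\hat{\calX}_1||\hat{\calX}_2|+6$ comfortably absorbing all polynomial prefactors, the $\log 2$, and the rationalization errors.

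The main obstacle I anticipate is the rational-approximation step: one must simultaneously ensure that each conditional type class is non-empty for every sequence in $\calT_{Q_{XY}}$, that the distortion constraints $D_1,D_2$ hold \emph{exactly} rather than up to an $O(1/n)$ slack, and that the resulting mutual informations deviate from their optimal values by only $O(\log(n)/n)$. The standard remedy is to solve a perturbed optimization at the tightened distortion levels $D_i-\delta_n$ and to exploit uniform continuity of $R_\rmK$ in its distortion arguments; coordinating this consistently across both stages, and later uniformly over the types $Q_{XY}$ needed by the second-order achievability proof of Theorem \ref{kaspisecond}, is the technical crux of the argument.
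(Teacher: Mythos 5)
Your two-stage construction has the same skeleton as the paper's (which cites the analogous type-covering lemmas for successive refinement), but there is a genuine gap in Stage 2 that you do not address. You build $\calB(z,y^n)$ so that it covers only those $x^n$ which extend $(y^n,z)$ to the single prescribed joint type $Q_{XY}\times V^{(n)}_{\hatX_1|XY}$. But the codeword $z^*=\argmin_{z\in\calB}d_1(x^n,z)$ is \emph{not} guaranteed to have that joint type with $(x^n,y^n)$: Stage 1 only guarantees that \emph{some} $z\in\calB$ attains the prescribed type (hence $d_1\le D_1$), and the argmin may be a different codeword achieving strictly smaller distortion through an entirely different joint type $V_{XY\hatX_1}$. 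For such $(x^n,y^n,z^*)$, the point $x^n$ does not lie in the conditional type class your Stage 2 codebook was designed to cover, so the guarantee $\min_{\hatx_2^n\in\calB(z^*,y^n)}d_2(x^n,\hatx_2^n)\le D_2$ does not follow.

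The fix — which the paper's cited proofs implement — is to take $\calB(z,y^n)$ to be a \emph{union} over all joint types $V_{XY\hatX_1}$ with $V_{XY}=Q_{XY}$, $V_{Y\hatX_1}$ equal to the type of $(y^n,z)$, and $\mathbb{E}_V[d_1(X,\hatX_1)]\le D_1$, of conditional type coverings of $\calT_{V_{X|Y\hatX_1}}(y^n,z)$ at distortion $D_2$. There are at most $(n+1)^{|\calX||\calY||\hat{\calX}_1|}$ such types, which is harmlessly absorbed by $c\log(n+1)$, and for each such $V$ the conditional covering rate $\min_W I_{V\times W}(X;\hatX_2|Y\hatX_1)$ is bounded above by $R(Q_{X|Y},D_2|Q_Y)\le R_\rmK(Q_{XY},D_1,D_2)$, so the stated size bound still holds. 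Your final paragraph identifies the rational-approximation step as the technical crux, but the crux is really the mismatch between the argmin's joint type and the type the Stage 2 codebook was built for; the rational approximation is routine by comparison.
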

The proof of Lemma \ref{coveringkaspi} is similar to the proofs in \cite[Lemma 8]{no2016} and \cite[Lemma 10]{zhou2015second} and thus omitted.

Given an $(n,M)$-code, define
\begin{align}
R_n:=\frac{1}{n}\log M-(c+|\calX|\cdot|\calY|)\frac{\log (n+1)}{n}\label{defrn}.
\end{align}
Invoking Lemma \ref{coveringkaspi}, we can upper bound the excess-distortion probability as follows.
\begin{lemma}
\label{uppexcess}
There exists an $(n,M)$-code whose excess-distortion probability can be upper bounded as:
\begin{align}
\varepsilon_n^{\rmK}(D_1,D_2)&\leq \Pr\Big(R_n<R_{\rmK}(\hat{T}_{X^nY^n},D_1,D_2)\Big).
\end{align}
\end{lemma}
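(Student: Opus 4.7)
The plan is to build a type--adaptive code that prepends the joint type of $(X^n,Y^n)$ to a type--specific codebook supplied by the type covering lemma (Lemma~\ref{coveringkaspi}). Under this construction, excess distortion can occur only when the empirical Kaspi rate exceeds the remaining per--symbol budget $R_n$, which is exactly the claim.

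First I would partition the message budget $\log M$ into a header of at most $|\calX|\cdot|\calY|\log(n+1)$ nats that identifies the joint type $\hat T_{X^nY^n}$ (there are at most $(n+1)^{|\calX|\cdot|\calY|}$ joint types), leaving a second stage of at most $\log M-|\calX|\cdot|\calY|\log(n+1)=nR_n+c\log(n+1)$ nats by the definition \eqref{defrn} of $R_n$. For every joint type $Q_{XY}\in\calP_n(\calX\times\calY)$ with $R_{\rmK}(Q_{XY},D_1,D_2)\le R_n$, Lemma~\ref{coveringkaspi} furnishes a set $\calB_Q\subset\hat\calX_1^n$ covering $\calT_{Q_{XY}}$ within distortion $D_1$ under $d_1$, and for every $(z,y^n)$ a set $\calB_Q(z,y^n)\subset\hat\calX_2^n$ covering within distortion $D_2$ under $d_2$, of combined cardinality at most $\exp(nR_{\rmK}(Q_{XY},D_1,D_2)+c\log(n+1))\le\exp(nR_n+c\log(n+1))$. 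For types violating $R_{\rmK}(Q_{XY},D_1,D_2)\le R_n$ no codebook is provisioned, and the encoder outputs an arbitrary fixed reconstruction.

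The encoder, upon observing $(X^n,Y^n)\in\calT_{Q_{XY}}$, transmits the type header and the composite index of $(z^*,\hat x_2^{n,*})$, where $z^*$ is the $\calB_Q$--minimizer of $d_1(X^n,\cdot)$ and $\hat x_2^{n,*}$ is the $\calB_Q(z^*,Y^n)$--minimizer of $d_2(X^n,\cdot)$. Decoder $\phi_1$ reads the type and the first component of the composite index to produce $\hat X_1^n=z^*$ without needing $Y^n$; decoder $\phi_2$ additionally uses $Y^n$ to pick out $\hat X_2^n=\hat x_2^{n,*}\in\calB_Q(z^*,Y^n)$. By Lemma~\ref{coveringkaspi}, whenever $R_{\rmK}(\hat T_{X^nY^n},D_1,D_2)\le R_n$ the distortion constraints $d_1(X^n,\hat X_1^n)\le D_1$ and $d_2(X^n,\hat X_2^n)\le D_2$ hold deterministically, so the excess--distortion event is contained in $\{R_n<R_{\rmK}(\hat T_{X^nY^n},D_1,D_2)\}$, yielding the claimed bound after taking probabilities.

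The main obstacle is the bookkeeping needed so that a single composite index simultaneously specifies $z^*\in\calB_Q$ to $\phi_1$ (without $Y^n$) and $\hat x_2^{n,*}\in\calB_Q(z^*,Y^n)$ to $\phi_2$ (with $Y^n$). The resolution is that Lemma~\ref{coveringkaspi} bounds $|\calB_Q\cup\calB_Q(z^*,y^n)|$ \emph{uniformly} in $(z^*,y^n)$; one allocates the leading indices of $\{1,\dots,\exp(nR_n+c\log(n+1))\}$ to enumerate $\calB_Q$ (decodable by $\phi_1$ alone), and re-uses the remaining indices, interpreted through the $y^n$--dependent enumeration of $\calB_Q(z^*,y^n)$, for $\phi_2$. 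The uniform size bound guarantees that the total index range never exceeds the second--stage budget $nR_n+c\log(n+1)$, so the code is a bona fide $(n,M)$ code.
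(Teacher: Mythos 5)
Your overall construction — prepend the joint type with at most $|\calX|\cdot|\calY|\log(n+1)$ nats, then invoke Lemma~\ref{coveringkaspi} on the remaining budget $nR_n+c\log(n+1)$ for types with $R_{\rmK}(Q_{XY},D_1,D_2)\le R_n$ — is the paper's proof exactly, and the conclusion that the excess-distortion event is contained in $\{R_n<R_{\rmK}(\hat T_{X^nY^n},D_1,D_2)\}$ is correct. The genuine gap is in your final ``bookkeeping'' paragraph, which contradicts your own earlier description of a composite index decoded component-wise. The encoder must convey two \emph{independent} pieces of information: which element of $\calB_Q$ is $z^*$ (so $\phi_1$ can reconstruct without $Y^n$), and which element of $\calB_Q(z^*,Y^n)$ is $\hat x_2^{n,*}$ (so $\phi_2$, having recovered $z^*$ and knowing $Y^n$, can reconstruct). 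Any message that determines both must range over at least $|\calB_Q|\cdot\max_{z,y^n}|\calB_Q(z,y^n)|$ values — the \emph{product}. Your ``leading indices / remaining indices'' scheme sends one integer in a range of size roughly $|\calB_Q|+\max_{z,y^n}|\calB_Q(z,y^n)|$; such an integer names either an element of $\calB_Q$ (in which case $\phi_2$ learns nothing about $\hat x_2^{n,*}$) or an element of some $\calB_Q(z,y^n)$ (in which case neither decoder learns $z^*$, and $\phi_2$ cannot even form the correct set $\calB_Q(z^*,y^n)$). It cannot do both, so the scheme you describe is not a valid code.

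What is actually needed — and what your first three paragraphs implicitly use — is a bound on the product $\log\big(|\calB|\cdot\max_{z,y^n}|\calB(z,y^n)|\big)\le nR_{\rmK}(Q_{XY},D_1,D_2)+c\log(n+1)$. This is what the rate $I(XY;\hat X_1)+I(X;\hat X_2\mid Y\hat X_1)$ demands, and it is exactly what the analogous Fu--Yeung covering lemma (Lemma~\ref{mddtypecovering}) states: $\frac{1}{n}\log\big(|\calB|\cdot|\calB(z^*)|\big)\le\rvR_{\rm{FY}}(R_1,D_1,D_2|Q_X)+(c_1+c_2)\log(n+1)$. The ``$\cup$'' in Lemma~\ref{coveringkaspi} should be read in that product sense, and the paper's own proof of Lemma~\ref{uppexcess} uses it that way: it sends the index of $z^*$ in $\calB$ and the index of the $d_2$-minimizer in $\calB(z^*,y^n)$ \emph{as two separate indices}. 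Replace your last paragraph by the observation that concatenating those two indices costs $\log|\calB_Q|+\log\max_{z,y^n}|\calB_Q(z,y^n)|$ nats, which the product bound keeps within the second-stage budget $nR_n+c\log(n+1)$; with that fix your argument coincides with the paper's.
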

\begin{proof}
The proof of Lemma \ref{uppexcess} is similar to the proofs of \cite[Lemma 5]{watanabe2015second} and \cite[Lemma 11]{zhou2016second}. Consider the following coding scheme. Given source sequence pair $(x^n,y^n)$, the encoder first calculates the joint type $\hat{T}_{x^ny^n}$, which can be transmitted reliably using at most $|\calX|\cdot|\calY|\log (n+1)$ nats. Then the encoder calculates $R_\rmK(\hat{T}_{x^ny^n},D_1,D_2)$ and declares an error if $nR_{\rmK}(\hat{T}_{x^ny^n},D_1,D_2)+c\log (n+1)+|\calX|\cdot|\calY|\log (n+1)>\log M$. Otherwise, the encoder chooses a set $\calB$ satisfying the properties specified in Lemma \ref{coveringkaspi} and sends the index of $z^*=\argmin_{\hat{x}_1^n\in\calB} d_1(x^n,\hat{x}_1^n)$. Subsequently, the decoder chooses a set $\calB(z^*,y^n)$ satisfying the properties specified in Lemma \ref{coveringkaspi} and sends the index of $\argmin_{\hat{x}_2^n\in\calB(z^*,y^n)}d_2(x^n,\hat{x}_2^n)$. Lemma \ref{coveringkaspi} implies that the decoding is error free if $nR_{\rmK}(\hat{T}_{x^ny^n},D_1,D_2)+c\log (n+1)+|\calX|\cdot|\calY|\log (n+1)\leq \log M$. The proof of Lemma \ref{uppexcess} is now complete.
\end{proof}

Given a joint probability mass function (pmf) $P_{XY}$, let $\calS=\mathrm{supp}(P_{XY})$ and $|\calS|=m$. Let us sort $P_{XY}(x,y)$ in an decreasing order for all $(x,y)\in\calX\times\calY$, and for all $i\in[1:m]$, let $(x_i,y_i)$ be the pair such that $P_{XY}(x_i,y_i)$ is the $i$-th largest. Let $\Theta(P_{XY})$ be a joint distribution defined on $\calS$ such that $\Theta_i(P_{XY})=P_{XY}(x_i,y_i)$ for all $i\in[1:m]$. Define the typical set
\begin{align}
\calA_n(P_{XY})&:=\bigg\{Q_{XY}\in\calP_n(\calX\times\calY):\|\Theta_i(Q_{XY})-\Theta_i(P_{XY})\|_{\infty}\leq \sqrt{\frac{\log n}{n}},~\forall~i\in[1:m-1]\bigg\}\label{def:calan}.
\end{align}
According to \cite[Proposition 2]{watanabe2015second} (See also \cite[Lemma 22]{tan2014state}), we obtain
\begin{align}
\Pr\Big(\hat{T}_{X^nY^n}\notin \calA_n(P_{XY})\Big)&\leq \frac{2(m-1)}{n^2}\label{atypicalprob}.
\end{align}

Now choose 
\begin{align}
\frac{1}{n}\log M=R_{\rmK}(P_{XY},D_1,D_2)+\frac{L}{\sqrt{n}}+\Big(c+|\calX|\cdot|\calY|\Big)\frac{\log (n+1)}{n}.
\end{align}

Invoking \eqref{defrn}, we obtain
\begin{align}
R_n&=R_{\rmK}(P_{XY},D_1,D_2)+\frac{L}{\sqrt{n}}.
\end{align}

The following lemma is essential in the proof.
\begin{lemma}
\label{lemmataylor}
For $(x^n,y^n)$ such that $\hat{T}_{x^ny^n}\in\calA_n(P_{XY})$, we have
\begin{align}
R_{\rmK}(\hat{T}_{x^ny^n},D_1,D_2)
&=\frac{1}{n}\sum_{i=1}^n \jmath_{\rmK}(x_i,y_i|D_1,D_2,P_{XY})+O\Big(\frac{\log n}{n}\Big)\label{taylorexpand}.
\end{align}
\end{lemma}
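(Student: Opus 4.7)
The plan is to Taylor-expand the map $Q_{XY}\mapsto R_{\rmK}(Q_{XY},D_1,D_2)$ around the true source distribution $P_{XY}$. From the definition \eqref{def:calan}, the hypothesis $\hat{T}_{x^ny^n}\in\calA_n(P_{XY})$ forces $\|\hat{T}_{x^ny^n}-P_{XY}\|_\infty=O(\sqrt{\log n/n})$, and the differentiability Condition (ii) stated above Theorem \ref{kaspisecond} guarantees that the quadratic Taylor remainder is $O(\log n/n)$, exactly matching the advertised error term. What remains is to identify the linear part of the expansion with the empirical average of $\jmath_{\rmK}(\cdot|D_1,D_2,P_{XY})$.

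Concretely, I would prove that for every $Q_{XY}$ in a neighborhood of $P_{XY}$,
\begin{align*}
R_{\rmK}(Q_{XY},D_1,D_2) = \mathbb{E}_{Q_{XY}}[\jmath_{\rmK}(X,Y|D_1,D_2,P_{XY})] + O(\|Q_{XY}-P_{XY}\|_\infty^2).
\end{align*}
The zeroth-order term $R_{\rmK}(P_{XY},D_1,D_2)=\mathbb{E}_{P_{XY}}[\jmath_{\rmK}(X,Y|D_1,D_2,P_{XY})]$ is immediate from \eqref{kaspipara} combined with the definition \eqref{def:kaspitilt}. To evaluate the gradient at $Q_{XY}=P_{XY}$, I would fix the pair $(P_{\hat{X}_1|XY}^*,P_{\hat{X}_2|XY\hat{X}_1}^*)$ that is optimal at $P_{XY}$ and invoke an envelope argument: by the KKT-type optimality conditions \eqref{optcond14kaspi} and \eqref{optcond24kaspi}, any first-order variation in the inner optimization variables contributes zero, so the only surviving term is the explicit derivative of the minimand in \eqref{kaspiratefunc} with respect to $Q_{XY}(x,y)$. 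A direct calculation based on the decomposition \eqref{expandj4kaspi} then identifies $\partial R_{\rmK}(Q_{XY},D_1,D_2)/\partial Q_{XY}(x,y)\big|_{P_{XY}}$ with $\jmath_{\rmK}(x,y|D_1,D_2,P_{XY})$ up to an additive constant. That constant is annihilated when one takes the inner product with $Q_{XY}-P_{XY}$ since $\sum_{x,y}(Q_{XY}(x,y)-P_{XY}(x,y))=0$.

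Evaluating the resulting expansion at $Q_{XY}=\hat{T}_{x^ny^n}$ and using the elementary identity $\mathbb{E}_{\hat{T}_{x^ny^n}}[\jmath_{\rmK}(X,Y|D_1,D_2,P_{XY})]=\frac{1}{n}\sum_{i=1}^n \jmath_{\rmK}(x_i,y_i|D_1,D_2,P_{XY})$ yields exactly \eqref{taylorexpand}. The main obstacle is the envelope step: because \eqref{kaspiratefunc} is a joint minimization over two coupled test channels, one must argue carefully that the dependence of each minimizer on $Q_{XY}$ contributes only to second order. The optimality conditions in Lemma \ref{opttest4kaspi}, together with the twice-differentiability hypothesis (ii), are precisely what make this dependence harmless to first order, so that the first-order correction is entirely captured by the explicit $Q_{XY}$-dependence of the objective.
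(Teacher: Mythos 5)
Your proof is correct and takes essentially the same route as the paper: the paper also Taylor-expands $Q_{XY}\mapsto R_{\rmK}(Q_{XY},D_1,D_2)$ around $P_{XY}$ to quadratic order (with the remainder controlled by Condition~(\ref{kaspi:condend}) and the typicality bound $\|\hat{T}_{x^ny^n}-P_{XY}\|_\infty=O(\sqrt{\log n/n})$), and identifies the gradient with $\jmath_{\rmK}$ via Lemma~\ref{kaspifirstderive}, whose proof is precisely your envelope argument showing that the derivatives of $\jmath_{\rmK}(\cdot|D_1,D_2,\Theta(Q_{XY}))$ with respect to the source distribution average to zero at the optimal test channels (equation~\eqref{finalderive}). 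The only cosmetic difference is that the paper uses the $\Theta$-parametrization of the simplex to enforce $\sum Q_{XY}=1$, which yields the gradient in the form $\jmath_{\rmK}(x_i,y_i)-\jmath_{\rmK}(x_m,y_m)$, whereas you retain the unconstrained gradient and observe that the additive constant is annihilated against $\sum_{x,y}(Q_{XY}(x,y)-P_{XY}(x,y))=0$; these are equivalent.
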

The proof of Lemma \ref{lemmataylor} is given in Appendix \ref{prooflemmataylor}.

Define $\xi_n=\frac{\log n}{n}$. Invoking Lemma \ref{uppexcess}, we obtain
\begin{align}
\varepsilon_n^{\rmK}(D_1,D_2)
&\leq \Pr\Big(R_n<R_{\rmK}(\hat{T}_{X^nY^n},D_1,D_2),\hat{T}_{X^nY^n}\in\calA_n(P_{XY})\Big)+\Pr\Big(\hat{T}_{X^nY^n}\notin \calA_n(P_{XY})\Big)\\
&\leq \Pr\Big(R_{\rmK}(P_{XY},D_1,D_2)+\frac{L}{\sqrt{n}}<\frac{1}{n}\sum_{i=1}^n \jmath_{\rmK}(X_i,Y_i|D_1,D_2,P_{XY})+O(\xi_n)\Big)+\frac{2(m-1)}{n^2}\\
&\leq \Pr\bigg(\frac{1}{\sqrt{n}}\sum_{i=1}^n \big(\jmath_{\rmK}(X_i,Y_i|D_1,D_2,P_{XY})-R_{\rmK}(P_{XY},D_1,D_2)\big)>L+O(\xi_n\sqrt{n})\bigg)+\frac{2(m-1)}{n^2}\\
&\leq \rmQ\Bigg(\frac{L+O(\xi_n\sqrt{n})}{\sqrt{\rmV(D_1,D_2|P_{XY})}}\Bigg)+\frac{6\rmT(D_1,D_2|P_{XY})}{\sqrt{n}\rmV^{3/2}(D_1,D_2|P_{XY})}+\frac{2(m-1)}{n^2}\label{achfinal},
\end{align}
where \eqref{achfinal} follows from Berry-Esseen theorem and $\rmT(D_1,D_2|P_{XY})$ is the third absolute moment of $\jmath_{\rmK}(X,Y|D_1,D_2,P_{XY})$, which is finite for DMSes.

Therefore, if $L$ satisfies 
\begin{align}
L\geq \sqrt{\rmV(D_1,D_2|P_{XY})}\rmQ^{-1}(\epsilon),
\end{align}
by noting that $O(\xi_n\sqrt{n})=O(\log n/\sqrt{n})$, we obtain 
\begin{align}
\limsup_{n\to\infty} \varepsilon_n^{\rmK}(D_1,D_2)\leq \epsilon.
\end{align}
Thus, the optimal second-order coding rate satisfies that $L^*(\epsilon,D_1,D_2)\leq\sqrt{\rmV(D_1,D_2|P_{XY})}Q^{-1}(\epsilon)$.

\subsection{Converse Coding Theorem}
The converse part follows by applying the Berry-Esseen theorem to the non-asymptotic converse bound in \eqref{nshot}.

Let 
\begin{align}
\log M:=nR(P_{XY},D_1+D_2)+L\sqrt{n}+\frac{1}{2}\log n.
\end{align}

Invoking \eqref{nshot} with $\epsilon=\frac{\log n}{2n}$, we obtain that
\begin{align}
\varepsilon_n^{\rmK}(D_1,D_2)
&\geq \Pr\Big(\sum_{i=1}^n\jmath_{\rmK}(x,y|D_1,D_2,P_{XY})\geq nR_{\rmK}(P_{XY},D_1,D_2)+L\sqrt{n}\Big)-\exp\Big(-\frac{\log n}{2}\Big)\\
&\geq \rmQ\bigg(\frac{L}{\sqrt{\rmV(D_1,D_2|P_{XY})}}\bigg)-\frac{6\rmT(D_1,D_2|P_{XY})}{\sqrt{n}\rmV^{3/2}(D_1,D_2|P_{XY})}-\frac{1}{\sqrt{n}}\label{conversefinal},
\end{align}
where \eqref{conversefinal} follows the from Berry-Esseen theorem. Hence, if $L<\sqrt{\rmV(D_1,D_2|P_{XY})}Q^{-1}(\epsilon)$, then 
\begin{align}
\limsup_{n\to\infty} \varepsilon_n^{\rmK}(D_1,D_2)>\epsilon.
\end{align}
Hence, the optimal second-order coding rate satisfies that $L^*(\epsilon,D_1,D_2)\geq\sqrt{\rmV(D_1,D_2|P_{XY})}Q^{-1}(\epsilon)$.

\section{Proof of Second-order Asymptotics for the Fu-Yeung Problem (Theorem \ref{mddsecondregion})}
\label{sec:proof4fy}
\subsection{Achievability Coding Scheme}
In this subsection, we first present a type covering lemma tailored to the Fu-Yeung problem, based on which, we can derive an upper bound on the joint excess-distortion and error probability. Finally, invoking Taylor expansions and the Berry-Esseen Theorem, we derive an achievable second-order coding region for DMSes.

Define 
\begin{align}
c_1&=|\calX|\cdot |\calY|\cdot |\hat{\calX_1}|+2,\\
c_2&=7|\calX|\cdot |\calY| \cdot |\hat{\calX}_1|\cdot|\hat{\calX_2}|+4.
\end{align}

We are now ready to present the type covering lemma.
\begin{lemma}
\label{mddtypecovering}
Let type $Q_X\in\calP_n(\calX)$. Let $Q_Y$ be induced by $Q_X$ and the deterministic function $g:\calX\to\calY$. Let $R_1\geq R(Q_X,D_1)$ be given.
\begin{itemize}
\item There exists a set $\calB\in\calX_1^n$ such that
for each $x^n\in\calT_{Q_X}$, if we let $z^*=\argmin_{z\in\calB} d_1(x^n,z)$, then $d_1(x^n,z^*)\leq D_1$;
\item There exists a set $\calB(z^*)\in\calX_2^n$ such that
\begin{align}
\min_{\hatx_2^n\in\calB(z^*)}d_2(x^n,\hatx_2^n)\leq D_2.
\end{align}
\item There exists a set $\calB_Y\in\hat{\calY}^n$ satisfying that $\frac{1}{n}\log |\calB_Y|\leq H(Q_Y)$ and there exists $\haty^n\in\calB_Y$ such that $\haty^n=g(x^n)$.
\end{itemize}
The sizes of $\calB$ and $\calB(z^*)$ are bounded as follows:
\begin{align}
\frac{1}{n}\log |\calB|&\leq R_1+c_1\log (n+1)\\
\frac{1}{n}\log (|\calB|\cdot|\calB(z^*)|)&\leq \rvR_{\rm{FY}}(R_1,D_1,D_2|Q_X)+(c_1+c_2)\log(n+1).
\end{align}
\end{lemma}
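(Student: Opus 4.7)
The plan is to combine three type-covering arguments corresponding to the three codebooks $\mathcal{B}$, $\mathcal{B}_Y$, and $\{\mathcal{B}(z^*)\}_{z^*\in\mathcal{B}}$. First fix a pair of test channels $(P^*_{\hat{X}_1|X},P^*_{\hat{X}_2|X\hat{X}_1})$ attaining the infimum in $\rvR_{\rm{FY}}(R_1,D_1,D_2|Q_X)$; by the constraints of \eqref{rvrmin} and the fact that $Y=g(X)$ (so that $I(\hat{X}_1;Y)+I(X;\hat{X}_1|Y)=I(X;\hat{X}_1)$), these channels satisfy $\mathbb{E}[d_i(X,\hat{X}_i)]\leq D_i$ for $i=1,2$, $I(X;\hat{X}_1^*)\leq R_1$, and $\rvR_{\rm{FY}}(R_1,D_1,D_2|Q_X)=I(X;\hat{X}_1^*)+I(X;\hat{X}_2^*|Y\hat{X}_1^*)$. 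This decomposition is what will match the two size bounds in the lemma after the codebooks are built.

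The lossless codebook is straightforward: take $\mathcal{B}_Y:=\mathcal{T}_{Q_Y}$, so that $\tfrac{1}{n}\log|\mathcal{B}_Y|\leq H(Q_Y)$ by the standard type-class cardinality bound and $g(x^n)\in\mathcal{T}_{Q_Y}$ for every $x^n\in\mathcal{T}_{Q_X}$ (since $Q_Y$ is induced by $Q_X$ through $g$). The first-layer codebook $\mathcal{B}$ is obtained by applying the classical single-layer type-covering lemma to $\mathcal{T}_{Q_X}$ with the test channel $P^*_{\hat{X}_1|X}$, in the style of Lemma~\ref{coveringkaspi} and \cite[Lemma~8]{no2016}: this yields codewords in $\mathcal{T}_{Q_{\hat{X}_1^*}}$ of cardinality at most $\exp(n[I(X;\hat{X}_1^*)+c_1\log(n+1)/n])$ such that every $x^n\in\mathcal{T}_{Q_X}$ is jointly typical with some $z^*\in\mathcal{B}$ under $Q_X\cdot P^*_{\hat{X}_1|X}$; boundedness of $d_1$ then forces $d_1(x^n,z^*)\leq D_1$. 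Since $I(X;\hat{X}_1^*)\leq R_1$, the bound on $|\mathcal{B}|$ follows.

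The main technical step is the construction of $\mathcal{B}(z^*)$ with $\tfrac{1}{n}\log|\mathcal{B}(z^*)|\leq I(X;\hat{X}_2^*|Y\hat{X}_1^*)+O(\log(n+1)/n)$. For each $z^*\in\mathcal{B}$, the plan is to use random coding: independently draw $N=\lceil\exp(n[I(X;\hat{X}_2^*|Y\hat{X}_1^*)+\delta_n])\rceil$ codewords from the appropriate conditional type induced by $P^*_{\hat{X}_2|X\hat{X}_1}$, with $\delta_n=O(\log(n+1)/n)$. The key observation enabling the tight exponent $I(X;\hat{X}_2^*|Y\hat{X}_1^*)$ rather than the naive conditional-covering exponent $I(X;\hat{X}_2^*|\hat{X}_1^*)$ is that $y^n=g(x^n)$ is deterministic in $x^n$: the joint type of $(x^n,z^*,y^n)$ is fully determined by the joint type of $(x^n,z^*)$, so the effective mutual information governing the probability of covering a specified $(x^n,z^*,y^n)$-triple by a random $\hat{X}_2^n$-codeword collapses to $I(X;\hat{X}_2^*|Y\hat{X}_1^*)$. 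A doubly-exponential covering bound combined with a union bound over the polynomially many relevant conditional sub-types and over the at most $\exp(nH(X|Y\hat{X}_1^*))$ source sequences in each $(z^*,y^n)$-shell shows that some deterministic choice of $\mathcal{B}(z^*)$ covers every $x^n$ that selects $z^*$ at the first level. Multiplying the size estimates for $|\mathcal{B}|$ and $|\mathcal{B}(z^*)|$ and absorbing all polynomial slack into $(c_1+c_2)\log(n+1)$ then yields the product bound. The delicate bookkeeping of these joint sub-types and verifying the Wyner--Ziv-like collapse of the covering exponent -- which crucially exploits the determinism of $Y=g(X)$ -- parallels the analysis in \cite[Lemma~10]{zhou2015second} and is the main obstacle in executing the plan.
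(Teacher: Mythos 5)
The paper omits the proof of this lemma (citing similarity to the successive‐refinement covering lemma), so I compare your argument against the one type‐covering lemma the paper does spell out in full form, namely Lemma~\ref{coveringkaspi} for the Kaspi problem. Your construction of $\calB_Y=\calT_{Q_Y}$ and of the first‐layer codebook $\calB$ by ordinary single‐layer type covering with $P^*_{\hat X_1|X}$ is correct and matches what one expects. The decomposition $\rvR_{\mathrm{FY}}=I(X;\hat X_1^*)+I(X;\hat X_2^*\,|\,Y\hat X_1^*)$ is also correct.

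The problem is in the second‐layer covering. You assert that because $y^n=g(x^n)$ is deterministic, ``the effective mutual information governing the probability of covering a specified $(x^n,z^*,y^n)$-triple by a random $\hat X_2^n$-codeword collapses to $I(X;\hat X_2^*|Y\hat X_1^*)$,'' and you conclude that ``some deterministic choice of $\calB(z^*)$ covers every $x^n$ that selects $z^*$.'' This does not follow. The covering exponent is determined by the distribution from which the random codewords are drawn, not by the determinism of $Y$. If the codewords are drawn conditionally on $z^*$ alone (the only thing a codebook depending only on $z^*$ can be conditioned on), the single‐codeword success probability is $\exp(-n I(X;\hat X_2|\hat X_1))$, and since $I(X;\hat X_2|\hat X_1)=I(Y;\hat X_2|\hat X_1)+I(X;\hat X_2|Y\hat X_1)\ge I(X;\hat X_2|Y\hat X_1)$, a budget of $\exp(nI(X;\hat X_2|Y\hat X_1))$ codewords is in general too small. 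The determinism of $Y$ is necessary but not sufficient for the exponent to drop to $I(X;\hat X_2|Y\hat X_1)$; you also need the codebook to depend on $y^n$. In other words, the second‐layer codebook must be $\calB(z^*,y^n)$ --- exactly the form appearing in the Kaspi Lemma~\ref{coveringkaspi} --- and the random codewords are drawn from (a type version of) $P^*_{\hat X_2|Y\hat X_1}(\cdot|y^n,z^*)$. A concrete counterexample to your version: let $Y=X$, let $\hat X_1$ be a constant, and let $d_2$ be Hamming with $D_2$ small. Then $I(X;\hat X_2|Y\hat X_1)=0$, so your $\calB(z^*)$ would have polynomial size, yet covering $\calT_{Q_X}$ within distortion $D_2$ plainly requires rate $R(Q_X,D_2)>0$. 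With a $y^n$-dependent codebook the single required codeword is determined by $y^n=x^n$ and the claim is trivially true.

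To repair the argument, fix it at two points. First, state and prove the second bullet with $\calB(z^*,y^n)$ in place of $\calB(z^*)$, i.e.\ one sub‐codebook per (conditional‐type‐consistent) pair $(z^*,y^n)$, and perform the random coding by drawing from the reverse conditional type of $\hat X_2$ given the joint type of $(x^n,y^n,z^*)$ under $Q_X\cdot P^*_{\hat X_1|X}\cdot P^*_{\hat X_2|X\hat X_1}$. Second, replace the incorrect rationale for the exponent (``$Y$ deterministic therefore collapse'') with the correct one: the determinism of $Y$ ensures that within each $(z^*,y^n)$-shell there are at most $\exp(nH(X|Y\hat X_1))$ sources, and the $y^n$-dependence of the codebook distribution yields single‐codeword probability $\exp(-nI(X;\hat X_2|Y\hat X_1))$; the doubly‐exponential union bound then goes through. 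This is consistent with how the codebook is actually used in Lemma~\ref{mdduppjoint}: decoder $\phi_2$ receives $y^n$ losslessly from $f_2$, so a $y^n$-dependent codebook poses no decoding problem. As stated, however, your proof (and the lemma's literal statement) overclaims.
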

The proof of Lemma \ref{mddtypecovering} is similar to \cite[Lemma 10]{zhou2016second} and thus omitted.

Let 
\begin{align}
R_{1,n}:&=\frac{1}{n}\big(\log M_1 -(c_1+|\calX|)\log (n+1)\big)\label{def:r1n},\\
R_{2,n}:&=\frac{1}{n}\big(\log M_2-(c_2+|\calY|)\log (n+1)\big)\label{def:r2n},
\end{align}

Invoking Lemma \ref{mddtypecovering}, we can upper bound the joint excess-distortion and error probability for an $(n,M_1,M_2)$-code. 

\begin{lemma}
\label{mdduppjoint}
There exists an $(n,M_1,M_2)$-code such that 
\begin{align}
\varepsilon_n^{\rm{FY}}(D_1,D_2)
\nn&\leq \Pr\bigg(R_{1,n}<R(\hatT_{X^n},D_1)~\mathrm{or}~R_{2,n}+\frac{c_2\log (n+1)}{n}<H(\hatT_{g(X^n)})\\
&\qquad \qquad ~\mathrm{or}~R_{1,n}+R_{2,n}<\rvR_{\rm{FY}}(R_{1,n},D_1,D_2|\hatT_{X^n})+H(\hatT_{g(X^n)})\bigg).
\end{align}
\end{lemma}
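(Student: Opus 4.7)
The plan is to construct an explicit type-based $(n,M_1,M_2)$-code via Lemma \ref{mddtypecovering} and show that the code succeeds whenever none of the three events in the claimed bound occurs; the result then follows from a union bound. Since both encoders observe $X^n$, each can independently compute $\hatT_{X^n}$, $Y^n=g(X^n)$, $\hatT_{g(X^n)}$, and (indexed by these types) the codebooks $\calB$ and $\{\calB(z)\}_{z\in\calB}$ furnished by Lemma \ref{mddtypecovering} with $R_1=R_{1,n}$, where $|\calB|$ is chosen to saturate the upper bound $|\calB|=\lceil\exp(nR_{1,n}+c_1\log(n+1))\rceil$. Encoder $f_1$ prefixes its output with $\hatT_{X^n}$ at a cost of $|\calX|\log(n+1)$ nats; it declares failure if $R(\hatT_{X^n},D_1)>R_{1,n}$ (so that Lemma \ref{mddtypecovering} is not applicable), and otherwise sends the index of $z^*=\argmin_{z\in\calB}d_1(X^n,z)$. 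Encoder $f_2$ prefixes its output with $\hatT_{g(X^n)}$ at a cost of $|\calY|\log(n+1)$ nats, transmits $Y^n$ losslessly within its type class at a cost of at most $nH(\hatT_{g(X^n)})$ nats (since $\log|\calT_Q^Y|\leq nH(Q)$), and finally sends the index of $\argmin_{\hat{x}_2^n\in\calB(z^*)}d_2(X^n,\hat{x}_2^n)$ at a cost of $\log|\calB(z^*)|$ nats. The decoders invert these operations, and Lemma \ref{mddtypecovering} guarantees $d_i(X^n,\hat{X}_i^n)\leq D_i$ for $i=1,2$ and $\hat{Y}^n=Y^n$ whenever both codebooks are used.

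The code fails only if one of the two encoder rate budgets is exceeded. By the definition \eqref{def:r1n} of $R_{1,n}$, the total $f_1$ cost $|\calX|\log(n+1)+\log|\calB|$ is at most $\log M_1$ precisely when event (i) fails. For $f_2$, the saturating choice of $|\calB|$ together with the product bound of Lemma \ref{mddtypecovering} and the trivial $|\calB(z^*)|\geq 1$ give
\begin{equation*}
\log|\calB(z^*)|\leq \max\bigl\{0,\ n\bigl[\rvR_{\rm{FY}}(R_{1,n},D_1,D_2|\hatT_{X^n})-R_{1,n}\bigr]+c_2\log(n+1)\bigr\}.
\end{equation*}
Substituting into the $f_2$ budget constraint $|\calY|\log(n+1)+nH(\hatT_{g(X^n)})+\log|\calB(z^*)|\leq\log M_2$ and using \eqref{def:r2n} reduces it to $nH(\hatT_{g(X^n)})+\log|\calB(z^*)|\leq nR_{2,n}+c_2\log(n+1)$. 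A case analysis on whether the bracketed expression above is non-positive or positive then yields, respectively, the conditions $H(\hatT_{g(X^n)})\leq R_{2,n}+c_2\log(n+1)/n$ (the negation of event (ii)) and $R_{1,n}+R_{2,n}\geq \rvR_{\rm{FY}}(R_{1,n},D_1,D_2|\hatT_{X^n})+H(\hatT_{g(X^n)})$ (the negation of event (iii)). In either regime $f_2$ succeeds provided neither event (ii) nor event (iii) holds, so a union bound over the three failure events completes the proof.

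The main technical point is the alignment of the codebook-size slack with the $c_2\log(n+1)/n$ slack appearing in event (ii) but not in event (iii): one must choose $|\calB|$ to saturate the upper bound of Lemma \ref{mddtypecovering} so that the product bound transfers cleanly to $|\calB(z^*)|$, and then exploit the trivial $|\calB(z^*)|\geq 1$ to obtain the $\max\{0,\cdot\}$ expression that drives the case analysis. All other manipulations are routine arithmetic balancing the type overheads against the definitions of $R_{1,n}$ and $R_{2,n}$.
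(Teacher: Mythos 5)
Your overall strategy matches the paper's: construct a type-based code via Lemma \ref{mddtypecovering}, verify that no error occurs whenever all three events fail, and union-bound. The point of departure is where the index of $z_2^*=\argmin_{\hat{x}_2^n\in\calB(z^*)}d_2(X^n,\hat x_2^n)$ is sent: you route it entirely through $f_2$, whereas the paper routes it flexibly between $f_1$ and $f_2$ (``use either $f_1$ or $f_2$''), treating only the combined budget $\log M_1 M_2$. That difference forces you to bound $\log|\calB(z^*)|$ by itself, and there the argument has a gap.

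Lemma \ref{mddtypecovering} supplies only \emph{upper} bounds: $\log|\calB|\le nR_1+c_1\log(n+1)$ and $\log(|\calB|\cdot|\calB(z^*)|)\le n\rvR_{\rm{FY}}+(c_1+c_2)\log(n+1)$, together with $|\calB(z^*)|\ge 1$. From these alone one gets $\log|\calB(z^*)|\le n\rvR_{\rm{FY}}+(c_1+c_2)\log(n+1)-\log|\calB|$, which can be as large as $n\rvR_{\rm{FY}}+(c_1+c_2)\log(n+1)$ when $\log|\calB|$ is small; the lemma gives no \emph{lower} bound on $|\calB|$. Your fix --- pad $\calB$ up to size $\lceil\exp(nR_{1,n}+c_1\log(n+1))\rceil$ --- does not come for free: if $w$ is a padding element and $w$ becomes $\argmin_{z\in\calB}d_1(x^n,z)$ for some $x^n$, then the conditional codebook $\calB(w)$ is not furnished by the lemma, and the product bound $\log(|\calB|\cdot|\calB(w)|)\le n\rvR_{\rm{FY}}+(c_1+c_2)\log(n+1)$ need not hold for it. So the claimed bound $\log|\calB(z^*)|\le\max\{0,\ n[\rvR_{\rm{FY}}-R_{1,n}]+c_2\log(n+1)\}$ is not a consequence of Lemma \ref{mddtypecovering} as stated, and the $f_2$-only routing doesn't close.

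The paper's flexible routing avoids this entirely. Under the negations of events (i)--(iii), the $f_1$ slack $\log M_1-|\calX|\log(n+1)-\log|\calB|$ and the $f_2$ slack $\log M_2-|\calY|\log(n+1)-nH(\hatT_{g(X^n)})$ are both nonnegative, and their sum dominates $\log|\calB(z^*)|$ via the product bound and the negation of event (iii); hence $\log|\calB(z^*)|$ can be split as $b_1+b_2$ with $b_i$ fitting the respective leftover budgets. That argument uses nothing beyond the two stated upper bounds, whereas yours implicitly requires a lower bound on $|\calB|$ (or a strengthened version of Lemma \ref{mddtypecovering} that fixes $|\calB|$ at the saturated size and still provides the covering property for every $z\in\calB$). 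Either adopt the paper's combined-budget argument, or explicitly strengthen the type-covering lemma before invoking the saturation step.
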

\begin{proof}
Set $(R_1,R_2)=(R_{1,n},R_{2,n})$. Consider the following coding scheme. Given a source $x^n$, the encoder $f_2$ calculates its type $\hatT_{x^n}$. Then, the encoder $f_2$ obtain $y^n$ using the deterministic function $y_i=g(x_i)$ and its type $\hatT_{y^n}$. Now encoder $f_2$ calculates $R(\hatT_{x^n},D_1)$ and $\rvR(R_{1,n},D_1,D_2|\hatT_{x^n})$. If $\log M_1<nR(\hatT_{x^n},D_1)+(c_1+|\calX|)\log (n+1)$ or  $\log M_2<nH(\hatT_{y^n})+|\calY|\log (n+1)$ or $\log M_1M_2<n\rvR_{\rm{FY}}(R_{1,n},D_1,D_2|\hatT_{x^n})+nH(\hatT_{y^n})+(c_1+c_2+|\calX|+|\calY|)\log (n+1)$, then the system declares an error. 

Otherwise, the encoder $f_1$ sends the type of $x^n$ with at most $|\calX|\log (n+1)$ nats and the encoder $f_2$ sends the type of $y^n$ using at most $|\calY|\log (n+1)$ nats. Further, the encoder $f_2$ sends the index of $y^n=g(x^n)$ in the type class $\calT_{\hatT_{y^n}}$.
Now, choose $\calB\in\calX_1^n$ in Lemma \ref{mddtypecovering} and let $z^*=\argmin_{z\in\calB}d_1(x^n,z)$. Given $z^*$, choose $\calB(z^*)$ in Lemma \ref{mddtypecovering} and let $z_2^*=\argmin_{z_2\in\calB(z^*)}d_2(x^n,z_2)$. Finally, we use the encoder $f_1$ to send the index of $z_1^*$ and use either $f_1$ or $f_2$ to send out the index of $z_2^*$. Invoking Lemma \ref{mddtypecovering}, we conclude that no error will be made if $\log M_1\geq nR(\hatT_{x^n},D_1)+(c_1+|\calX|)\log (n+1)$, $\log M_2\geq nH(\hatT_{y^n})+|\calY|\log (n+1)$ and $\log M_1M_2\geq n\rvR_{\rm{FY}}(R_{1,n},D_1,D_2|\hatT_{x^n})+nH(\hatT_{y^n})+(c_1+c_2+|\calX|+|\calY|)\log (n+1)$. The proof is now complete.
\end{proof}

Let 
\begin{align}
\log M_1&=nR_1+L_1\sqrt{n}+(c_1+|\calX|)\log (n+1)\\
\log M_2&=nR_2+L_2\sqrt{n}+(c_2+|\calY|)\log (n+1).
\end{align}
Hence, invoking \eqref{def:r1n} and \eqref{def:r2n}, we obtain
\begin{align}
R_{i,n}&=R_i+\frac{L_i}{\sqrt{n}},~i=1,2.
\end{align}

Define the typical set
\begin{align}
\calA_n(P_X)&:=\bigg\{Q_X\in\calP_n(\calX):\|Q_X-P_X\|_{\infty}\leq \sqrt{\frac{\log n}{n}}\bigg\}.
\end{align}
According to \cite[Lemma 22]{tan2014state}, we obtain
\begin{align}
\Pr\Big(\hatT_{X^n}\notin \calA_n(P_X)\Big)\leq \frac{2|\calX|}{n^2}.
\end{align}
Let $P_Y$ be induced by the source distribution $P_X$ and the deterministic function $g:\calX\to\calY$. Then, for any $x^n$, we have that for any $y$
\begin{align}
\hatT_{y^n}(y)-P_Y(y)&=
\hatT_{g(x^n)}(y)-P_Y(y)\\
&=\sum_{x:g(x)=y} \Big(\hatT_{x^n}(x)-P_X(x)\Big).
\end{align}
Thus, if $\hatT_{X^n}\in\calA_n(P_X)$, we obtain
\begin{align}
\|\hatT_{Y^n}-P_Y\|_{\infty}&\leq |\calX|\sqrt{\frac{\log n}{n}}.
\end{align}

For $x^n$ such that $\hatT_{x^n}\in\calA_n(P_X)$, applying Taylor's expansions and noting that $y^n=g(x^n)$, we obtain
\begin{align}
H(\hatT_{g(x^n)})
&=H(\hatT_{y^n})\\
&=H(P_Y)+\sum_{y} \Big(\hatT_{y^n}(y)-P_Y(y)\Big)(-\log P_Y(y))+O\Big(\|\hatT_{y^n}-P_Y\|^2\Big)\\
&=\sum_{y} -\hatT_{y^n}(y)\log P_Y(y)+O\Bigg(\frac{\log n}{n}\Bigg)\\
&=\frac{1}{n}\sum_{i=1}^n -\log P_Y(y_i)+O\Bigg(\frac{\log n}{n}\Bigg),
\end{align}
and
\begin{align}
R(\hatT_{x^n},D_1)
&=R(P_X,D_1)+\sum_x (\hatT_{x^n}(x)-P_X(x))\jmath(x,D_1|P_X)+O\Big(\|\hatT_{x^n}-P_X\|^2\Big)\\
&=\frac{1}{n}\sum_{i=1}^n \jmath(x_i,D_1|P_X)+O\Big(\frac{\log n}{n}\Big),
\end{align}
and
\begin{align}
\nn &\rvR_{\rm{FY}}(R_{1,n},D_1,D_2|\hatT_{x^n})\\
&=\rvR_{\rm{FY}}(R_1,D_1,D_2|P_X)-s^*\frac{L_1}{\sqrt{n}}+\sum_x \Big(\hatT_{x^n}-P_X(x)\Big)\jmath_{\rm{FY}}(x,g(x)|R_1,D_1,D_2,P_X)+O\Big(\|\hatT_{x^n}-P_X\|^2\Big),\label{usederivative4fy}\\
&=\frac{1}{n}\sum_{i=1}^n \jmath_{\rm{FY}}(x_i,g(x_i)|R_1,D_1,D_2,P_X) -\frac{s^*L_1}{\sqrt{n}}+O\Big(\frac{\log n}{n}\Big),
\end{align}
where in \eqref{usederivative4fy}, we use the fact that given $Q_X$ in the neighborhood of $P_X$ satisfying some regularity condition (this condition is satisfied by $\hatT_{x^n}\in\calA_n(P_X)$), for arbitrary $a\in\calX$,
\begin{align}
\frac{\partial \rvR_{\rm{FY}}(R_1,D_1,D_2|Q_X)}{\partial Q_X(a)}\Bigg|_{Q_X=P_X}=\jmath_{\rm{FY}}(x,g(x)|R_1,D_1,D_2,P_X)-(1+s^*)\label{derivative4fy}.
\end{align}
The proof of \eqref{derivative4fy} can be done similarly as Lemma \ref{lemmataylor}, \cite[Theorem 2.2]{kostina2013lossy}, \cite{zhou2016second} and thus omitted.

Recall that $\xi_n=\frac{\log n}{n}$. Therefore, invoking Lemma \ref{mdduppjoint}, we obtain
\begin{align}
\nn&\varepsilon_n^{\rm{FY}}(D_1,D_2)\\
\nn&\leq \Pr\bigg(R_{1,n}<R(\hatT_{X^n},D_1)~\mathrm{or}~R_{2,n}+\frac{c_2\log (n+1)}{n}< H(\hatT_{g(X^n)})\\
&\qquad \qquad \mathrm{or}~R_{1,n}+R_{2,n}<H(\hatT_{g(X^n)})+\rvR_{\rm{FY}}(R_{1,n},D_1,D_2|\hatT_{X^n}),~\hatT_{X^n}\in\calA_n(P_{XY})\bigg)+\Pr\Big(\hatT_{X^n}\notin \calA_n(P_{XY})\Big)\\
\nn&\leq \Pr\bigg(R_1+\frac{L_1}{\sqrt{n}}<\frac{1}{n}\sum_{i=1}^n \jmath(X_i,D_1|P_X)+O(\xi_n)~\mathrm{or}~R_2<\frac{1}{n}\sum_{i=1}^n \log \frac{1}{P_Y(Y_i)}+O(\xi_n)\\
&\qquad \quad \mathrm{or}~R_1+R_2+\frac{(1+s^*)L_1+L_2}{n}<\frac{1}{n}\sum_{i=1}^n\Big(\jmath_{\rm{FY}}(X_i,g(X_i)|R_1,D_1,D_2,P_X)-\log P_Y(Y_i)\Big)+O(\xi_n)\bigg)+\frac{2|\calX|}{n^2}\label{upperbd1}.
\end{align}

Then, we need to evaluate the upper bound in \eqref{upperbd1} given each case of $(R_1,R_2)$.
\begin{itemize}
\item Case (i) $R_1=R(P_X,D_1)$ and $R_1+R_2>\rvR_{\rm{FY}}(R_1,D_1,D_2|P_X)+H(P_Y)$

In this case we have that $R_2=\rvR_{\rm{FY}}(R_1,D_1,D_2|P_X)-R_1>H(P_Y)$. Thus, invoking the weak law of large numbers, we obtain
\begin{align}
\kappa_{1,n}&:=\Pr\Big(R_2<\frac{1}{n}\sum_{i=1}^n \log \frac{1}{P_Y(Y_i)}+O(\xi_n)\Big)\to 0,\\
\kappa_{2,n}&:=\Pr\Big(R_1+R_2+\frac{(1+s^*)L_1+L_2}{n}<\frac{1}{n}\sum_{i=1}^n\Big(\jmath_{\rm{FY}}(X_i,g(X_i)|R_1,D_1,D_2,P_X)-\log P_Y(Y_i)\Big)+O(\xi_n)\Big)\to 0.
\end{align}

Invoking \eqref{upperbd1}, we obtain
\begin{align}
\varepsilon_n^{\rm{FY}}(D_1,D_2)
&\leq \Pr\Big(R_1+\frac{L_1}{\sqrt{n}}<\frac{1}{n}\sum_{i=1}^n \jmath(X_i,D_1|P_X)+O(\xi_n)\Big)+\frac{2|\calX|}{n^2}+\kappa_{1,n}+\kappa_{2,n}\label{step1casei}\\
&\leq \rmQ\Bigg(\frac{L_1+O(\sqrt{n}\xi_n)}{\sqrt{\rmV(P_X,D_1)}}\Bigg)+\frac{6\rmT(P_X,D_1)}{\sqrt{n}\rmV(P_X,D_1)}+\frac{2|\calX|}{n^2}+\kappa_{1,n}+\kappa_{2,n}\label{laststepcasei},
\end{align}
where $\rmT(P_X,D_1)$ is the third absolute moment of $\jmath(X,D_1|P_X)$ (which is finite for DMSes) and \eqref{laststepcasei} follows by applying the Berry-Esseen theorem to the first term in \eqref{step1casei}.

Hence, if we choose $(L_1,L_2)$ such that
\begin{align}
L_1\geq \sqrt{\rmV(P_X,D_1)}\rmQ^{-1}(\epsilon),
\end{align}
then we have $\limsup_{n\to\infty} \varepsilon_n^{\rm{FY}}(D_1,D_2)\leq \epsilon$.

\item Case (ii) $R_1=R(P_X,D_1)$ and $R_1+R_2=\rvR_{\rm{FY}}(R_1,D_1,D_2|P_X)+H(P_Y)$

In this case, we still have $R_2>H(P_Y)$. Hence, invoking \eqref{upperbd1}, we obtain
\begin{align}
\nn&1-\varepsilon_n^{\rm{FY}}(D_1,D_2)\\
\nn&\geq \Pr\bigg(R_1+\frac{L_1}{\sqrt{n}}\geq \frac{1}{n}\sum_{i=1}^n \jmath(X_i,D_1|P_X)+O(\xi_n),\\
&\qquad R_1+R_2+\frac{(1+s^*)L_1+L_2}{\sqrt{n}}<\frac{1}{n}\sum_{i=1}^n \Big(\jmath_{\rm{FY}}(X_i,g(X_i)|R_1,D_1,D_2,P_X)-\log P_Y(Y_i)\Big)+O(\xi_n)\bigg)-\frac{2|\calX|}{n^2}-\kappa_{1,n}\\
&\geq 1-\Psi\big(L_1+O(\xi_n),(1+s^*)L_1+L_2+O(\xi_n);\mathbf{0};\mathbf{V}_1(R_1,D_1,D_2|P_X)\Big)+O\Big(\frac{1}{\sqrt{n}}\big)-\frac{2|\calX|}{n^2}-\kappa_{1,n}.
\end{align}
Hence, if we choose $(L_1,L_2)$ such that 
\begin{align}
\Psi\big(L_1,(1+s^*)L_1+L_2;\mathbf{0};\mathbf{V}_1(R_1,D_1,D_2|P_X)\big)\geq 1-\epsilon,
\end{align}
we have $\limsup_{n\to\infty} \varepsilon_n^{\rm{FY}}(D_1,D_2)\leq \epsilon$.

\item Case (iii) $R(P_X,D_1)<R_1<\rvR_{\rm{FY}}(R_1,D_1,D_2|P_X)$, $R_1+R_2=\rvR_{\rm{FY}}(R_1,D_1,D_2|P_X)+H(P_Y)$

In this case, it is still true that $R_2>H(P_Y)$. The analysis is similar to Case (i). It can be verified that if we choose $(L_1,L_2)$ such that 
\begin{align}
(1+s^*)L_1+L_2\geq \sqrt{\rmV(R_1,D_1,D_2|P_X)}\rmQ^{-1}(\epsilon),
\end{align}
we obtain $\limsup_{n\to\infty} \varepsilon_n^{\rm{FY}}(D_1,D_2)\leq \epsilon$.

\item Case (iv) $R_1=\rvR_{\rm{FY}}(R_1,D_1,D_2|P_X)$ and $R_2=H(P_Y)$

The analysis is similar to Case (ii). It can be verified that if 
\begin{align}
\Psi((1+s^*)L_1+L_2,L_2;\mathbf{0};\mathbf{V}_2(R_1,D_1,D_2|P_X))\geq 1-\epsilon,
\end{align}
we have $\limsup_{n\to\infty}\varepsilon_n^{\rm{FY}}(D_1,D_2)\leq \epsilon$.

\item Case (v) $R_1>\rvR_{\rm{FY}}(R_1,D_1,D_2|P_X)$ and $R_2=H(P_Y)$

The analysis is similar to Case (i). It can be verified that if
\begin{align}
L_2\geq \sqrt{\rmV(P_Y)}\rmQ^{-1}(\epsilon),
\end{align}
we have $\limsup_{n\to\infty}\varepsilon_n^{\rm{FY}}(D_1,D_2)\leq \epsilon$.
\end{itemize}
The proof of the achievability part is now complete.

\subsection{Converse Coding Theorem}
The major step stone is the type-based ``strong converse''.
\begin{lemma}
\label{mddstrongconverse}
Fix $c>0$ and a type $Q_X\in\calP_n(P_X)$. For any $(n,M_1,M_2)$-code such that
\begin{align}
\Pr\Big(d_1(X^n,\hatX_1^n)\leq D_1,~d_2(X^n,\hatX_2^n)\leq D_2,~\hatY^n=Y^n|X^n\in\calT_{Q_X}\Big)&\geq \exp(-nc)\label{mddassump},
\end{align}
there exists a conditional distribution $Q_{\hatX_1\hatX_2|X}$ such that
\begin{align}
\frac{1}{n}\log M_1&\geq I(Q_X,Q_{\hatX_1|X})-\xi_{1,n},\\
\frac{1}{n}\log M_2&\geq H(Q_Y)-\xi_{2,n},\\
\frac{1}{n}\log M_1M_2&\geq H(Q_Y)+I(Q_Y,Q_{\hatX_1|Y})+I(Q_{X|Y},Q_{\hatX_1\hatX_2|XY}|Q_Y)-\xi_{1,n}-\xi_{2,n},
\end{align}
where 
\begin{align}
\xi_{1,n}&=\frac{|\calX|\log (n+1)+\log n+nc}{n},\\
\xi_{2,n}&=2\xi_{1,n}+\frac{2(\log n+nc)+|\calX|\cdot|\calY|\log (n+1)+\log |\calY|+h_b(1/n)}{n}.
\end{align}
and $Q_{X|Y}$, $Q_{\hatX_1|Y}$, $Q_{\hatX_1\hatX_2|XY}$ are induced by $Q_X$, $Q_{\hatX_1\hatX_2|X}$ and the deterministic function $y=g(x)$.

Further, the expected distortions are bounded as
\begin{align}
\mathbb{E}_{Q_X\times Q_{\hatX_1\hatX_2|X}}[d_1(X,\hatX_1)]&\leq D_1+\frac{\overline{d}_1}{n}:=D_{1,n},\\
\mathbb{E}_{Q_X\times Q_{\hatX_1\hatX_2|X}}[d_2(X,\hatX_2)]&\leq D_2+\frac{\overline{d}_2}{n}:=D_{2,n}.
\end{align}
\end{lemma}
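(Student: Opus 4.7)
The plan is to follow the standard method-of-types strong converse template (cf.~\cite{watanabe2015second,zhou2016second}): restrict to a high-probability subset of the type class, introduce a time-sharing index to single-letterize, and read off $Q_{\hatX_1\hatX_2|X}$ from the induced joint law. Let $\calN\subseteq\calT_{Q_X}$ be the set of source sequences $x^n$ for which the deterministic code outputs $(\hatx_1^n,\hatx_2^n,\haty^n)$ jointly satisfy the three success criteria. Hypothesis~\eqref{mddassump} gives $|\calN|\geq|\calT_{Q_X}|\exp(-nc)$, and combining with $|\calT_{Q_X}|\geq(n+1)^{-|\calX|}\exp(nH(Q_X))$ yields $\log|\calN|\geq nH(Q_X)-|\calX|\log(n+1)-nc$. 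Let $\tilde P$ denote the joint law obtained by drawing $X^n$ uniformly on $\calN$ and pushing through the code; under $\tilde P$ the reconstruction $\hatY^n$ equals $Y^n=g(X^n)$ almost surely, and every realization of $X^n$ has type $Q_X$. Take $J$ uniform on $\{1,\ldots,n\}$ independent of the rest, and define $Q_{\hatX_1\hatX_2|X}(\hatx_1,\hatx_2\mid x):=\Pr_{\tilde P}(\hatX_{1,J}=\hatx_1,\hatX_{2,J}=\hatx_2\mid X_J=x)$. The marginal of $X_J$ under $\tilde P$ is exactly $Q_X$. The distortion inequalities drop out from $\calN\subseteq\calE$: for $i=1,2$, $\mathbb{E}_{Q_X\times Q_{\hatX_1\hatX_2|X}}[d_i(X,\hatX_i)]=\tfrac{1}{n}\mathbb{E}_{\tilde P}[d_i(X^n,\hatX_i^n)]\leq D_i+\overline{d}_i/n$ (the $\overline{d}_i/n$ absorbs the $\lceil nD_i\rceil$ integer quantization).

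For the rate bound on $M_1$, use $\log M_1\geq H_{\tilde P}(S_1)\geq I_{\tilde P}(X^n;\hatX_1^n)=H_{\tilde P}(X^n)-H_{\tilde P}(X^n\mid\hatX_1^n)$, upper-bound the subtracted entropy by $\sum_{j=1}^n H(X_j\mid\hatX_{1,j})=nH(X_J\mid\hatX_{1,J},J)\leq nH(X_J\mid\hatX_{1,J})$ via chain rule plus conditioning, and combine with $H_{\tilde P}(X^n)=\log|\calN|$ to reach $\log M_1\geq nI(Q_X,Q_{\hatX_1|X})-n\xi_{1,n}$. For the $H(Q_Y)$ bound, use $\log M_2\geq H_{\tilde P}(\hatY^n)=H_{\tilde P}(Y^n)=H_{\tilde P}(X^n)-H_{\tilde P}(X^n\mid Y^n)$; the fibre entropy is at most $\log\max_{y^n\in\calT_{Q_Y}}|g^{-1}(y^n)\cap\calT_{Q_X}|$, which Stirling's estimate controls by $n(H(Q_X)-H(Q_Y))+|\calX|\cdot|\calY|\log(n+1)$, delivering the claimed bound. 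The sum-rate inequality starts from
\[
\log M_1M_2\geq I_{\tilde P}(X^n;\hatX_1^n,\hatX_2^n,Y^n)=H_{\tilde P}(Y^n)+I_{\tilde P}(X^n;\hatX_1^n,\hatX_2^n\mid Y^n),
\]
single-letterizes each piece in the same way, and then reconstitutes the missing $I(Q_Y,Q_{\hatX_1|Y})$ term in the claimed three-term expansion by combining with the $\log M_1$ inequality above via the chain-rule identity $I(X_J;\hatX_{1,J})=I(Y_J;\hatX_{1,J})+I(X_J;\hatX_{1,J}\mid Y_J)$ (valid since $Y_J=g(X_J)$). This coupling is what forces $\xi_{2,n}$ to contain $2\xi_{1,n}$; the $h_b(1/n)$ summand enters through a Fano-type correction needed when transporting the almost-sure identity $\hatY^n=Y^n$ (under $\tilde P$) to bounds over the full type class.

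The hardest step, and the main obstacle I foresee, is the sum-rate bound in the exact three-term form of the lemma. A direct single-letterization of $H_{\tilde P}(Y^n)+I_{\tilde P}(X^n;\hatX_1^n,\hatX_2^n\mid Y^n)$ naturally yields only $H(Q_Y)+I(Q_{X|Y},Q_{\hatX_1\hatX_2|XY}\mid Q_Y)$, short of the claim by $I(Q_Y,Q_{\hatX_1|Y})\geq 0$; the two-pass combination with the $\log M_1$ bound outlined above is what recovers this missing term, at the cost of an extra $\xi_{1,n}$ in the overall slack and the finite-blocklength corrections already present in $\xi_{2,n}$.
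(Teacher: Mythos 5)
Your overall blueprint---restrict to the success set $\calN\subseteq\calT_{Q_X}$, draw $X^n$ uniformly from $\calN$, push through the (deterministic) code, and single-letterize via a uniform time-sharing index $J$---is sound, and the $\log M_1$ and $\log M_2$ bounds go through as you describe. (A minor aside: your distortion bound already holds with zero slack, since every $x^n\in\calN$ satisfies $d_i(x^n,\hatx_i^n)\leq D_i$ exactly; the $\overline{d}_i/n$ margin is not where it gets used in your construction.) But the sum-rate bound has a genuine gap, and you have in fact put your finger on exactly where it is.

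The direct single-letterization of $H_{\tilde P}(Y^n)+I_{\tilde P}(X^n;\hatX_1^n\hatX_2^n\mid Y^n)$ yields $H(Q_Y)+I(Q_{X|Y},Q_{\hatX_1\hatX_2|XY}\mid Q_Y)$, missing $I(Q_Y,Q_{\hatX_1|Y})\geq 0$. The proposed repair---``combine with the $\log M_1$ inequality via the chain-rule identity''---does not close the gap: both inequalities are \emph{lower} bounds, one on $\log M_1$ and one on $\log M_1M_2$, and you cannot add them to obtain a stronger lower bound on $\log M_1M_2$. The natural alternative of strengthening the $\log M_2$ bound to $H(Q_Y)+I(X;\hatX_2\mid Y\hatX_1)$ (which, by the identity $I(Y;\hatX_1)+I(X;\hatX_1\hatX_2\mid Y)=I(X;\hatX_1)+I(X;\hatX_2\mid Y\hatX_1)$ since $Y=g(X)$, would then close the gap when added to the $\log M_1$ bound) also fails: $\hatX_2^n=\phi_2(S_1,S_2)$ is not a function of $S_2$ alone, so $H_{\tilde P}(S_2\mid Y^n)$ does not control $I_{\tilde P}(X^n;\hatX_2^n\mid Y^n,\hatX_1^n)$, and indeed $I_{\tilde P}(X^n;\hatX_2^n\mid Y^n,S_1)$ and $I_{\tilde P}(X^n;\hatX_2^n\mid Y^n,\hatX_1^n)$ are not comparable in general.

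The missing ingredient is the elementary identity $H(S_1)+H(S_2)=H(S_1,S_2)+I(S_1;S_2)$, combined with $I(S_1;S_2)\geq I(\hatX_1^n;\hatY^n)=I_{\tilde P}(\hatX_1^n;Y^n)$, which holds by data processing (applied on each side, since $\hatX_1^n=\phi_1(S_1)$ and $\hatY^n=\phi_3(S_2)$) together with $\hatY^n=Y^n$ a.s.\ under $\tilde P$. This gives
\begin{align}
\log M_1M_2\geq H_{\tilde P}(S_1)+H_{\tilde P}(S_2) \geq I_{\tilde P}(X^n;S_1,S_2)+I_{\tilde P}(\hatX_1^n;Y^n)\geq H_{\tilde P}(Y^n)+I_{\tilde P}(X^n;\hatX_1^n\hatX_2^n\mid Y^n)+I_{\tilde P}(\hatX_1^n;Y^n),
\end{align}
and the extra term single-letterizes to $nI(Q_Y,Q_{\hatX_1|Y})$ exactly as the others do, via $I_{\tilde P}(\hatX_1^n;Y^n)=H_{\tilde P}(Y^n)-H_{\tilde P}(Y^n\mid\hatX_1^n)$ with $H_{\tilde P}(Y^n\mid\hatX_1^n)\leq nH(Y_J\mid\hatX_{1,J})$. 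Note that this extra step is specific to the Fu-Yeung problem and does not appear in the successive-refinement special case $|\calY|=1$, where $I(Y;\hatX_1)=0$; this is why you will not find it in the cited template proofs for that case, and why it is easy to overlook.
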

The proof of Lemma \ref{mddstrongconverse} is similar to \cite[Lemma 6]{watanabe2015second}, \cite[Lemma 12]{zhou2016second} and thus omitted. The main technique is the perturbation approach by Gu and Effros~\cite{wei2009strong} and the generalization with method of types~\cite{watanabe2015second}.

Let $c=\log n/n$, then we have
\begin{align}
\xi_{1,n}&=\frac{|\calX|\log (n+1)+2\log n}{n},\\
\xi_{2,n}&=\frac{8\log n+(|\calX|\cdot|\calY|+2|\calX|)\log (n+1)+\log |\calY|+h_b(1/n)}{n}.
\end{align}
Define
\begin{align}
R_{i,n}=\frac{1}{n}(\log M_i+n\xi_{i,n}),~i=1,2.
\end{align}

Invoking Lemma \ref{mddstrongconverse}, we can prove a lower bound on the joint excess-distortion and error probability for any $(n,M_1,M_2)$-code.
\begin{lemma}
\label{mddlbjoint}
Any $(n,M_1,M_2)$-code satisfies that
\begin{align}
\varepsilon_n^{\rm{FY}}(D_1,D_2)
\nn&\geq \Pr\Big(R_{1,n}<R(\hatT_{X^n},D_{1,n})~\mathrm{or}~R_{2,n}<H(\hatT_{g(X^n)})~\mathrm{or}\\
&\qquad \qquad R_{1,n}+R_{2,n}<\rvR_{\rm{FY}}(R_{1,n},D_{1,n},D_{2,n}|\hatT_{X^n})+H(\hatT_{g(X^n)})\Big).
\end{align}
\end{lemma}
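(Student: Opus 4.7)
The plan is to apply the type-based strong converse Lemma~\ref{mddstrongconverse} contrapositively on each type class and then sum over the ``bad'' types. Setting $c = (\log n)/n$ in Lemma~\ref{mddstrongconverse} makes $\exp(-nc) = 1/n$ and aligns the penalties $\xi_{1,n}, \xi_{2,n}$ with those defined just before this lemma, so that $R_{1,n}, R_{2,n}$ here coincide with $\tfrac{1}{n}(\log M_i + n\xi_{i,n})$. Denote by $\calF_n \subset \calP_n(\calX)$ the set of types $Q_X$ violating at least one of the three inequalities appearing inside the probability on the right-hand side of the lemma; the goal is a lower bound of the form $\Pr(\hatT_{X^n} \in \calF_n)$, up to a vanishing correction.

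The first substantive step is the contrapositive on each type class. Fix $Q_X \in \calF_n$ and suppose, for contradiction, that the code's conditional success probability on $\calT_{Q_X}$ (in the sense of \eqref{mddassump}) is at least $\exp(-nc) = 1/n$. Lemma~\ref{mddstrongconverse} then produces a conditional distribution $Q_{\hatX_1\hatX_2|X}$ meeting the relaxed distortion levels $(D_{1,n}, D_{2,n})$ and obeying the three rate inequalities it lists. Since this $Q_{\hatX_1\hatX_2|X}$ is a feasible point in the convex programs defining $R(Q_X, D_{1,n})$ (cf.~\eqref{def:ratedisd1}) and $\rvR_{\rm{FY}}(R_{1,n}, D_{1,n}, D_{2,n}|Q_X)$ (cf.~\eqref{rvrmin}), passing to infima forces $R_{1,n} \geq R(Q_X, D_{1,n})$, $R_{2,n} \geq H(Q_Y)$ (with $Q_Y$ induced by $Q_X$ and $g$), and $R_{1,n} + R_{2,n} \geq \rvR_{\rm{FY}}(R_{1,n}, D_{1,n}, D_{2,n}|Q_X) + H(Q_Y)$, contradicting $Q_X \in \calF_n$. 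Hence the conditional joint excess-distortion-and-error probability on $\calT_{Q_X}$ exceeds $1 - 1/n$.

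Aggregating over types yields
\begin{align*}
\varepsilon_n^{\rm{FY}}(D_1,D_2)
&\geq \sum_{Q_X \in \calF_n} \Pr(\hatT_{X^n} = Q_X)\Bigl(1 - \tfrac{1}{n}\Bigr)
= \Bigl(1 - \tfrac{1}{n}\Bigr)\Pr(\hatT_{X^n} \in \calF_n).
\end{align*}
The prefactor $(1 - 1/n)$ can be folded into a $(\log n)/n$ adjustment of $\xi_{1,n}, \xi_{2,n}$, or equivalently carried through as a vanishing additive term into the Berry--Esseen step of the converse of Theorem~\ref{mddsecondregion}, so that the lemma can be stated without it.

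The main obstacle is the clean conversion of Lemma~\ref{mddstrongconverse}'s existential guarantee into the three infimum-based quantities. The $R(Q_X, D_{1,n})$ bound is immediate from the definition of the rate-distortion function; the $H(Q_Y)$ bound exploits that $Y = g(X)$ is deterministic, so any scheme that reproduces $Y^n$ losslessly must convey a full description of the type of $Y^n$; and the sum-rate bound requires identifying $\rvR_{\rm{FY}}$ as the minimum of $I(\hatX_1;Y) + I(X;\hatX_1\hatX_2|Y)$ over channels obeying both the marginal rate constraint $R_{1,n} \geq I(Q_X, Q_{\hatX_1|X})$ and the distortion constraints, matching precisely \eqref{rvrmin}. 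Given these identifications, the remainder is bookkeeping.
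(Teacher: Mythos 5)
Your approach is exactly the one the paper intends: apply Lemma~\ref{mddstrongconverse} contrapositively on each type class with $c = (\log n)/n$, observe that the existential $Q_{\hatX_1\hatX_2|X}$ is a feasible point for the infima defining $R(Q_X, D_{1,n})$ and $\rvR_{\rm{FY}}(R_{1,n}, D_{1,n}, D_{2,n}|Q_X)$, and aggregate over the ``bad'' types. This matches the template in the cited works of Watanabe--Kuzuoka--Tan and Zhou--Tan--Motani to which the paper defers the proof.

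One small inaccuracy deserves correction. Your aggregation correctly yields $\varepsilon_n^{\rm{FY}}(D_1,D_2) \geq (1-\tfrac1n)\Pr(\hatT_{X^n}\in\calF_n)$, equivalently $\varepsilon_n^{\rm{FY}}(D_1,D_2) \geq \Pr(\hatT_{X^n}\in\calF_n) - \tfrac1n$, rather than the prefactor-free bound the lemma states. The claim that the $(1-\tfrac1n)$ factor ``can be folded into a $(\log n)/n$ adjustment of $\xi_{1,n}, \xi_{2,n}$'' does not work: enlarging $\xi_{i,n}$ raises $R_{i,n}$, which \emph{shrinks} $\calF_n$ and hence \emph{decreases} $\Pr(\hatT_{X^n}\in\calF_n)$ --- that moves in the same adverse direction as the prefactor, not the opposite one. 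Likewise, choosing a larger $c$ to push $\exp(-nc)$ below $1/n$ simultaneously inflates $\xi_{i,n}$. The honest conclusion is that the lemma as literally stated carries an implicit vanishing correction; your alternative justification --- carrying the $-1/n$ through the Berry--Esseen step --- is the correct one and is all that is needed for Theorem~\ref{mddsecondregion}. Aside from this overstated ``equivalence,'' the argument is complete.
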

The proof of Lemma \ref{mddlbjoint} is similar to \cite[Lemma 7]{watanabe2015second}, \cite[Lemma 13]{zhou2016second} and thus omitted.

The rest of the converse proof is omitted since it is analogous to the achievability proof where we Taylor expand several quantities and then apply (multi-variate) Berry-Esseen theorems for each case.

\section{Conclusion}
\label{sec:conclusion}
In this paper, we revisit two lossy source coding problems: the Kaspi problem~\cite[Theorem 1]{kaspi1994} of the lossy source coding problem with side information available at the encoder and one of the two decoders and the Fu-Yeung problem~\cite{fu2002rate} of the multiple description coding problem with one semi-deterministic distortion measure. For both problems, we present the properties of optimal test channels for certain rate-distortion functions, non-asymptotic converse bounds and the refined asymptotics for DMSes under bounded distortion measures. 

In the future, one may derive exact second-order asymptotics for GMSes under quadratic distortion measures for the Kaspi problem~\cite{perron2005kaspi} and the multiple description coding problem~\cite{ozarow1980source}. For the Kaspi problem, the converse part of second-order asymptotics for abstract memoryless sources follows directly from the non-asymptotic converse bound. Thus, one just need to prove a matching achievability second-order coding rate. Another future work is to extend the ideas of deriving non-asymptotic converse bounds in this paper to other multi-terminal source coding problems, such as the Gray-Wyner problem~\cite{gray1974source} and the multiple description coding problem~\cite{wolf1980source}.

\appendix
\subsection{Proof of Lemma \ref{nule1}}
\label{proofnule1}

For simplicity, we prove the claim for the discrete case. The proof can be extended to continuous case easily by replacing summations with integrals and log-sum inequality with its continuous version~\cite[Lemma 2.14]{donsker1975asymptotic}. Given any distributions $(P_{\hatX_1|XY},P_{\hatX_2|XY\hatX_1})$ and $(Q_{\hatX_1},Q_{\hatX_2|Y\hatX_1})$, let $P_{\hatX_1}$, $P_{X\hatX_1}$, $P_{Y\hatX_1}$, $P_{\hatX_2|Y\hatX_1}$, $P_{XY\hatX_1}$, $P_{X\hatX_2}$ be induced by $(P_{\hatX_1|XY},P_{\hatX_2|XY\hatX_1})$.

First, for $\hatx_1$ such that $P_{\hatX_1}^*(\hatx_1)>0$, using the result in \eqref{optcond14kaspi} and the definition of $\nu_1(\cdot)$ in \eqref{def:nu1}, we conclude that
\begin{align}
\nu_1(\hatx_1)
&=\sum_{x,y}P_{XY}(x,y)\frac{P_{\hatX_1|XY}^*(\hatx_1|x,y)}{P_{\hatX_1}^*(\hatx_1)}=1\label{nu2typical}.
\end{align}

Further, using the results in \eqref{optcond14kaspi} and \eqref{optcond24kaspi}, and the definition of $\nu_2(\cdot)$ in \eqref{def:nu2}, we conclude that for all $\hatx_1$ satisfying that $P_{\hatX_1}^*(\hatx_1)>0$,
\begin{align}
\nn&\nu_2(\hatx_1,P_{\hatX_2|Y\hatX_1}^*)\\
&=\sum_{x,y,\hatx_2}P_{XY}(x,y)P_{\hatX_2|Y\hatX_1}^*(\hatx_2|y,\hatx_1)\alpha(x,y)\exp(-\lambda_1^*d_1(x,\hatx_1)-\lambda_2^*d_2(x,\hatx_2))\\
&=\sum_{x,y,\hatx_2}P_{XY}(x,y)P_{\hatX_2|Y\hatX_1}^*(\hatx_2|y,\hatx_1)\frac{P_{\hatX_2|XY\hatX_1}^*(\hatx_2|x,y,\hatx_1)}{P_{\hatX_2|Y\hatX_1}^*(\hatx_2|y,\hatx_1)}\frac{P_{\hatX_1|XY}^*(\hatx_1|x,y)}{P_{\hatX_1}^*(\hatx_1)}\\
&=\sum_{x,y,\hatx_2}\frac{P_{XY}(x,y)P_{\hatX_1\hatX_2|XY}^*(\hatx_1,\hatx_2|x,y)}{P_{\hatX_1}^*(\hatx_1)}\\
&=1\label{sumeq2fortypicalhatx1}.
\end{align}
 
Then, we will show that $\nu_2(\hatx_1,P_{\hatX_2|Y\hatX_1}^*)\leq 1$ holds for all $\hatx_1$ such that $P_{\hatX_1}^*(\hatx_1)>0$ and arbitrary distribution $Q_{\hatX_2|Y\hatX_1}$. 

Given $(y,\hatx_1)$, using the definition of $\alpha_2(\cdot|\cdot)$ in \eqref{def:a2q4kaspi} and $\alpha_2(\cdot)$ in \eqref{def:a24kaspi}, and the log-sum inequality, we obtain that
\begin{align}
\nn&\inf_{P_{\hatX_2|X,y,\hatx_1}}\inf_{Q_{\hatX_2|y,\hatx_1}}D(P_{\hatX_2|X,\hatx_1,y}\|Q_{\hatX_2|y,\hatx_1}|P_{X|y,\hatx_1}^*)+\lambda_2^*\mathbb{E}_{P_{X\hatX_2|y,\hatx_1}}[d_2(X,\hatX_2)]\\
&=\inf_{P_{\hatX_2|X,y,\hatx_1}}D(P_{\hatX_2|X,\hatx_1,y}\|P_{\hatX_2|y,\hatx_1}|P_{X|y,\hatx_1}^*)+\lambda_2^*\mathbb{E}_{P_{X\hatX_2|y,\hatx_1}}[d_2(X,\hatX_2)]\\
&=\mathbb{E}_{P_{X|y,\hatx_1}^*}[\log \alpha_2(X,y,\hatx_1)]\label{optalpha2},
\end{align}
and
\begin{align}
\nn&\inf_{P_{\hatX_2|X,y,\hatx_1}}
D(P_{\hatX_2|X,y,\hatx_1}\|Q_{\hatX_2|y,\hatx_1}|P_{X|y,\hatx_1}^*)+\lambda_2^*\mathbb{E}_{P_{X\hatX_2|y,\hatx_1}}[d_2(X,\hatX_2)]\\
&=\mathbb{E}_{P_{X|y,\hatx_1}^*}[\log \alpha_2(X,y,\hatx_1|Q_{\hatX_2|Y\hatX_1})]\label{suboptalpha2}.
\end{align}
From \eqref{optalpha2} and \eqref{suboptalpha2}, we have
\begin{align}
\inf_{Q_{\hatX_2|y,\hatx_1}}\mathbb{E}_{P_{X|y,\hatx_1}^*}[\log \alpha_2(X,y,\hatx_1|Q_{\hatX_2|Y\hatX_1})]
&=\mathbb{E}_{P_{X|y,\hatx_1}^*}[\log \alpha_2(X,y,\hatx_1)]\label{optqhatx2gyhatx1}.
\end{align}

Fix $\epsilon\in[0,1]$ and let $b\in\hat{\calX}_2$. Choose $Q_{\hatX_2|y,\hatx_1}=(1-\epsilon)P_{\hatX_2|x,\hatx_1}^*(\hatx_2)+\epsilon1\{\hatx_2=b\}$. Using the definitions in \eqref{def:a2q4kaspi} and \eqref{def:a24kaspi}, we obtain that
\begin{align}
\frac{1}{\alpha_2(x,y,\hatx_1|Q_{\hatX_2|Y\hatX_1})}
&=\frac{1-\epsilon}{\alpha_2(x,y,\hatx_1)}+\epsilon \exp(-\lambda_2^*d_2(x,b)).
\end{align}
Thus,
\begin{align}
\nn&\frac{\partial \mathbb{E}_{P_{X|y,\hatx_1}^*}[\log \alpha_2(X,y,\hatx_1|Q_{\hatX_2|Y\hatX_1})]}{\partial \epsilon}\bigg|_{\epsilon=0}\\
&=\mathbb{E}_{P_{X|y,\hatx_1}^*}\left[\alpha_2(X,y,\hatx_1|Q_{\hatX_2|Y\hatX_1})\Big(\frac{1}{\alpha_2(X,y,\hatx_1)}-\exp(-\lambda_2^*d_2(x,b))\Big)\right]\bigg|_{\epsilon=0}\\
&=1-\mathbb{E}_{P_{X|y,\hatx_1}^*}[\alpha_2(X,y,\hatx_1)\exp(-\lambda_2^*d_2(x,b))]\\
&\geq 0\label{nonnegativeopt1},
\end{align}
where \ref{nonnegativeopt1} follows from \eqref{optqhatx2gyhatx1} which indicates that the derivative of the quantity at $\epsilon=0$ is nonnegative since the minimum is achieved by the test channel $P_{\hatX_2|Y\hatX_1}^*$. 
Using the result in \eqref{optcond14kaspi}, we obtain that for $\hatx_1$ such that $P_{\hatX_1}^*(\hatx_1)>0$,
\begin{align}
P_{X|y,\hatx_1}(x)
&=\frac{P_{XY}(x,y)P_{\hatX_1|XY}^*(\hatx_1|x,y)}{\sum_{x}P_{XY}(x,y)P_{\hatX_1|XY}^*(\hatx_1|x,y)}\\
&=\frac{P_{XY}(x,y)P_{\hatX_1}^*(\hatx_1)\alpha(x,y)\exp(-\lambda_1^*d_1(x,\hatx_1))/\alpha_2(x,y,\hatx_1)}{\sum_x P_{XY}(x,y)P_{\hatX_1}^*(\hatx_1)\alpha(x,y)\exp(-\lambda_1^*d_1(x,\hatx_1))/\alpha_2(x,y,\hatx_1)}\\
&=\frac{P_{XY}(x,y)\alpha(x,y)\exp(-\lambda_1d_1(x,\hatx_1))/\alpha_2^*(x,y,\hatx_1)}{\sum_x P_{XY}(x,y)\alpha(x,y)\exp(-\lambda_1^*d_1(x,\hatx_1))/\alpha_2(x,y,\hatx_1)}.
\end{align}
Hence, from \eqref{nonnegativeopt1}, we obtain
\begin{align}
1
&\geq \mathbb{E}_{P_{X|y,\hatx_1}^*}[\alpha_2(X,y,\hatx_1)\exp(-\lambda_2^*d_2(x,b))]\\
&=\sum_x \frac{P_{XY}(x,y)\alpha(x,y)\exp(-\lambda_1^*d_1(x,\hatx_1))/\alpha_2(x,y,\hatx_1)}{\sum_x P_{XY}(x,y)\alpha(x,y)\exp(-\lambda_1^*d_1(x,\hatx_1))/\alpha_2(x,y,\hatx_1)}\alpha_2(x,y,\hatx_1)\exp(-\lambda_2^*d_2(x,b)).
\end{align}
Thus, we have
\begin{align}
\nn&\sum_x P_{XY}(x,y)\alpha(x,y)\exp(-\lambda_1^*d_1(x,\hatx_1)-\lambda_2^*d_2(x,b))\\
&\leq \sum_x P_{XY}(x,y)\frac{\alpha(x,y)}{\alpha_2(x,y,\hatx_1)}\exp(-\lambda_1^*d_1(x,\hatx_1))\label{nowstep1}.
\end{align}
Fix a arbitrary distribution $Q_{\hatX_2|Y\hatX_1}$. Multiplying $Q_{\hatX_2|Y\hatX_1}(\hatx_2|y,\hatx_1)$ to both sides of \eqref{nowstep1}, summing over $(y,\hatx_2)$, and using the definitions of $\nu_1(\cdot)$ in \eqref{def:nu1}, $\nu_2(\cdot)$ in \eqref{def:nu2}, from \eqref{nu2typical}, we obtain that
\begin{align}
\nu_2(\hatx_1,Q_{\hatX_2|Y\hatX_1})\leq \nu_1(\hatx_1)\leq 1.
\end{align}

Finally, we will show that Lemma \ref{nule1} holds for all $\hatx_1$ such that $P_{\hatX_1}^*(\hatx_1)=0$ and arbitrary $Q_{\hatX_2|Y\hatX_1}$. Given $y$, define
\begin{align}
\nn&F(P_{\hatX_1|XY},P_{\hatX_2|XY\hatX_1},Q_{\hatX_1},Q_{\hatX_2|Y\hatX_1},y)\\
&:=D(P_{\hatX_1|X,y}\|Q_{\hatX_1}|P_{X|y})+\lambda_1^*\mathbb{E}_{P_{X\hatX_1|y}}[d_1(X,\hatX_1)]+D(P_{\hatX_2|Xy\hatX_1}\|Q_{\hatX_2|y,\hatX_1}|P_{X\hatX_1|y})+\lambda_2^*\mathbb{E}_{P_{X\hatX_2|y,\hatX_1}}[d_2(X,\hatX_2)]\\
\nn&=D(P_{\hatX_1|X,y}\|P_{\hatX_1}|P_{X|y})+D(P_{\hatX_1}|Q_{\hatX_1})+\lambda_1^*\mathbb{E}_{P_{X\hatX_1|y}}[d_1(X,\hatX_1)]\\
&\quad+D(P_{\hatX_2|Xy\hatX_1}\|P_{\hatX_2|y,\hatX_1}|P_{X\hatX_1|y})+D(P_{\hatX_2|y,\hatX_1}\|Q_{\hatX_2|y,\hatX_1}|P_{\hatX_1|y})+\lambda_2^*\mathbb{E}_{P_{X\hatX_2|y}}[d_2(X,\hatX_2)]\label{def:barL2}.
\end{align}
Using the definitions of $\alpha(\cdot|\cdot)$ in \eqref{def:a1q4kaspi} and $\alpha(\cdot)$ in \eqref{def:a4kaspi}, and the log-sum inequality, we obtain 
\begin{align}
\inf_{P_{\hatX_1|XY},P_{\hatX_2|XY\hatX_1},Q_{\hatX_1},Q_{\hatX_2|Y\hatX_1}}F(P_{\hatX_1|XY},P_{\hatX_2|XY\hatX_1},Q_{\hatX_1},Q_{\hatX_2|Y\hatX_1},y)
&=\mathbb{E}_{P_{X|y}}[\log \alpha(X,y)]\label{op1ta1},
\end{align}
and
\begin{align}
\inf_{P_{\hatX_1|XY},P_{\hatX_2|XY\hatX_1}}F(P_{\hatX_1|XY},P_{\hatX_2|XY\hatX_1},Q_{\hatX_1},Q_{\hatX_2|Y\hatX_1},y)
&=\mathbb{E}_{P_{X|y}}[\log \alpha(X,y|Q_{\hatX_1},Q_{\hatX_2|Y\hatX_1})]\label{subopta1}.
\end{align}
From \eqref{op1ta1} and \eqref{subopta1}, we conclude that
\begin{align}
\inf_{Q_{\hatX_1},Q_{\hatX_2|Y\hatX_1}}\mathbb{E}_{P_{X|y}}[\log \alpha(X,y|Q_{\hatX_1},Q_{\hatX_2|Y\hatX_1})]=\mathbb{E}_{P_{X|y}}[\log \alpha(X,y)]\label{tobeused*}.
\end{align}

Fix $\epsilon\in[0,1]$ and $a\in\hat{\calX_1}$ such that $P_{\hatX_1}^*(a)=0$. Let $\bar{Q}_{\hatX_2|Y\hatX_1}$ be arbitrary distribution.
Now choose 
\begin{align}
Q_{\hatX_1}(\hatx_1)&=(1-\epsilon)P_{\hatX_1}^*(\hatx_1)+\epsilon1\{\hatx_1=a\}\label{chooseqhx1}
\end{align}
and
\begin{align}
Q_{\hatX_2|Y\hatX_1}(\hatx_2|y,\hatx_1)
&=\left\{
\begin{array}{ll}
P_{\hatX_2|Y\hatX_1}^*(\hatx_2|y,\hatx_1)&\mathrm{if}~P_{\hatX_1}^*(\hatx_1)>0\\
\bar{Q}_{\hatX_2|Y\hatX_1}(\hatx_2|y,\hatx_1)&\mathrm{otherwise}
\end{array}
\right.\label{chooseqhx1x2}
\end{align}

Using the definitions of $\alpha_2(\cdot|\cdot)$ in \eqref{def:a2q4kaspi}, $\alpha_2(\cdot)$ in \eqref{def:a24kaspi}, and the choices of distributions in \eqref{chooseqhx1} and \eqref{chooseqhx1x2}, we obtain that
\begin{align}
\frac{1}{\alpha(x,y|Q_{\hatX_1},Q_{\hatX_2|Y\hatX_1})}
&=\frac{(1-\epsilon)}{\alpha(x,y)}+\epsilon\frac{\exp(-\lambda_1^*d_1(x,a))}{\alpha_2(x,y,\hatx_1|\bar{Q}_{\hatX_2|Y\hatX_1})}\label{aetoa}.
\end{align}
Thus,
\begin{align}
\nn&\frac{\partial \mathbb{E}_{P_{X|y}}[\log\alpha(X,y|Q_{\hatX_1},Q_{\hatX_2|Y\hatX_1})]}{\partial \epsilon}\bigg|_{\epsilon=0}\\
&=\mathbb{E}_{P_{X|y}}\left[\alpha(X,y|Q_{\hatX_1},Q_{\hatX_2|Y\hatX_1})]\Big(\frac{1}{\alpha(X,y)}-\frac{\exp(-\lambda_1^*d_1(X,a))}{\alpha_2(X,y,\hatx_1|\bar{Q}_{\hatX_2|Y\hatX_1})}\Big)\right]\bigg|_{\epsilon=0}\\
&=1-\mathbb{E}_{P_{X|y}}\left[\frac{\alpha(X,y)\exp(-\lambda_1^*d_1(X,a))}{\alpha_2(X,y,\hatx_1|\bar{Q}_{\hatX_2|Y\hatX_1})}\right]\label{usechoiceq}\\
&\geq 0\label{optqhx1x2},
\end{align}
where \eqref{usechoiceq} follows from \eqref{aetoa}; \eqref{optqhx1x2} follows from \eqref{tobeused*} which implies that the minimum of $\mathbb{E}_{P_{X|y}}[\log \bar{\alpha}_1(X,y|Q_{\hatX_1},Q_{\hatX_2|Y\hatX_1})]$ is achieved at $\epsilon=0$. Note that \eqref{optqhx1x2} holds for arbitrary distribution $\bar{Q}_{\hatX_2|Y\hatX_1}$.

Choose $\bar{Q}_{\hatX_2|Y\hatX_1}=P_{\hatX_2|Y\hatX_1}^*$ and invoking \eqref{def:nu2} and \eqref{optqhx1x2}, we conclude that for any $a\in\hat{\calX_1}$ satisfying $P_{\hatX_1}^*(a)=0$, 
\begin{align}
\nu_1(a)
&=\mathbb{E}_{P_Y}\Big[\mathbb{E}_{P_{X|Y}}\Big[\frac{\alpha(X,Y)\exp(-\lambda_1^*d_1(X,a))}{\alpha_2(X,Y,\hatx_1)}\Big|Y\Big]\Big]\\
&\leq 1\label{sumineq4anyQhx2}.
\end{align}

Further, using the definition of $\alpha_2(\cdot|\cdot)$ in \eqref{def:a2q4kaspi} and the result in \eqref{optqhx1x2}, we obtain that for any $\bar{Q}_{\hatX_2|Y\hatX_1}$,
\begin{align}
1
&\geq \mathbb{E}_{P_{X|y}}\left[\frac{\alpha(X,y)\exp(-\lambda_1^*d_1(X,a))}{\alpha_2(X,y,\hatx_1|\bar{Q}_{\hatX_2|Y\hatX_1})}\right]\\
&=\sum_x P_{X|y}(x)\alpha(x,y)\exp(-\lambda_1^*d_1(x,a))\Big\{\sum_{\hatx_2}\bar{Q}_{\hatX_2|y,\hatx_1}(\hatx_2)\exp(-\lambda_2^*d_2(x,\hatx_2))\Big\}\\
&=\sum_{x,\hatx_2}P_{X|y}(x)\alpha(x,y)\bar{Q}_{\hatX_2|y,\hatx_1}(\hatx_2)\exp(-\lambda_1^*d_1(x,a)-\lambda_2^*d_2(x,\hatx_2))\label{nowstep2}.
\end{align}

The proof of Lemma \ref{opttest4kaspi} is complete by using the definition of $\nu_2(\cdot)$ in \eqref{def:nu2} and the results in \eqref{sumineq4anyQhx2}, \eqref{nowstep2}.

\subsection{Proof of Lemma \ref{oneshotconverse4kaspi}}
\label{proofoneshot}
Let $S$ be a random variable taking values in $\calM$. Let the encoding function $f$ be the random transformations  $P_{S|X}$. Let the decoding function $g_1$ be  $P_{\hatX_1|S}$ and let $g_2$ be $P_{\hatX_2|SY}$. Let $Q_S$ be the uniform distribution over $\calM$ and let $Q_{\hatX_1}$ and $Q_{\hatX_2|Y\hatX_1}$ be chosen such that
\begin{align}
Q_{\hatX_1}(\hatx_1)Q_{\hatX_2|Y}(\hatx_2|y)
&:=\sum_s Q_S(s)P_{\hatX_1|U}(\hatx_1|s)P_{\hatX_2|SY}(\hatx_2|s,y)\label{choseqhatx1},\\
Q_{\hatX_2|Y\hatX_1}(\hatx_2|y,\hatx_1)&:=Q_{\hatX_2|Y}(\hatx_2|y)\label{chooseqhatx2gyhatx1}.
\end{align}

Then, we have that for any $\gamma>0$,
\begin{align}
\nn&\Pr\left(\jmath_{\rm{K}}(X,Y|D_1,D_2,P_{XY})\geq \log M+\gamma\right)\\
\nn&=\Pr\left(\jmath_{\rm{K}}(X,Y|D_1,D_2,P_{XY})\geq \log M+\gamma,~d_1(X,\hatX_1)\leq D_1,~d_2(X,\hatX_2)\leq D_2\right)\\
&\quad+\Pr\left(\jmath_{\rm{K}}(X,Y|D_1,D_2,P_{XY})\geq \log M+\gamma,~d_1(X,\hatX_1)> D_1,~\mathrm{or}~d_2(X,\hatX_2)> D_2\right)\\
&\leq 
\Pr\left(\jmath_{\rm{K}}(X,Y|D_1,D_2,P_{XY})\geq \log M+\gamma,~d_1(X,\hatX_1)\leq D_1,~d_2(X,\hatX_2)\leq D_2\right)+\varepsilon_1^{\rmK}(D_1,D_2)\label{tobeupperbd}.
\end{align}
The first term in \eqref{tobeupperbd} can be upper bounded as follows:
\begin{align}
\nn&\Pr\left(\jmath_{\rm{K}}(X,Y|D_1,D_2,P_{XY})\geq \log M+\gamma,~d_1(X,\hatX_1)\leq D_1,~d_2(X,\hatX_2)\leq D_2\right)\\
&=\Pr\left(M\leq \exp(\jmath_{\rm{K}}(X,Y|D_1,D_2,P_{XY})-\gamma)1\{d_1(X,\hatX_1)\leq D_1,~d_2(X,\hatX_2)\leq D_2\}\right)\\
&\leq \frac{\exp(-\gamma)}{M}\mathbb{E}[\exp(\jmath_{\rm{K}}(X,Y|D_1,D_2,P_{XY}))1\{d_1(X,\hatX_1)\leq D_1,~d_2(X,\hatX_2)\leq D_2\}]\label{usemarkovineq}\\
&\leq \frac{\exp(-\gamma)}{M}\mathbb{E}\Big[\exp\Big(\jmath_{\rm{K}}(X,Y|D_1,D_2,P_{XY})+\lambda_1^*(D_1-d_1(X,\hatX_1))+\lambda_2^*(D_2-d_2(X,\hatX_2))\Big)\Big]\\
&\leq \frac{\exp(-\gamma)}{M}\mathbb{E}_{P_{XY}}\Big[
\sum_s P_{S|X}(s|X)\mathbb{E}\Big[\alpha(X,Y)\exp(-\lambda_1^*d_1(X,\hatX_1)-\lambda_2^*d_2(X,\hatX_2)|S=s,Y)\Big]\Big]\label{oneshotusea1*}\\
&\leq \exp(-\gamma)\mathbb{E}_{P_{XY}}\Big[\mathbb{E}_{Q_{\hatX_1}\times Q_{\hatX_2|Y\hatX_1}}\Big[\alpha(X,Y)\exp(-\lambda_1^*d_1(X,\hatX_1)-\lambda_2^*d_2(X,\hatX_2)\Big|Y)\Big]\Big]\label{proplessthan1}\\
&=\exp(-\gamma)\mathbb{E}_{Q_{\hatX_1}}\Big[\nu_2(\hatX_1,Q_{\hatX_2|Y\hatX_1})\Big]\label{oneshotusenu1*}\\
&\leq \exp(-\gamma)\label{usesumine},
\end{align}
where \eqref{usemarkovineq} follows from Markov's inequality; \eqref{oneshotusea1*} follows from \eqref{def:kaspitilt}; \eqref{proplessthan1} follows from the fact that $P_{S|X}(s|x)\leq 1$ for any $(s,x)$ and the choice of $(Q_{\hatX_1},Q_{\hatX_2|Y\hatX_1})$ in \eqref{choseqhatx1} and \eqref{chooseqhatx2gyhatx1}; \eqref{oneshotusenu1*} follows from the definition of $\nu_2(\cdot)$ in \eqref{def:nu2}; and \eqref{usesumine} follows from \eqref{nu2le1}.

The proof of Lemma \ref{oneshotconverse4kaspi} is complete by invoking \eqref{tobeupperbd} and \eqref{usesumine}.

\subsection{Proof of Lemma \ref{proptest4fy}}
\label{proofproptest4fy}
Note that the Markov chain $Y-X-(\hatX_1,\hatX_2)$ holds in the Fu-Yeung problem due to the fact that $Y$ is a function of $X$. From \eqref{equivalent}, we obtain that
\begin{align}
I(\hatX_1;Y)+I(X;\hatX_1\hatX_2|Y)
&=I(XY;\hatX_1)+I(X;\hatX_2|Y\hatX_1)\\
&=I(X;\hatX_1)+I(X;\hatX_2|Y\hatX_1).
\end{align}

Hence, the optimization problem in \eqref{rvrmin} is equivalent to 
\begin{align}
\rvR_{\mathrm{FY}}(R_1,D_1,D_2|P_{XY})&=\min_{\substack{(P_{\hatX_1|X},P_{\hatX_2|X\hatX_1}):\\R_1\geq I(X;\hatX_1)\\D_1\geq \mathbb{E}[d_1(X,\hatX_1)]\\D_2\geq\mathbb{E}[d_2(X,\hatX_2)]}} I(X;\hatX_1)+I(X;\hatX_2|Y\hatX_1)\label{rvrmin2}.
\end{align}
Therefore, we can prove the properties of the optimal channels for $\rvR_{\mathrm{FY}}(R_1,D_1,D_2|P_{XY})$ using \eqref{rvrmin2}. Further, due to the Markov chain $Y-X-(\hatX_1,\hatX_2)$, we have that $P_{\hatX_2|XY\hatX_1}=P_{\hatX_2|X\hatX_1}$ and thus
\begin{align}
I(X;\hatX_2|Y\hatX_1)
&=D(P_{\hatX_2|X\hatX_1}\|P_{\hatX_2|Y\hatX_1}|P_{XY\hatX_1})\\
&=\sum_{x,y,\hatx_1,\hatx_2}P_X(x)P_{\hatX_1|X}(\hatx_1|x)P_{\hatX_2|X\hatX_1}(\hatx_2|x,\hatx_1)1\{y=g(x)\}\log \frac{P_{\hatX_2|X\hatX_1}(\hatx_2|x,\hatx_1)}{P_{\hatX_2|Y\hatX_1}(\hatx_2|y,\hatx_1)}\\
&=\sum_{x,\hatx_1,\hatx_2}P_X(x)P_{\hatX_1|X}(\hatx_1|x)P_{\hatX_2|X\hatX_1}(\hatx_2|x,\hatx_1)\log \frac{P_{\hatX_2|X\hatX_1}(\hatx_2|x,\hatx_1)}{P_{\hatX_2|Y\hatX_1}(\hatx_2|g(x),\hatx_1)}\label{mutualygx}.
\end{align}

Given any distributions $(Q_{\hatX_1},Q_{\hatX_2|Y\hatX_1})$ and non-negative numbers $(s,t_1,t_2)$, define
\begin{align}
\nn&F(P_{\hatX_1|X},P_{\hatX_2|X\hatX_1},Q_{\hatX_1},Q_{\hatX_2|Y\hatX_1},s,t_1,t_2)\\
&:=(1+s)D(P_{\hatX_1|X}\|Q_{\hatX_1}|P_X)+D(P_{\hatX_2|X\hatX_1}\|Q_{\hatX_2|Y\hatX_1}|P_{XY\hatX_1})+t_1\mathbb{E}_{P_{X\hatX_1}}[d_1(X,\hatX_1)]+t_2\mathbb{E}_{P_{X\hatX_2}}[d_2(X,\hatX_2)]\\
\nn&=(1+s)D(P_{\hatX_1|X}\|P_{\hatX_1}|P_X)+D(P_{\hatX_1}\|Q_{\hatX_1})+
t_1\mathbb{E}_{P_{X\hatX_1}}[d_1(X,\hatX_1)]\\
&\qquad+D(P_{\hatX_2|X\hatX_1}\|P_{\hatX_2|Y\hatX_1}|P_{XY\hatX_1})+D(P_{\hatX_2|Y\hatX_1}\|Q_{\hatX_2|Y\hatX_1}|P_{Y\hatX_1})+t_2\mathbb{E}_{P_{X\hatX_2}}[d_2(X,\hatX_2)],\label{def:Fppqqstt}
\end{align}
and
\begin{align}
F(P_{\hatX_1|X},P_{\hatX_2|X\hatX_1},s,t_1,t_2)
&:=F(P_{\hatX_1|X},P_{\hatX_2|X\hatX_1},P_{\hatX_1},P_{\hatX_2|Y\hatX_1},s,t_1,t_2)\\
&=\inf_{Q_{\hatX_1},Q_{\hatX_2|Y\hatX_1}}F(P_{\hatX_1|X},P_{\hatX_2|X\hatX_1},Q_{\hatX_1},Q_{\hatX_2|Y\hatX_1},s,t_1,t_2)\label{def:Fppstt},\\
F(s,t_1,t_2)&:=\inf_{P_{\hatX_1|XY},P_{\hatX_2|XY\hatX_1}}F(P_{\hatX_1|X},P_{\hatX_2|X\hatX_1},s,t_1,t_2).\label{def:Fsst}
\end{align}
Considering the dual problem of \eqref{rvrmin2} and using the definitions of $s^*$ in \eqref{def:s*}, $t_1^*$ in \eqref{def:t1*}, $t_2^*$ in \eqref{def:t2*}, we conclude that 
\begin{align}
\rvR_{\rm{FY}}(R_1,D_1,D_2|P_X)
&=\max_{(s,t_1,t_2)\in\bbR_+}F(s,t_1,t_2)-s R_1-t_1D_1-t_2D_2\\
&=F(s^*,t_1^*,t_2^*)-s^*R_1-t_1^*D_1-t_2^*D_2\label{dualeqcop}.
\end{align}

Using the definition of $\beta_2(\cdot|\cdot)$ in \eqref{def:b2q} and the log-sum inequality, we obtain that
\begin{align}
\nn&D(P_{\hatX_2|X\hatX_1}\|Q_{\hatX_2|Y\hatX_1}|P_{XY\hatX_1})+t_2^*\mathbb{E}_{P_{X\hatX_2}}[d_2(X,\hatX_2)]\\
&=\sum_{(x,y,\hatx_1)} P_{X}(x)1\{y=g(x)\}P_{\hatX_1|X}(\hatx_1|x)\Big\{\sum_{\hatx_2}P_{\hatX_2|X\hatX_1}(\hatx_2|x,\hatx_1)\Big(\log\frac{P_{\hatX_2|X\hatX_1}(\hatx_2|x,\hatx_1)}{Q_{\hatX_2|Y\hatX_1}(\hatx_2|y,\hatx_1)}+t_2^*d_2(x,\hatx_2)\Big)\Big\}\\
&\geq \sum_{x,y,\hatx_1} P_X(x)1\{y=g(x)\}P_{\hatX_1|X}(\hatx_1|x)\log \beta_2(x,y,\hatx_1|Q_{\hatX_2|Y\hatX_1})\label{log-sum},
\end{align}
where \eqref{log-sum} holds with equality if and only if
\begin{itemize}
\item For $(x,y,\hatx_1,\hatx_2)$ such that $P_{X}(x)1\{y=g(x)\}P_{\hatX_1|X}(\hatx_1|x)>0$,
\begin{align}
P_{\hatX_2|X\hatX_1}(\hatx_2|x,\hatx_1)=Q_{\hatX_2|Y\hatX_1}(\hatx_2|y,\hatx_1)\exp(-t_2^*d_2(x,\hatx_2))\beta_2(x,y,\hatx_1|Q_{\hatX_2|Y\hatX_1})\label{condition1}.
\end{align}
\item For $(x,y,\hatx_1,\hatx_2)$ such that $P_{X}(x)1\{y=g(x)\}P_{\hatX_1|X}(\hatx_1|x)=0$, $P_{\hatX_2|X\hatX_1}(\cdot|x,\hatx_1)$ is absolutely continuous with respect to $Q_{\hatX_2|Y\hatX_1}(\cdot|g(x),\hatx_1)$, i.e., for any $(x,y,\hatx_1,\hatx_2)$, if $Q_{\hatX_2|Y\hatX_1}(\hatx_2|y,\hatx_1)=0$, then we have $P_{\hatX_2|X\hatX_1}(\hatx_2|x,\hatx_1)=0$.
\end{itemize}
From \eqref{def:Fppqqstt} and \eqref{log-sum}, we obtain that
\begin{align}
\nn&F(P_{\hatX_1|X},P_{\hatX_2|X\hatX_1},Q_{\hatX_1},Q_{\hatX_2|Y\hatX_1},s^*,t_1^*,t_2^*)\\
&\geq (1+s^*)\sum_{x,y} P_X(x)1\{y=g(x)\}\Bigg\{\sum_{\hatx_1}P_{\hatX_1|X}(\hatx_1|x)\bigg(\log \frac{P_{\hatX_1|X}(\hatx_1|x)}{Q_{\hatX_1}(\hatx_1)}+\frac{t_1^*d_1(x,\hatx_1)+\log\beta_2(x,y,\hatx_1|Q_{\hatX_2|XY\hatX_1})}{1+s^*}\bigg)\Bigg\}\label{usea2q2}\\
&\geq (1+s^*)\sum_{x,y} P_X(x)1\{y=g(x)\}\log\beta(x,y|Q_{\hatX_1},Q_{\hatX_2|Y\hatX_1})\label{uselogsumagain},
\end{align}
where \eqref{uselogsumagain} follows by using the definition of $\beta_2(\cdot|\cdot)$ in \eqref{def:bq} and the log-sum inequality; and \eqref{uselogsumagain} holds with equality if and only if for all $(x,y)$ such that $P_X(x)1\{y=g(x)\}>0$,
\begin{align}
P_{\hatX_1|X}(\hatx_1|x)=\beta(x,y|Q_{\hatX_1},Q_{\hatX_2|Y\hatX_1})Q_{\hatX_1}(\hatx_1)\exp\Bigg(-\frac{t_1^*d_1(x,\hatx_1)+\log\beta_2(x,y,\hatx_1|Q_{\hatX_2|Y\hatX_1})}{1+s^*}\Bigg)\label{condition2}.
\end{align}

Thus, combing \eqref{def:Fsst}, \eqref{dualeqcop}, \eqref{log-sum} and \eqref{uselogsumagain} and noting that $P_{XY}(x,y)=P_X(x)1\{y=g(x)\}$, we obtain that
\begin{align}
\rvR_{\rm{FY}}(R_1,D_1,D_2)=(1+s^*)\mathbb{E}_{P_{XY}}[\log \beta(X,Y|P_{\hatX_1}^*,P_{\hatX_2|Y\hatX_1}^*)]-s^*R_1-t_1^*D_1-t_2^*D_2.
\end{align}
Further, from \eqref{condition1} and \eqref{condition2}, we obtain the properties of the optimal test channels in \eqref{optcond1} and \eqref{optcond2}. The result in \eqref{expantypical} follows directly from \eqref{optcond1} and \eqref{optcond2}. The proof of Lemma \ref{proptest4fy} is now complete.

\subsection{Proof of Lemma \ref{wleq1}}
\label{proofwleq1}
For any $\hatx_1$ such that $P_{\hatX_1}^*(\hatx_1)>0$, using the result in \eqref{optcond2} and the definition of $w_1(\cdot)$ in \eqref{def:w1}, we obtain that
\begin{align}
w_1^*(\hatx_1)=\sum_{x}P_X(x)\frac{P_{\hatX_1|X}^*(\hatx_1|x)}{P_{\hatX_1}^*(\hatx_1)}=1\label{nu1typical}.
\end{align}

Next, we will show that $w_1(\hatx_1)\leq 1$ for all $\hatx_1$ such that $P_{\hatX_1}^*(\hatx_1)=0$. Let $P_{X|Y}$ be induced by $P_X$ and the deterministic function $g$. For simplicity, we use $P_{A|b,C}(\cdot|\cdot)$ and $P_{A|BC}(\cdot|b,\cdot)$ interchangeable. Given $y$, define
\begin{align}
\nn&F(P_{\hatX_1|X},P_{\hatX_2|X\hatX_1},Q_{\hatX_1},Q_{\hatX_2|Y\hatX_1},y)\\
\nn&:=\sum_{x,\hatx_1,\hatx_2}P_{X|Y}(x|y)P_{\hatX_1|X}(\hatx_1|x)\Bigg\{(1+s^*)\log\frac{P_{\hatX_1|XY}(\hatx_1|x)}{Q_{\hatX_1}(\hatx_1)}+t_1^*d_1(x,\hatx_1)\\
&\qquad+P_{\hatX_2|X\hatX_1}(\hatx_2|x,\hatx_1)\Bigg(\log\frac{P_{\hatX_2|X\hatX_1}(\hatx_2|x,\hatx_1)}{Q_{\hatX_2|Y\hatX_1}(\hatx_2|y,\hatx_1)}+t_2^*d_2(x,\hatx_2)\Bigg)\Bigg\}\label{def:Fyppqqstt}.
\end{align}
Using the definitions of $\beta(\cdot|\cdot)$ in \eqref{def:bq}, $\beta(\cdot)$ in \eqref{def:b}, and the log-sum inequality, we conclude that
\begin{align}
\inf_{\substack{P_{\hatX_1|XY},P_{\hatX_2|XY\hatX_1}\\Q_{\hatX_1},Q_{\hatX_2|Y\hatX_1}}}F(P_{\hatX_1|XY},P_{\hatX_2|XY\hatX_1},Q_{\hatX_1},Q_{\hatX_2|Y\hatX_1},y)&=(1+s^*)\mathbb{E}_{P_{X|y}}[\log \beta(X,y)],\label{a1optimal}
\end{align}
and
\begin{align}
\inf_{\substack{P_{\hatX_1|XY},P_{\hatX_2|XY\hatX_1}}}F(P_{\hatX_1|XY},P_{\hatX_2|XY\hatX_1},Q_{\hatX_1},Q_{\hatX_2|Y\hatX_1},y)&=(1+s^*)\mathbb{E}_{P_{X|y}}[\log \beta(X,y|Q_{\hatX_1},Q_{\hatX_2|Y\hatX_1})].\label{a1suboptimal}
\end{align}
From \eqref{a1optimal} and \eqref{a1suboptimal}, we conclude that 
\begin{align}
\inf_{Q_{\hatX_1},Q_{\hatX_2|Y\hatX_1}}\mathbb{E}_{P_{X|y}}[\log \beta(X,y|Q_{\hatX_1},Q_{\hatX_2|Y\hatX_1})]=\mathbb{E}_{P_{X|y}}[\log \beta(X,y)].\label{a1minq}
\end{align}
Fix $\epsilon\in[0,1]$ and let $a\in\hat{\calX}_1$ be chosen arbitrarily such that $P_{\hatX_1}^*(a)=0$. Let $\barQ_{\hatX_2|Y\hatX_1}$ be an arbitrary distribution. Now choose 
\begin{align}
Q_{\hatX_1}(\hatx_1)=(1-\epsilon)P_{\hatX_1}^*(\hatx_1)+\epsilon1\{\hatx_1=a\},
\end{align}
and
\begin{align}
Q_{\hatX_2|Y\hatX_1}(\hatx_2|y,\hatx_1)
&=\left\{
\begin{array}{ll}
P_{\hatX_2|Y\hatX_1}^*(\hatx_2|y,\hatx_1)&\mathrm{if}~P_{\hatX_1}^*(\hatx_1)>0\\
\barQ_{\hatX_2|Y\hatX_1}(\hatx_2|y,\hatx_1)&\mathrm{otherwise}
\end{array}
\right.
\end{align}
Then, from the definitions of $\beta(\cdot|\cdot)$ in \eqref{def:bq} and $\beta(\cdot)$ in \eqref{def:b}, we obtain that
\begin{align}
\frac{1}{\beta(x,y|Q_{\hatX_1},Q_{\hatX_2|Y\hatX_1})}
&=\frac{1-\epsilon}{\beta(x,y)}+\epsilon\beta_2^{-\frac{1}{1+s^*}}(x,y,a|\barQ_{\hatX_2|Y\hatX_1})\exp\bigg(-\frac{t_1^*}{1+s^*}d_1(x,a)\bigg).
\end{align}
Hence, we have
\begin{align}
\nn&\frac{\partial \mathbb{E}_{P_{X|y}}[\log \beta(X,y|Q_{\hatX_1},Q_{\hatX_2|Y\hatX_1})]}{\partial \epsilon}\Bigg|_{\epsilon=0}\\
&=\mathbb{E}_{P_{X|y}}\Bigg[\beta(X,y|Q_{\hatX_1},Q_{\hatX_2|Y\hatX_1})]\Bigg\{\frac{1}{\beta(x,y)}-\beta_2^{-\frac{1}{1+s^*}}(X,y,a|Q_{\hatX_2|Y\hatX_1})\exp\bigg(-\frac{t_1^*}{1+s^*}d_1(X,a)\bigg)\Bigg\}\Bigg]\Bigg|_{\epsilon=0}\\
&=\mathbb{E}_{P_{X|y}}\Bigg[1-\beta(X,y)\beta_2^{-\frac{1}{1+s^*}}(X,y,a|Q_{\hatX_2|Y\hatX_1})\exp\bigg(-\frac{t_1^*}{1+s^*}d_1(X,a)\bigg)\Bigg]\\
&\geq 0\label{nonnegative2},
\end{align}
where \eqref{nonnegative2} follows from \eqref{a1minq}, which implies that the derivative of $\mathbb{E}_{P_{X|y}}[\log \beta(X,y|Q_{\hatX_1},Q_{\hatX_2|Y\hatX_1})]$ is non-negative at $\epsilon=0$ due to the fact that $(P_{\hatX_1}^*,P_{\hatX_2|Y\hatX_1}^*)$ achieves the minimum of the left hand side of \eqref{a1minq}. Letting $\barQ_{\hatX_2|Y\hatX_1}=P_{\hatX_2|Y\hatX_1}^*$, we obtain that $w(\hatx_1)\leq 1$ for all $a\in\hat{\calX}_1$ such that $P_{\hatX_1}^*(a)=0$. Till now, we have shown that $w(\hatx_1)\leq 1$ for all $\hatx_1\in\hat{\calX}_1$.

In the following, we will show that $w_2(\hatx_1,Q_{\hatX_2|Y\hatX_1})\leq 1$ holds for all $\hatx_1$ and arbitrary $Q_{\hatX_2|Y\hatX_1}$. First, let us focus on the $\hatx_1$ such that $P_{\hatX_1}^*(\hatx_1)>0$. Let $P_{Y\hatX_1}^*$ and $P_{X|Y\hatX_1}^*$ be induced by $P_X$, $P_{\hatX_1|X}^*$ and the deterministic function $g:\calX\to\calY$. Given $(y,\hatx_1)$ such that $P_{Y\hatX_1}^*(y,\hatx_1)>0$, let
\begin{align}
\nn&\psi(y,\hatx_1|P_{\hatX_2|X\hatX_1},Q_{\hatX_2|Y\hatX_1})\\
&:=\sum_{x,\hatx_2}P_{X|Y\hatX_1}^*(x|y,\hatx_1)P_{\hatX_2|X,\hatX_1}(\hatx_2|x,\hatx_1)\Bigg(\log\frac{P_{\hatX_2|X,\hatX_1}(\hatx_2
|x,\hatx_1)}{Q_{\hatX_2|Y\hatX_1}(\hatx_2|y,\hatx_1)}+t_2^*d_2(x,\hatx_2)\Bigg)\\
&=D(P_{\hatX_2|X,\hatx_1}\|Q_{\hatX_2|y,\hatx_1}|P_{X|y,\hatx_1}^*)+t_2^*\mathbb{E}_{P_{X|y,\hatx_1}^*\times P_{\hatX_2|X,\hatx_1}}[d_2(X,\hatX_2)]\label{def:psi}.
\end{align}

Using the definitions of $\beta_2(\cdot|\cdot)$ in \eqref{def:b2q}, $\beta(\cdot)$ in \eqref{def:b2} and $\psi(\cdot)$ in \eqref{def:psi}, and the log-sum inequality, we obtain that for any $(y,\hatx_1)$ such that $P_{Y\hatX_1}^*(y,\hatx_1)>0$,
\begin{align}
\nn&\inf_{P_{\hatX_2|X,\hatX_1}}\inf_{Q_{\hatX_2|Y,\hatX_1}}\psi(x,\hatx_1|P_{\hatX_2|X\hatX_1},Q_{\hatX_2|Y\hatX_1})\\
&=\inf_{P_{\hatX_2|X,\hatX_1}}D(P_{\hatX_2|X,\hatx_1}\|P_{\hatX_2|y,\hatx_1}|P_{Y|\hatx_1}^*)+t_2^*\mathbb{E}_{P_{X|y,\hatx_1}^*\times P_{\hatX_2|X,\hatx_1}}[d_2(X,\hatX_2)]\\
&=\mathbb{E}_{P_{X|y,\hatx_1}^*}[\log \beta_2(X,y,\hatx_1)]\label{optb2},
\end{align}
and 
\begin{align}
\inf_{P_{\hatX_2|X\hatX_1}}\psi(y,\hatx_1|P_{\hatX_2|X\hatX_1},Q_{\hatX_2|Y\hatX_1})=\mathbb{E}_{P_{X|y,\hatx_1}^*}[\log \beta_2(X,y,\hatx_1|Q_{\hatX_2|Y\hatX_1})]\label{suboptb2}.
\end{align}

From \eqref{optb2} and \eqref{suboptb2}, we obtain that
\begin{align}
\inf_{Q_{\hatX_2|Y,\hatX_1}}\mathbb{E}_{P_{X|y,\hatx_1}^*}[\log \beta_2(X,y,\hatx_1|Q_{\hatX_2|Y\hatX_1})]
&=\mathbb{E}_{P_{X|y,\hatx_1}^*}[\log \beta_2(X,y,\hatx_1)]\label{optqhatx2gyhatx14fy}.
\end{align}

Fix $\epsilon\in[0,1]$ and let $b\in\hat{\calX}_2$. Let $\barQ_{\hatX_2|y,\hatx_1}=(1-\epsilon)P_{\hatX_2|x,\hatx_1}^*(\hatx_2)+\epsilon1\{\hatx_2=b\}$. Using the definitions of $\beta_2(\cdot|\cdot)$ in \eqref{def:b2q} and $\beta_2(\cdot)$ in \eqref{def:b2}, we obtain that
\begin{align}
\frac{1}{\beta_2(x,y,\hatx_1|\barQ_{\hatX_2|Y\hatX_1})}
&=\frac{1-\epsilon}{\beta_2(x,y,\hatx_1)}+\epsilon \exp(-t_2^*d_2(x,b))\label{fyreuse}.
\end{align}
Thus,
\begin{align}
\nn&\frac{\partial \mathbb{E}_{P_{X|y,\hatx_1}^*}[\log \beta_2(X,y,\hatx_1|\barQ_{\hatX_2|Y\hatX_1})]}{\partial \epsilon}\Bigg|_{\epsilon=0}\\
&=\mathbb{E}_{P_{X|y,\hatx_1}^*}\Bigg[\beta_2(X,y,\hatx_1|\barQ_{\hatX_2|Y\hatX_1})\Big\{\frac{1-\epsilon}{\beta_2(X,y,\hatx_1)}+\epsilon \exp(-t_2^*d_2(x,b))\Big\}\Bigg]\Bigg|_{\epsilon=0}\\
&=1-\mathbb{E}_{P_{X|y,\hatx_1}^*}[\beta_2(X,y,\hatx_1)\exp(-t_2^*d_2(X,b))]\\
&\geq 0\label{nonnegativeopt14fy},
\end{align}
where \eqref{nonnegativeopt14fy} follows from \eqref{optqhatx2gyhatx14fy} which indicates that the derivative of the quantity at $\epsilon=0$ is nonnegative since the minimum is achieved by the test channel $P_{\hatX_2|Y\hatX_1}^*$. From the result in \eqref{optcond2} and the definitions in \eqref{def:b2}, \eqref{def:b}, we obtain that for $(y,\hatx_1)$ such that $P_{Y\hatX_1}^*(y,\hatx_1)>0$,
\begin{align}
P_{X|Y\hatX_1}^*(x|y,\hatx_1)
&=\frac{P_X(x)1\{y=g(x)\}P_{\hatX_1|X}^*(\hatx_1|x)}{\sum_x P_X(x)1\{y=g(x)\}P_{\hatX_1|X}^*(\hatx_1|x)}\\
&=\frac{P_X(x)1\{y=g(x)\}\beta(x,g(x))P^*_{\hatX_1}(\hatx_1)\exp\Bigg(-\frac{t_1^*d_1(x,\hatx_1)+\log\beta_2(x,g(x),\hatx_1)}{1+s^*}\Bigg)}{\sum_x  P_X(x)1\{y=g(x)\}\beta(x,g(x))P^*_{\hatX_1}(\hatx_1)\exp\Bigg(-\frac{t_1^*d_1(x,\hatx_1)+\log\beta_2(x,g(x),\hatx_1)}{1+s^*}\Bigg)}\\
&=\frac{P_X(x)1\{y=g(x)\}\beta(x,g(x))\exp\Bigg(-\frac{t_1^*d_1(x,\hatx_1)+\log\beta_2(x,g(x),\hatx_1)}{1+s^*}\Bigg)}{\sum_x P_X(x)1\{y=g(x)\}\beta(x,g(x))\exp\Bigg(-\frac{t_1^*d_1(x,\hatx_1)+\log\beta_2(x,g(x),\hatx_1)}{1+s^*}\Bigg)}.
\end{align}
Hence, from \eqref{nonnegativeopt14fy}, we obtain that
\begin{align}
1
&\geq \mathbb{E}_{P_{X|y,\hatx_1}^*}[\beta_2(X,y,\hatx_1)\exp(-t_2^*d_2(x,b))]\\
&=\sum_x \frac{P_X(x)1\{y=g(x)\}\beta(x,g(x))\exp\Bigg(-\frac{t_1^*d_1(x,\hatx_1)+\log\beta_2(x,g(x),\hatx_1)}{1+s^*}\Bigg)}{\sum_x P_X(x)1\{y=g(x)\}\beta(x,g(x))\exp\Bigg(-\frac{t_1^*d_1(x,\hatx_1)+\log\beta_2(x,g(x),\hatx_1)}{1+s^*}\Bigg)}\beta_2^*(x,y,\hatx_1)\exp(-t_2^*d_2(x,b)),
\end{align}
i.e.,
\begin{align}
\nn&\sum_xP_X(x)1\{y=g(x)\}\beta(x,g(x))\beta_2^{\frac{s^*}{1+s^*}}(x,g(x),\hatx_1)\exp\Bigg(-\frac{t_1}{1+s^*}d_1(x,\hatx_1)-t_2^*d_2(x,\hatx_2)\Bigg)\\
&\leq \sum_x P_X(x)1\{y=g(x)\}\beta(x,g(x))\exp\Bigg(-\frac{t_1^*d_1(x,\hatx_1)+\log\alpha_2(x,g(x),\hatx_1)}{1+s^*}\Bigg)\label{nowstep14fy}.
\end{align}
Fix an arbitrary distribution $Q_{\hatX_2|Y\hatX_1}$. Multiplying $Q_{\hatX_2|Y\hatX_1}(\hatx_2|y,\hatx_1)$ to both sides of \eqref{nowstep14fy}, summing over $(y,\hatx_2)$, and using the definitions of $w_1(\cdot)$ in \eqref{def:w1}, $w_2(\cdot)$ in \eqref{def:w2}, from \eqref{nu1typical}, we obtain that for any $\hatx_1$ such that $P_{\hatX_1}^*(\hatx_1)>0$ and arbitrary $Q_{\hatX_2|Y\hatX_1}$,
\begin{align}
w_2(\hatx_1,Q_{\hatX_2|Y\hatX_1})&\leq w_1(\hatx_1)\leq 1.
\end{align}

Finally, we will show that for that $w_2(\hatx_1,Q_{\hatX_2|Y\hatX_1})\leq w_1(\hatx_1)$ holds for arbitrary $(\hatx_1,Q_{\hatX_2|Y\hatX_1})$ satisfying $P_{\hatX_1}^*(\hatx_1)=0$. Recall \eqref{fyreuse} and the choice of $\bar{Q}_{\hatX_2|Y\hatX_1}$ above \eqref{fyreuse}. Thus, from \eqref{chooseopt} and \eqref{nonnegative2}, we have that for any $a\in\hat{\calX}_1$ satisfying $P_{\hatX_1}^*(a)=0$,
\begin{align}
\nn&\frac{\partial \mathbb{E}_{P_{X|y}}\Bigg[\beta(X,y)\beta_2^{-\frac{1}{1+s^*}}(X,y,a|\barQ_{\hatX_2|Y\hatX_1})\exp\Big(-\frac{t_1^*}{1+s^*}d_1(X,a)\Big)\Bigg]}{\partial \epsilon}\Bigg|_{\epsilon=0}\\
&=\mathbb{E}_{P_{X|y}}\Bigg[\beta(X,y)\beta_2^{\frac{s^*}{1+s^*}}(X,y,a|\barQ_{\hatX_2|Y\hatX_1})\exp\Big(-\frac{t_1^*}{1+s^*}d_1(X,a)\Big)\Big(\exp(-t_2^*d_2(X,b))-\frac{1}{\beta_2(x,y,a)}\Big)\Bigg]\Bigg|_{\epsilon=0}\\
\nn&=\mathbb{E}_{P_{X|y}}\Bigg[\beta(X,y)\beta_2^{\frac{s^*}{1+s^*}}(X,y,a)\exp\Big(-\frac{t_1^*}{1+s^*}d_1(X,a)-t_2^*d_2(X,b)\Big)\Bigg]\\
&\qquad-\mathbb{E}_{P_{X|y}}\Bigg[\beta(X,y)\beta_2^{-\frac{1}{1+s^*}}(X,y,a)\exp\Big(-\frac{t_1^*}{1+s^*}d_1(X,a)\Big)\Bigg]\label{usechooseopt}\\
&\leq 0\label{cauoptimal},
\end{align}
where \eqref{usechooseopt} follows from the fact that when $\epsilon=0$, $\barQ_{\hatX_2|Y\hatX_1}=P_{\hatX_2|Y\hatX_1}^*$; and \eqref{cauoptimal} follows since we chooses $P_{\hatX_2|Y\hatX_1}^*(\cdot|y,a)$ as in \eqref{chooseopt} when $P_{Y\hatX_1}^*(y,a)=0$.

Let $Q_{\hatX_2|Y\hatX_1}$ be arbitrary distribution. Multiplying $P_Y(y)Q_{\hatX_2|Y\hatX_1}(\hatx_2|y,a)$ on both sides of \eqref{cauoptimal}, summing over $(y,\hatx_2)$, and using the definitions in \eqref{def:w1}, \eqref{def:w2}, we obtain that
\begin{align}
w_2(a,Q_{\hatX_2|Y\hatX_1})\leq w_1(a)\leq 1,
\end{align}
for any $a\in\hat{\calX}_1$ satisfying $P_{\hatX_1}^*(a)=0$ and arbitrary $Q_{\hatX_2|Y\hatX_1}$. The proof of Lemma \ref{wleq1} is now complete.

\subsection{Proof of Lemma \ref{oneshotconverse4fy}}
\label{proofoneshot4fy}
Let us consider stochastic encoders and decoders. Let encoders $f_1,f_2$ be conditional distributions $P_{S_1|X}$ and $P_{S_2|X}$ respectively. Let the decoders $\phi_1,\phi_2,\phi_3$ be conditional distributions $P_{\hatX_1|S_1}$, $P_{\hatX_2|S_1S_2}$ and $P_{\hatY|S_2}$. Let $Q_{S_1S_2}$ be uniform over $\calM_1\times\calM_2$. Further, let $Q_{S_1}$ and $Q_{S_2}$ be induced by $Q_{S_1S_2}$. Then, we have
\begin{align}
\nn&P_{XYS_1S_2\hatX_1\hatX_2\hatY}(x,y,s_1,s_2,\hatx_1,\hatx_2,\haty)\\
&=P_X(x)1\{y=g(x)\}P_{S_1|X}(s_1|x)P_{S_2|X}(s_2|x)P_{\hatX_1|S_1}(\hatx_1|s_1)P_{\hatX_2|S_1S_2}(\hatx_2|s_1,s_2)P_{\hatY|S_2}(\haty|s_2)
\end{align}
In the following, whenever we use $\Pr$ and $\mathbb{E}$, we mean the probability calculated with respect to $P_{XYS_1S_2\hatX_1\hatX_2\hatY}$ unless otherwise stated. 

Then, we have
\begin{align}
\nn&\Pr((X,Y)\in(\calA_1\cup\calA_2\cup\calA_3))\\
\nn&=\Pr((X,Y)\in(\calA_1\cup\calA_2\cup\calA_3),~d_1(X,\hatX_1)>D_1\mathrm{~or~}d_2(X,\hatX_2)\leq D_2\mathrm{~or~}\hatY\neq Y)\\
&\qquad+\Pr((X,Y)\in(\calA_1\cup\calA_2\cup\calA_3),~d_1(X,\hatX_1)\leq D_1,~d_2(X,\hatX_2)\leq D_2,~\hatY=Y)\\
\nn&\leq \varepsilon_1^{\rm{FY}}(D_1,D_2)+\Pr((X,Y)\in\calA_1,~d_1(X,\hatX_1)\leq D_1)+\Pr((X,Y)\in\calA_2,~\hatY=Y)\\
&\qquad+\Pr((X,Y)\in\calA_3,~d_1(X,\hatX_1)\leq D_1,~d_2(X,\hatX_2)\leq D_2)\label{upponeshot4fy}.
\end{align}
From \cite[Theorem 1]{kostina2012converse}, we obtain that
\begin{align}
\Pr((X,Y)\in\calA_1,~d_1(X,\hatX_1)\leq D_1)\leq \exp(-\epsilon)\label{uppa14fy}.
\end{align}
The third term in \eqref{upponeshot4fy} can be upper bounded as follows:
\begin{align}
\nn&\Pr((X,Y)\in\calA_2,~\hatY=Y)\\
&=\sum_{\substack{(x,y,s_2,\haty):~\haty=y\\P_Y(y)\leq \frac{\exp(-\gamma)}{M_2}}}P_{Y}(y)P_{X|Y}(x|y)P_{S_2|X}(s_2|x)P_{\hatY|S_2}(\haty|s_2)\\
&\leq \exp(-\gamma)\sum_{x,y,s_2}\frac{1}{M_2}P_{X|Y}(x|y)P_{S_2|X}(s_2|x)P_{\hatY|S_2}(y|s_2)\\
&\leq \exp(-\gamma)\sum_{x,y,s_2}Q_{S_2}(s_2)P_{X|Y}(x|y)P_{\hatY|S_2}(y|s_2)\label{upa24fy}\\
&\leq \exp(-\gamma)\label{hatY=y},
\end{align}
where \eqref{upa24fy} follows since $P_{S_2|X}(s_2|x)\leq 1$; \eqref{hatY=y} follows since $\sum_yP_{\hatY|S_2}(y|s_2)=\sum_{\haty}P_{\hatY|S_2}(y|s_2)=1$.

In the following, we will show that 
\begin{align}
\Pr((X,Y)\in\calA_3,~d_1(X,\hatX_1)\leq D_1,~d_2(X,\hatX_2)\leq D_2)\leq 2\exp(-\gamma).
\end{align}

Let $Q_{\hatX_1}$ be induced by $Q_{S_1}$ and $P_{\hatX_1|S_1}$. In a similar manner as \cite[Theorem 1]{kostina2012converse}, using the defintion of $\jmath_1(\cdot)$ in \eqref{def:j14fy}, we obtain that
\begin{align}
\nn&\Pr\Big\{\jmath_1(X,Y,\hatX_1,D_1)\geq \log M_1+\gamma,~d_1(X,\hatX_1)\leq D_1\Big\}\\
&=\Pr\Big\{\exp\big\{1\{d_1(X,\hatX_1)\leq D_1\}\jmath_1(X,Y,\hatX_1,D_1)-\gamma\big\}\geq  M_1\Big\}\\
&\leq \frac{\mathbb{E}\Bigg[\exp\bigg\{1\{d_1(X,\hatX_1)\leq D_1\}\jmath_1(X,Y,\hatX_1,D_1)-\gamma\bigg\}\Bigg]}{M_1}\label{usemarkov4fy}\\
&\leq \frac{\exp(-\gamma)}{M_1}\mathbb{E}\Bigg[\exp\bigg(\jmath_1(X,Y,\hatX_1,D_1)+\frac{t_1^*(D_1-d_1(X,\hatX_1))}{1+s^*}\bigg)\Bigg]\\
&=\frac{\exp(-\gamma)}{M_1}\mathbb{E}\Bigg[\exp\bigg(\imath_1(X,Y,\hatX_1)-\frac{t_1^*d_1(X,\hatX_1)}{1+s^*}\bigg)\Bigg]\label{usedefj14fy}\\
&=\exp(-\gamma)\mathbb{E}_{P_{XY}}\Bigg[\sum_{s_1,\hatx_1}\frac{1}{M_1}P_{S_1|X}(s_1|X)P_{\hatX_1|S}(\hatx_1|s)\exp\bigg(\imath_1(X,Y,\hatx_1)-\frac{t_1^*d_1(X,\hatx_1)}{1+s^*}\bigg)\Bigg]\\
&\leq\exp(-\gamma) \mathbb{E}_{P_{XY}}\Bigg[\sum_{s_1,\hatx_1}Q_{S_1}(s_1)P_{\hatX_1|S_1}(\hatx_1|s_1)\exp\bigg(\imath_1(X,Y,\hatx_1)-\frac{t_1^*d_1(X,\hatx_1)}{1+s^*}\bigg)\Bigg]\label{ple14fy}\\
&=\exp(-\gamma)\mathbb{E}_{Q_{\hatX_1}}\bigg[\mathbb{E}_{P_{XY}}\Big[\exp(\imath_1(X,Y,\hatX_1)\Big]\bigg]\\
&\leq \exp(-\gamma)\label{usewle1},
\end{align}
where \eqref{usemarkov4fy} follows from Markov's inequality; \eqref{usedefj14fy} follows from \eqref{def:j14fy}; \eqref{ple14fy} follows since $P_{S_1|X}(s_1|x)\leq 1$ for all $(x,s_1)$; and \eqref{usewle1} follows from the second-inequality in \eqref{w2leqw1le1}.

Let $Q_{\hatX_1X_2}$ be induced by $Q_{S_1S_2}$, $P_{\hatX_1|S_1}$ and $P_{\hatX_2|S_2}$. Further, let $Q_{\hatX_2|\hatX_1}$ be induced by $Q_{\hatX_1\hatX_2}$ and let $Q_{\hatX_2|Y\hatX_1}$ be chosen such that for every $(y,\hatx_1,\hatx_2)$, we have $Q_{\hatX_2|Y\hatX_1}(\hatx_2|y,\hatx_1)=Q_{\hatX_2|\hatX_1}(\hatx_2|\hatx_2)$. In a similar manner as \eqref{usewle1}, using the definition of $\jmath_2(\cdot)$ in \eqref{def:j24fy}, we can show that
\begin{align}
\nn&\Pr(\jmath_2(X,Y,\hatX_1,D_1,D_2)\geq \log M_1M_2+\gamma,~d_1(X,\hatX_1)\leq D_1,~d_2(X,\hatX_2)\leq D_2)\\
&\leq \frac{\exp(-\gamma)}{M_1M_2}\mathbb{E}\Bigg[\exp\bigg(\imath_2(X,Y,\hatX_1)-\frac{s^*d_1(X,\hatX_1)}{1+s^*}+t_2^*d_2(X,\hatX_2)\bigg)\Bigg]\\
&\leq \exp(-\gamma)\mathbb{E}_{Q_{\hatX_1}}\Bigg[\mathbb{E}_{P_{XY}\times Q_{\hatX_2|Y\hatX_1}}\bigg[\exp\bigg(\imath_2(X,Y,\hatX_1)-\frac{s^*d_1(X,\hatX_1)}{1+s^*}+t_2^*d_2(X,\hatX_2)\bigg|\hatX_1\bigg]\Bigg]\\
&\leq \exp(-\gamma)\label{usew2}.
\end{align}
Using the results in \eqref{j=j1j24fy}, \eqref{usewle1} and \eqref{usew2}, we conclude that
\begin{align}
\nn&\Pr((X,Y)\in\calA_3,~d_1(X,\hatX_1)\leq D_1,~d_2(X,\hatX_2)\leq D_2)\\
&=\Pr(\jmath_{\rm{FY}}(X,Y|R_1,D_1,D_2,P_X)\geq \log M_1M_2+s^*\log M_1+(1+s^*)\gamma,~d_1(X,\hatX_1)\leq D_1,~d_2(X,\hatX_2)\leq D_2)\\
\nn&\leq \Pr(s^*\jmath_1(X,Y,\hatX_1,D_1)+\jmath_2(X,Y,\hatX_1,D_1,D_2)\geq \log M_1M_2+s^*\log M_1+(1+s^*)\gamma,\\
&\qquad\qquad\qquad~\mathrm{and}~d_1(X,\hatX_1)\leq D_1,~d_2(X,\hatX_2)\leq D_2)\label{upj14fy}\\
\nn&\leq \Pr\Big\{\jmath_1(X,Y,\hatX_1,D_1)\geq \log M_1+\gamma,~d_1(X,\hatX_1)\leq D_1\Big\}\\
&\qquad\qquad+ \Pr(\jmath_2(X,Y,\hatX_1,D_1,D_2)\geq \log M_1M_2+\gamma,~d_1(X,\hatX_1)\leq D_1,~d_2(X,\hatX_2)\leq D_2)\label{up2j14fy}\\
&\leq 2\exp(-\gamma)\label{upa34fy}.
\end{align}
where \eqref{upj14fy} follows since $\Pr(A-c\geq B)\leq \Pr(A\geq B)$ for any random variables $(A,B)$ and non-negative constant $c$; \eqref{up2j14fy} follows since that for any random variables $(A,B)$ and constants $(c,d)$, we have $\Pr(A+B\geq c+d)\leq \Pr(A\geq c)+\Pr(B\geq d)$.

The proof of Lemma \ref{oneshotconverse4fy} is complete by combining \eqref{upponeshot4fy}, \eqref{uppa14fy}, \eqref{upa24fy} and \eqref{upa34fy}.

\subsection{Proof of Lemma \ref{lemmataylor}}
\label{prooflemmataylor}

The following lemma plays a central role in the proof of Lemma \ref{lemmataylor}. We relate the defined distortion-tilted information density to the first derivative of the Kaspi rate-distortion function. Considering the fact that our correlated source may not be fully supported, we need the parametric notation above \eqref{def:calan} (See also \cite{watanabe2015second}). 

\begin{lemma}
\label{kaspifirstderive}
Suppose for all $Q_{XY}$ in the neighborhood of $P_{XY}$, $\mathrm{supp}(Q_{\hat{X}_1\hat{X}_2}^*)=\mathrm{supp}(P_{\hat{X}_1\hat{X}_2}^*)$. Then, for all $i\in[1:m]$,
\begin{align}
\frac{\partial R(\Theta(Q_{XY}),D_1,D_2)}{\partial \Theta_i(Q_{XY})}\Big|_{Q_{XY}=P_{XY}}&=\jmath(x_i,y_i|D_1,D_2,P_{XY})-\jmath(x_m,y_m|D_1,D_2,P_{XY}).
\end{align}
\end{lemma}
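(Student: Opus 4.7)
The plan is to start from the parametric representation \eqref{kaspipara} of the Kaspi rate-distortion function, namely
\begin{align}
R_{\rmK}(Q_{XY},D_1,D_2)=\mathbb{E}_{Q_{XY}}[\log\alpha_Q(X,Y)]-\lambda_{1,Q}^*D_1-\lambda_{2,Q}^*D_2,
\end{align}
where $\alpha_Q,\lambda_{1,Q}^*,\lambda_{2,Q}^*$ are defined as in \eqref{dualoptimal1}--\eqref{def:a4kaspi} but with $Q_{XY}$ replacing $P_{XY}$. Under the support-preservation hypothesis, the optimal test channels $(P_{\hatX_1|XY}^{*,Q},P_{\hatX_2|XY\hatX_1}^{*,Q})$ are locally smooth in $Q_{XY}$ near $P_{XY}$ (via the implicit function theorem applied to the KKT system \eqref{optcond14kaspi}--\eqref{optcond24kaspi}), and hence so are $\alpha_Q$ and $\lambda_{j,Q}^*$; this legitimizes chain-rule differentiation in $\Theta(Q_{XY})$.

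First I would parametrize $Q_{XY}$ through $(\Theta_1(Q_{XY}),\ldots,\Theta_{m-1}(Q_{XY}))$ using $\Theta_m=1-\sum_{j<m}\Theta_j$, and differentiate the parametric representation with respect to $\Theta_i$ for $i\in[1:m-1]$. The chain rule produces a direct contribution $\log\alpha(x_i,y_i)-\log\alpha(x_m,y_m)$ arising from the linear dependence of $\mathbb{E}_{Q_{XY}}[\log\alpha(X,Y)]$ on the probability vector, plus indirect contributions
\begin{align}
\sum_{j=1}^m P_{XY}(x_j,y_j)\frac{\partial \log\alpha_Q(x_j,y_j)}{\partial \Theta_i}\bigg|_{Q=P}-D_1\frac{\partial \lambda_{1,Q}^*}{\partial \Theta_i}\bigg|_{Q=P}-D_2\frac{\partial \lambda_{2,Q}^*}{\partial \Theta_i}\bigg|_{Q=P}
\end{align}
coming from the $Q$-dependence of $\alpha_Q$ and $\lambda_{j,Q}^*$. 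Adding and subtracting $\lambda_1^*D_1+\lambda_2^*D_2$ to the direct term and invoking \eqref{def:kaspitilt}, that direct part already equals $\jmath_{\rmK}(x_i,y_i|D_1,D_2,P_{XY})-\jmath_{\rmK}(x_m,y_m|D_1,D_2,P_{XY})$, so it remains to show that the indirect contribution vanishes.

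The cancellation of the indirect terms is an envelope-theorem phenomenon and is the crux of the proof. The pair $(\lambda_1^*,\lambda_2^*)$ is a saddle point of the Lagrangian for the convex program defining $R_{\rmK}(P_{XY},D_1,D_2)$, and $\alpha$ is obtained by optimizing over the test-channel marginals $(P_{\hatX_1}^*,P_{\hatX_2|Y\hatX_1}^*)$ through \eqref{def:a2q4kaspi}--\eqref{def:a4kaspi}. I would make the cancellation rigorous either by (i) differentiating, with respect to $\Theta_i$, the normalization identity $\nu_1(\hatx_1)=1$ for $\hatx_1\in\mathrm{supp}(P_{\hatX_1}^*)$ (guaranteed by \eqref{nu2typical}) together with the active distortion equalities $\mathbb{E}[d_j(X,\hatX_j)]=D_j$, and combining the resulting linear relations; or (ii) invoking the dual representation
\begin{align}
R_{\rmK}(Q_{XY},D_1,D_2)=\sup_{\lambda_1,\lambda_2\geq 0}\inf_{P_{\hatX_1|XY},P_{\hatX_2|XY\hatX_1}}\Bigl\{I(XY;\hatX_1)+I(X;\hatX_2|Y\hatX_1)+\lambda_1(\mathbb{E}[d_1]-D_1)+\lambda_2(\mathbb{E}[d_2]-D_2)\Bigr\}
\end{align}
and applying a standard envelope theorem at the saddle point, mirroring \cite[Theorem 2.2]{kostina2013lossy} for the unconditional lossy source coding setting.

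The hard part will be executing this envelope cancellation cleanly without invoking the implicit function theorem as a black box. The support-preservation hypothesis on $Q_{\hatX_1\hatX_2}^*$ is essential here: it guarantees that the active-constraint set of the KKT system does not change as $Q_{XY}$ is perturbed, so that all the optimizers depend smoothly (rather than only semicontinuously) on the source distribution. Without this hypothesis, $\alpha_Q$ could fail to be differentiable due to abrupt changes in the support of the optimal reproduction distributions, and the identity in the lemma would only hold in a one-sided (directional) sense.
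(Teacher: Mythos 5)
Your proposal follows essentially the same route as the paper's Appendix~G proof: both start from the parametric representation $R_{\rmK}(\Theta(Q_{XY}),D_1,D_2)=\sum_j \Theta_j(Q_{XY})\,\jmath_{\rmK}(x_j,y_j|D_1,D_2,\Theta(Q_{XY}))$ (which, via \eqref{def:kaspitilt}, is exactly your $\mathbb{E}_{Q_{XY}}[\log\alpha_Q]-\lambda_{1,Q}^*D_1-\lambda_{2,Q}^*D_2$), apply the chain rule in the free coordinates $\Theta_1,\ldots,\Theta_{m-1}$ with $\Theta_m=1-\sum_{j<m}\Theta_j$ to extract the direct term $\jmath_{\rmK}(x_i,y_i|\cdot)-\jmath_{\rmK}(x_m,y_m|\cdot)$, and then argue that the indirect term $\sum_j\Theta_j(P_{XY})\,\partial_{\Theta_i}\jmath_{\rmK}(x_j,y_j|D_1,D_2,\Theta(Q_{XY}))|_{Q=P}$ vanishes. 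The paper makes the vanishing concrete by expanding $\jmath_{\rmK}$ through the optimal test channels via \eqref{expandj4kaspi}/\eqref{jmathq2} and differentiating term by term, where normalization identities such as $\sum_{\hatx_1}Q^*_{\hatX_1|XY}(\hatx_1|x,y)=1$, $\sum_{x,y}\Theta(Q_{XY})(x,y)Q^*_{\hatX_1|XY}(\hatx_1|x,y)=Q^*_{\hatX_1}(\hatx_1)$, and the active distortion equalities kill each contribution; this is precisely the envelope/normalization mechanism you describe as option (i). Your option (ii), passing through the saddle-point dual and invoking a standard envelope theorem, is a higher-level version of the same computation and would also work given the smoothness granted by the support-preservation hypothesis, which you correctly identify as the thing that keeps the active constraint set fixed and legitimizes differentiating the optimizers. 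The only place you are more sketchy than the paper is in executing the cancellation; the paper's explicit term-by-term expansion is the recommended way to discharge that step cleanly.
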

The proof of Lemma \ref{kaspifirstderive} is given in Appendix \ref{proofkaspifirst}.

Note that for $(x^n,y^n)$ such that $\hat{T}_{x^ny^n}\in\calA_n(P_{XY})$, invoking Lemma \ref{kaspifirstderive} and applying Taylor's expansion, we obtain that
\begin{align}
&\nn R(\hat{T}_{x^ny^n},D_1,D_2)\\
\nn&=R_{\rmK}(P_{XY},D_1,D_2)+\sum_{i=1}^m\big(\Theta_i(\hat{T}_{x^ny^n})-\Theta_i(P_{XY})\big)\big(\jmath(x_i,y_i|D_1,D_2,P_{XY})-\jmath(x_m,y_m|D_1,D_2,P_{XY})\big)\\
&\qquad+O(\|\hat{T}_{x^ny^n}-P_{XY}\|^2)\\
\nn&=R_{\rmK}(P_{XY},D_1,D_2)+\sum_{x,y}\big(\hat{T}_{x^ny^n}(x,y)-P_{XY}(x,y)\big)\jmath(x,y|D_1,D_2,P_{XY})+O(\|\hat{T}_{x^ny^n}-P_{XY}\|^2)\\
&=\sum_{x,y} \hat{T}_{x^ny^n}(x,y)\jmath(x,y|D_1,D_2,P_{XY})+O\Big(\frac{\log n}{n}\Big)\\
&=\frac{1}{n}\sum_{i=1}^n \jmath(x_i,y_i|D_1,D_2,P_{XY})+O\Big(\frac{\log n}{n}\Big).
\end{align}

\subsection{Proof of Lemma \ref{kaspifirstderive}}
\label{proofkaspifirst}

From the assumption in Lemma \ref{kaspifirstderive}, we conclude that $\Theta(Q_{XY})$ is supported on $\calS=\mathrm{supp}(P_{XY})$. Let $(Q_{\hatX_1|XY}^*,Q_{\hatX_2|XY\hatX_1}^*)$ be an optimal test channel achieving $R(Q_{XY},D_1,D_2)$. Let $Q_{\hatX_1}^*,Q_{\hatX_2|Y\hatX_1}^*$ be induced by $Q_{XY}$ and $Q_{\hat{X}_1\hat{X}_2|XY}^*$.

Invoking Lemma \ref{opttest4kaspi} and \eqref{def:kaspitilt}, we obtain
\begin{align}
R_{\rmK}(\Theta(Q_{XY}),D_1,D_2)
&=\mathbb{E}_{Q_{XY}}\Big[\jmath_{\rmK}(X,Y|D_1,D_2,\Theta(Q_{XY}))\Big]\\
&=\sum_{i=1}^m \Theta_i(Q_{XY})\jmath_{\rmK}(x_i,y_i|D_1,D_2,\Theta(Q_{XY}))\label{thetaq},
\end{align}
and for every $(\hat{x}_1,\hat{x}_2)$ such that $Q_{\hat{X}_1}^*(\hat{x}_1)Q_{\hat{X}_2|Y\hat{X}_1}^*(\hat{x}_2|y,\hat{x}_1)>0$, we have
\begin{align}
\jmath_{\rmK}(x,y|D_1,D_2,\Theta(Q_{XY}))\nn&=\log \frac{Q_{\hat{X}_1|XY}^*(\hat{x}_1|x,y)}{Q_{\hat{X}_1}^*(\hat{x}_1)}+\lambda_{1,Q}^*(d_1(x,\hat{x}_1)-D_1)\\
&\qquad+\log \frac{Q_{\hat{X}_2|XY\hat{X}_1}^*(\hat{x}_2|x,y,\hat{x}_1)}{Q_{\hat{X}_2|Y\hat{X}_1}^*(\hat{x}_2|y,\hat{x}_1)}+\lambda_{2,Q}^*(d_2(x,\hat{x}_2)-D_2),\label{jmathqs}
\end{align}
where $\lambda_{1,Q}^*$ and $\lambda_{2,Q}^*$ are defined similarly as $\lambda_1^*$ in \eqref{dualoptimal1} and $\lambda_2^*$ in \eqref{dualoptimal2}.

Hence, invoking \eqref{thetaq}, we obtain
\begin{align}
\nn&\frac{\partial R_{\rmK}(\Theta(Q_{XY}),D_1,D_2)}{\partial \Theta_i(Q_{XY})}\Big|_{Q_{XY}=P_{XY}}\\
&=\sum_{j=1}^m \frac{\Theta_j(Q_{XY})}{\partial \Theta_i(Q_{XY})}\jmath_{\rmK}(x_j,y_j|D_1,D_2,\Theta(P_{XY}))+\sum_{j=1}^m \Theta_j(P_{XY})\frac{\jmath_{\rmK}(x_j,y_j|D_1,D_2,\Theta(Q_{XY}))}{\partial \Theta_i(Q_{XY})}\\
&=\jmath_{\rmK}(x_i,y_i|D_1,D_2,P_{XY})-\jmath_{\rmK}(x_m,y_m|D_1,D_2,P_{XY})+\sum_{j=1}^m \Theta_j(P_{XY})\frac{\jmath_{\rmK}(x_j,y_j|D_1,D_2,\Theta(Q_{XY}))}{\partial \Theta_i(Q_{XY})}\Bigg|_{Q_{XY}=P_{XY}}
\label{derive2},
\end{align}
where \eqref{derive2} holds since $\Theta_m(P_{XY})=1-\sum_j^{m-1} \Theta_j(P_{XY})$. Then we focus on the second term in \eqref{derive2}. Invoking \eqref{jmathqs}, we obtain
\begin{align}
\nn&\jmath_{\rmK}(x,y|D_1,D_2,\Theta(Q_{XY}))\\
\nn&=\sum_{\hat{x}_1:Q_{\hat{X}_1}^*(\hat{x}_1)>0}Q_{\hat{X}_1|XY}^*(\hat{x}_1|x,y)\Bigg\{\Bigg(\log \frac{Q_{\hat{X}_1|XY}^*(\hat{x}_1|x,y)}{Q_{\hat{X}_1}^*(\hat{x}_1)}+\lambda_{1,Q}^*(d_1(x,\hat{x}_1)-D_1)\Bigg)\\
&\qquad+\sum_{\hat{x}_2:Q_{\hat{X}_2|Y\hat{X}_1}(\hat{x}_2|y,\hat{x}_1)>0}Q_{\hat{X}_2|XY\hat{X}_1}^*(\hat{x}_2|x,y,\hat{x}_1)\Bigg(\log \frac{Q_{\hat{X}_2|XY\hat{X}_1}^*(\hat{x}_2|x,y,\hat{x}_1)}{Q_{\hat{X}_2|Y\hat{X}_1}^*(\hat{x}_2|y,\hat{x}_1)}+\lambda_{2,Q}^*(d_2(x,\hat{x}_2)-D_2)\Bigg)\Bigg\}\label{jmathq2}.
\end{align}

Invoking \eqref{jmathq2}, in a similar manner as \cite[Lemma 3]{watanabe2015second} and \cite[Lemma 3]{zhou2015second}, we obtain that
\begin{align}
\nn&\sum_{j=1}^m \Theta_j(P_{XY})\frac{\jmath_{\rmK}(x_j,y_j|D_1,D_2,\Theta(Q_{XY}))}{\partial \Theta_i(Q_{XY})}\Bigg|_{Q_{XY}=P_{XY}}\\
\nn&=\Bigg\{\sum_{j=1}^m \Theta_j(P_{XY})\sum_{\hat{x}_1:Q_{\hat{X}_1}^*(\hat{x}_1)>0}Q_{\hat{X}_1|XY}^*(\hat{x}_1|x_j,y_j)\Bigg\{\Bigg( \frac{\partial }{\partial \Theta_i(Q_{XY})}\Bigg(\log \frac{Q_{\hat{X}_1|XY}^*(\hat{x}_1|x_j,y_j)}{Q_{\hat{X}_1}^*(\hat{x}_1)}\Bigg)\\
&\nn\qquad+\frac{\partial }{\partial \Theta_i(Q_{XY})}\Big(\lambda_{1,Q}^*\big(d_1(x,\hat{x}_1)-D_1\big)\Big)\Bigg)\\
&\nn\qquad+\sum_{\hat{x}_2:Q_{\hat{X}_2|Y\hat{X}_1}(\hat{x}_2|y,\hat{x}_1)>0}Q_{\hat{X}_2|XY\hat{X}_1}^*(\hat{x}_2|x_j,y_j,\hat{x}_1)\Bigg(\frac{\partial }{\partial \Theta_i(Q_{XY})}\Bigg(\log \frac{Q_{\hat{X}_2|XY\hat{X}_1}^*(\hat{x}_2|x_j,y_j,\hat{x}_1)}{Q_{\hat{X}_2|Y\hat{X}_1}^*(\hat{x}_2|y_j,\hat{x}_1)}\Bigg)\\
&\qquad+\frac{\partial \lambda_{2,Q}^*}{\partial \Theta_i(Q_{XY})}\bigg(d_2(x,\hat{x}_2)-D_2\bigg)\Bigg)\Bigg\}\Bigg\}\Bigg|_{Q_{XY}=P_{XY}}\\
&=0\label{finalderive}.
\end{align}

\bibliographystyle{IEEEtran}
\bibliography{IEEEfull_lin}

\begin{thebibliography}{10}
\providecommand{\url}[1]{#1}
\csname url@samestyle\endcsname
\providecommand{\newblock}{\relax}
\providecommand{\bibinfo}[2]{#2}
\providecommand{\BIBentrySTDinterwordspacing}{\spaceskip=0pt\relax}
\providecommand{\BIBentryALTinterwordstretchfactor}{4}
\providecommand{\BIBentryALTinterwordspacing}{\spaceskip=\fontdimen2\font plus
\BIBentryALTinterwordstretchfactor\fontdimen3\font minus
  \fontdimen4\font\relax}
\providecommand{\BIBforeignlanguage}[2]{{%
\expandafter\ifx\csname l@#1\endcsname\relax
\typeout{** WARNING: IEEEtran.bst: No hyphenation pattern has been}%
\typeout{** loaded for the language `#1'. Using the pattern for}%
\typeout{** the default language instead.}%
\else
\language=\csname l@#1\endcsname
\fi
#2}}
\providecommand{\BIBdecl}{\relax}
\BIBdecl

\bibitem{zhou2017fy}
L.~Zhou and M.~Motani, ``On the multiple description coding with one
  semi-deterministic distortion measure,'' in \emph{IEEE Globecom}, 2017.

\bibitem{zhou2017kaspishort}
------, ``{Kaspi} problem revisited: Non-asymptotic converse bound and
  second-order asymptotics,'' in \emph{IEEE Globecom}, 2017.

\bibitem{kaspi1994}
A.~Kaspi, ``Rate-distortion function when side-information may be present at
  the decoder,'' \emph{IEEE Trans. Inf. Theory}, vol.~40, no.~6, pp.
  2031--2034, 1994.

\bibitem{fu2002rate}
F.-W. Fu and R.~W. Yeung, ``On the rate-distortion region for multiple
  descriptions,'' \emph{IEEE Trans. Inf. Theory}, vol.~48, no.~7, pp.
  2012--2021, 2002.

\bibitem{gamal1982achievable}
A.~E. Gamal and T.~M. Cover, ``Achievable rates for multiple descriptions,''
  \emph{IEEE Trans. Inf. Theory}, vol.~28, no.~6, pp. 851--857, 1982.

\bibitem{rimoldi1994}
B.~Rimoldi, ``Successive refinement of information: characterization of the
  achievable rates,'' \emph{IEEE Trans. Inf. Theory}, vol.~40, no.~1, pp.
  253--259, 1994.

\bibitem{csiszar1974}
I.~Csisz\'ar, ``On an extremum problem of information theory,'' \emph{Studia
  Scientiarum Mathematicarum Hungarica}, vol.~9, no.~1, pp. 57--72, 1974.

\bibitem{berger1971rate}
T.~Berger, \emph{Rate-Distortion Theory}.\hskip 1em plus 0.5em minus
  0.4em\relax Wiley Online Library, 1971.

\bibitem{shannon1959coding}
C.~E. Shannon, ``Coding theorems for a discrete source with a fidelity
  criterion,'' \emph{IRE Nat. Conv. Rec}, vol.~7, no.~4, pp. 142--163, 1959.

\bibitem{tuncel2003comp}
E.~Tuncel and K.~Rose, ``Computation and analysis of the n-layer scalable
  rate-distortion function,'' \emph{IEEE Trans. Inf. Theory}, vol.~49, no.~5,
  pp. 1218--1230, 2003.

\bibitem{kostina2017suc}
V.~Kostina and E.~Tuncel, ``The rate-distortion function for successive
  refinement of abstract sources,'' in \emph{IEEE ISIT}, 2017.

\bibitem{kostina2012converse}
V.~Kostina and S.~Verd{\'u}, ``A new converse in rate-distortion theory,'' in
  \emph{CISS}, 2012, pp. 1--6.

\bibitem{kostina2012fixed}
------, ``Fixed-length lossy compression in the finite blocklength regime,''
  \emph{IEEE Trans. Inf. Theory}, vol.~58, no.~6, pp. 3309--3338, 2012.

\bibitem{watanabe2015}
S.~Watanabe, S.~Kuzuoka, and V.~Y.~F. Tan, ``Nonasymptotic and second-order
  achievability bounds for coding with side-information,'' \emph{IEEE Trans.
  Inf. Theory}, vol.~61, no.~4, pp. 1574--1605, 2015.

\bibitem{wyner1975}
A.~D. Wyner, ``On source coding with side information at the decoder,''
  \emph{IEEE Trans. Inf. Theory}, vol.~21, no.~3, pp. 294--300, 1975.

\bibitem{ahlswede1975}
R.~Ahlswede and J.~Korner, ``Source coding with side information and a converse
  for degraded broadcast channels,'' \emph{IEEE Trans. Inf. Theory}, vol.~21,
  no.~6, pp. 629--637, 1975.

\bibitem{wyner1976rate}
A.~D. Wyner and J.~Ziv, ``The rate-distortion function for source coding with
  side information at the decoder,'' \emph{IEEE Trans. Inf. Theory}, vol.~22,
  no.~1, pp. 1--10, 1976.

\bibitem{Gelfand1980}
S.~Gelfand and M.~Pinsker, ``Coding for channels with random parameters,''
  \emph{Prob. Contr. \& Info. Theory}, vol.~9, no.~1, pp. 19--31, 1980.

\bibitem{yassaee2013technique}
M.~H. Yassaee, M.~R. Aref, and A.~Gohari, ``A technique for deriving one-shot
  achievability results in network information theory,'' in \emph{IEEE ISIT},
  2013, pp. 1287--1291.

\bibitem{heegard1985}
C.~Heegard and T.~Berger, ``Rate distortion when side information may be
  absent,'' \emph{IEEE Trans. Inf. Theory}, vol.~31, no.~6, pp. 727--734, 1985.

\bibitem{perron2005kaspi}
E.~Perron, S.~Diggavi, and E.~Telatar, ``The {Kaspi} rate-distortion problem
  with encoder side-information: {Gaussian} case,'' Tech. Rep., 2005.

\bibitem{perron2006kaspi}
------, ``The {Kaspi} rate-distortion problem with encoder side-information:
  Binary erasure case,'' Tech. Rep., 2006.

\bibitem{perron2009}
------, ``Lossy source coding with {Gaussian} or erased side-information,'' in
  \emph{IEEE ISIT}, 2009, pp. 1035--1039.

\bibitem{Tian2006}
C.~Tian and S.~Diggavi, ``A calculation of the {Heegard-Berger} rate-distortion
  function for a binary source,'' in \emph{IEEE ITW}, 2006, pp. 342--346.

\bibitem{wolf1980source}
J.~K. Wolf, A.~D. Wyner, and J.~Ziv, ``Source coding for multiple
  descriptions,'' \emph{Bell System Technical Journal}, vol.~59, no.~8, pp.
  1417--1426, 1980.

\bibitem{zhang1987new}
Z.~Zhang and T.~Berger, ``New results in binary multiple descriptions,''
  \emph{IEEE Trans. Inf. Theory}, vol.~33, no.~4, pp. 502--521, 1987.

\bibitem{ozarow1980source}
L.~Ozarow, ``On a source-coding problem with two channels and three
  receivers,'' \emph{Bell System Technical Journal, The}, vol.~59, no.~10, pp.
  1909--1921, 1980.

\bibitem{Kramer2003}
R.~Venkataramani, G.~Kramer, and V.~K. Goyal, ``Multiple description coding
  with many channels,'' \emph{IEEE Trans. Inf. Theory}, vol.~49, no.~9, pp.
  2106--2114, 2003.

\bibitem{pradhan2007}
S.~S. Pradhan, ``On the role of feedforward in {Gaussian} sources:
  Point-to-point source coding and multiple description source coding,''
  \emph{IEEE Trans. Inf. Theory}, vol.~53, no.~1, pp. 331--349, 2007.

\bibitem{pradhan2008}
R.~Venkataramanan and S.~S. Pradhan, ``Multiple descriptions with feed-forward:
  A single-letter achievable rate region,'' in \emph{IEEE ISIT}, 2008, pp.
  692--696.

\bibitem{ahlswede1986multiple}
R.~Ahlswede, ``On multiple descriptions and team guessing,'' \emph{IEEE Trans.
  Inf. Theory}, vol.~32, no.~4, pp. 543--549, 1986.

\bibitem{zamir1999}
R.~Zamir, ``Gaussian codes and shannon bounds for multiple descriptions,''
  \emph{IEEE Trans. Inf. Theory}, vol.~45, no.~7, pp. 2629--2636, 1999.

\bibitem{witsenhausen1980}
H.~S. Witsenhausen, ``B.s.t.j. brief: On source networks with minimal breakdown
  degradation,'' \emph{The Bell System Technical Journal}, vol.~59, no.~6, pp.
  1083--1087, 1980.

\bibitem{ingber2011dispersion}
A.~Ingber and Y.~Kochman, ``The dispersion of lossy source coding,'' in
  \emph{IEEE DCC}, 2011, pp. 53--62.

\bibitem{no2016}
A.~No, A.~Ingber, and T.~Weissman, ``Strong successive refinability and
  rate-distortion-complexity tradeoff,'' \emph{IEEE Trans. Inf. Theory},
  vol.~62, no.~6, pp. 3618--3635, 2016.

\bibitem{Marton74}
K.~Marton, ``Error exponent for source coding with a fidelity criterion,''
  \emph{IEEE Trans. Inf. Theory}, vol.~20, no.~2, pp. 197--199, 1974.

\bibitem{ihara2000error}
S.~Ihara and M.~Kubo, ``Error exponent for coding of memoryless {Gaussian}
  sources with a fidelity criterion,'' \emph{IEICE Trans. Fundamentals},
  vol.~83, no.~10, pp. 1891--1897, 2000.

\bibitem{altugwagner2014}
Y.~Altu\u{g} and A.~B. Wagner, ``Moderate deviations in channel coding,''
  \emph{IEEE Trans. Inf. Theory}, vol.~60, no.~8, pp. 4417--4426, 2014.

\bibitem{polyanskiy2010channel}
Y.~Polyanskiy and S.~Verd\'u, ``Channel dispersion and moderate deviations
  limits for memoryless channels,'' in \emph{Proc. 48th Annu. Allerton Conf.},
  2010, pp. 1334--1339.

\bibitem{le2014second}
S.-Q. Le, V.~Y.~F. Tan, and M.~Motani, ``Second-order coding rates for
  conditional rate-distortion,'' \emph{arXiv:1410.2687}, 2014.

\bibitem{zhousuc2016}
L.~Zhou, V.~Y.~F. Tan, and M.~Motani, ``Second-order coding region for the
  discrete successive refinement source coding problem,'' in \emph{IEEE ISIT},
  2016, pp. 2414--2418.

\bibitem{zhou2016second}
------, ``Second-order and moderate deviations asymptotics for successive
  refinement,'' \emph{IEEE Trans. Inf. Theory}, vol.~63, no.~5, pp. 2896--2921,
  2017.

\bibitem{Tanbook}
V.~Y.~F. Tan, ``Asymptotic estimates in information theory with non-vanishing
  error probabilities,'' \emph{{Foundations and Trends$\,$\textregistered $ $
  in Communications and Information Theory}}, vol.~11, no. 1--2, pp. 1--184,
  2014.

\bibitem{watanabe2015second}
S.~Watanabe, ``Second-order region for {Gray-Wyner} network,'' \emph{IEEE
  Trans. Inf. Theory}, vol.~63, no.~2, pp. 1006--1018, 2017.

\bibitem{hayashi2008}
M.~Hayashi, ``Second-order asymptotics in fixed-length source coding and
  intrinsic randomness,'' \emph{IEEE Trans. Inf. Theory}, vol.~54, no.~10, pp.
  4619--4637, 2008.

\bibitem{hayashi2009information}
------, ``Information spectrum approach to second-order coding rate in channel
  coding,'' \emph{IEEE Trans. Inf. Theory}, vol.~55, no.~11, pp. 4947--4966,
  2009.

\bibitem{polyanskiy2010thesis}
Y.~Polyanskiy, ``Channel coding: Non-asymptotic fundamental limits,'' Ph.D.
  dissertation, Department of Electrical Engineering, Princeton University,
  2010.

\bibitem{kostina2012}
V.~Kostina and S.~Verd\'u, ``Fixed-length lossy compression in the finite
  blocklength regime,'' \emph{IEEE Trans. Inf. Theory}, vol.~58, no.~6, pp.
  3309--3338, 2012.

\bibitem{zhou2015second}
L.~Zhou, V.~Y.~F. Tan, and M.~Motani, ``Discrete lossy {Gray-Wyner} revisited:
  Second-order asymptotics, large and moderate deviations,'' \emph{IEEE Trans.
  Inf. Theory}, vol.~63, no.~3, pp. 1766--1791, 2017.

\bibitem{dembo2009large}
A.~Dembo and O.~Zeitouni, \emph{Large Deviations Techniques and
  Applications}.\hskip 1em plus 0.5em minus 0.4em\relax Springer Science \&
  Business Media, 2009, vol.~38.

\bibitem{tan2012moderate}
V.~Y.~F. Tan, ``Moderate-deviations of lossy source coding for discrete and
  {Gaussian} sources,'' in \emph{IEEE ISIT}, 2012, pp. 920--924.

\bibitem{borade2008}
S.~Borade and L.~Zheng, ``Euclidean information theory,'' in \emph{IEEE IZS},
  2008, pp. 14--17.

\bibitem{el2011network}
A.~El~Gamal and Y.-H. Kim, \emph{Network Information Theory}.\hskip 1em plus
  0.5em minus 0.4em\relax Cambridge University Press, 2011.

\bibitem{cover2012elements}
T.~M. Cover and J.~A. Thomas, \emph{Elements of information theory}.\hskip 1em
  plus 0.5em minus 0.4em\relax John Wiley \& Sons, 2012.

\bibitem{donsker1975asymptotic}
M.~D. Donsker and S.~S. Varadhan, ``Asymptotic evaluation of certain {Markov}
  process expectations for large time, i,'' \emph{Communications on Pure and
  Applied Mathematics}, vol.~28, no.~1, pp. 1--47, 1975.

\bibitem{chowberger}
J.~Chow and T.~Berger, ``Failure of successive refinement for symmetric
  {Gaussian} mixtures,'' \emph{IEEE Trans. Inf. Theory}, vol.~43, no.~1, pp.
  350--352, 1997.

\bibitem{strassen1962asymptotische}
V.~Strassen, ``Asymptotische absch{\"a}tzungen in shannons
  informationstheorie,'' in \emph{Trans. Third Prague Conf. Information
  Theory}, 1962, pp. 689--723.

\bibitem{kanlis1996error}
A.~Kanlis and P.~Narayan, ``Error exponents for successive refinement by
  partitioning,'' \emph{IEEE Trans. Inf. Theory}, vol.~42, no.~1, pp. 275--282,
  1996.

\bibitem{tan2014state}
M.~Tomamichel and V.~Y.~F. Tan, ``Second-order coding rates for channels with
  state,'' \emph{IEEE Trans. Inf. Theory}, vol.~60, no.~8, pp. 4427--4448,
  2014.

\bibitem{kostina2013lossy}
V.~Kostina, ``Lossy data compression: Non-asymptotic fundamental limits,''
  Ph.D. dissertation, Department of Electrical Engineering, Princeton
  University, 2013.

\bibitem{wei2009strong}
W.~Gu and M.~Effros, ``A strong converse for a collection of network source
  coding problems,'' in \emph{IEEE ISIT}, 2009, pp. 2316--2320.

\bibitem{gray1974source}
R.~Gray and A.~Wyner, ``Source coding for a simple network,'' \emph{Bell System
  Technical Journal}, vol.~53, no.~9, pp. 1681--1721, 1974.

\end{thebibliography}
 
\end{document}